\newtheorem{assumption}{Assumption}
\newtheorem{proposition}{Proposition}
\newtheorem{remark}{Remark}
\newtheorem{lemma}{Lemma}
\newtheorem{dfn}{Definition}
\newtheorem{theorem}[lemma]{Theorem}
\newtheorem{corollary}[lemma]{Corollary}
  \renewcommand{\theequation}{%
  \thesection.\arabic{equation}}
\newcommand{\argmin}{\mathop{\rm arg~min}\limits}
\newcommand\eqd{\overset{d}{=}}
\title{A Uniform Confidence Band for the Marginal Treatment Effect Function\footnote{We received valuable comments from Michal Kolesár (Editor), an associate editor, two anonymous reviewers, the participants at the Kansai Econometric Workshop, Tsinghua, AMES2024V, LMU-Todai workshop, PCIC2025, and SETA2025. We acknowledge financial support from the Japan Society for the Promotion of Science under KAKENHI Grant Nos. 22K20154, 23H00804, 23K25501 and 23KJ0713. All remaining errors are ours.}}
\author{Toshiki Tsuda,\thanks{Department of Economics, Yale University. 28 Hillhouse Ave. New Haven, CT 06511. U.S.A. Email: \href{toshiki.tsuda@yale.edu}{toshiki.tsuda@yale.edu}}\ \  Yanchun Jin,\thanks{Independent researcher, \href{chnjyc@gmail.com}{chnjyc@gmail.com}}\ \ and Ryo Okui\thanks{Graduate School of Economics and Faculty of Economics, University of Tokyo, 7-3-1 Hongo, Bunkyou-ku, Tokyo, 113-0033, Japan. Email:\href{mailto:okuiryo@e.u-tokyo.ac.jp}{okuiryo@e.u-tokyo.ac.jp}}}
\date{\today}
\begin{document}
	
	\maketitle
	
	\begin{abstract}

This paper presents a method for constructing uniform confidence bands for the marginal treatment effect (MTE) function. The shape of the MTE function provides insight into how the unobserved propensity to receive treatment relates to the treatment effect. Our approach visualizes the statistical uncertainty of an estimated function, facilitating inferences about the function's shape. The proposed method is computationally inexpensive and requires only minimal information: sample size, standard errors, kernel function, and bandwidth. 
We derive a Gaussian approximation for a local quadratic estimator and consider the approximation of the distribution of its supremum in polynomial order. Monte Carlo simulations demonstrate that our bands provide the desired coverage and are less conservative than those based on the Gumbel approximation. An empirical application based on the rural electrification program is included.

		\bigskip
		
		\noindent \textit{Keywords}: uniform confidence band; marginal treatment effect; instrumental variables, empirical process; Gaussian approximation.
		
		\bigskip 
		
		\noindent \textit{JEL Classification}: C12, C14, C21, C26.
		
	\end{abstract}

\begin{refsection}

\section{Introduction}
 The marginal treatment effect (MTE) provides information about how the likelihood of receiving treatment relates to treatment effects \parencite{bjorklund1987estimation, heckman1999local, heckman2005structural}. We consider settings in which unobserved factors influencing selection into treatment are correlated with the potential outcomes and instrumental variables (IVs) are available. IVs affect the selection into treatment without directly affecting the outcomes. MTE represents the benefits to individuals on the cusp of deciding whether or not to participate in treatment. 
 The shape of the MTE function provides insight into how the unobservable propensity is related to the treatment effect. 
 For instance, when the MTE function decreases, those with a high propensity to receive treatment experience a relatively large treatment effect compared with those with a low propensity. 

 MTE functions have been examined in various empirical applications. For instance, \textcite{carneiro2011marginalretrurn} finds that the MTE function for college attendance is decreasing. This result suggests that individuals who benefit from college education are more likely to attend college. The estimated MTE function for the effect of early childcare by \textcite{YamaguchiAsaiKambayashi18} is increasing. It suggests that ``children who would benefit most from childcare are less likely to attend, implying inefficient allocation'' \parencite[][p 56]{YamaguchiAsaiKambayashi18}. 
 Other empirical examples of MTE include \textcite{doyle2007child},
 \textcite{kasahara2016does}, 
\textcite{brinch2017beyond},
and
\textcite{arnold2022measuring}.
\textcite{cornelissen2016from} provide an empirically oriented review.
\textcite[][page 3]{shea2023ivmte} provide a wide list of empirical articles that use MTE.

 This paper proposes a method for constructing uniform confidence bands for the MTE function. A uniform confidence band covers the true function with a prespecified probability. These bands are used to make statistical inferences on the shape of the MTE function. Uniform confidence bands are particularly useful because they enable the visualization of the statistical uncertainty underlying the estimated MTE function. All the above empirical articles present pointwise confidence intervals using the bootstrap method. A pointwise confidence interval quantifies the statistical uncertainty at a point of the MTE function, not the function itself. Pointwise procedures are not valid when evaluating the statistical uncertainty in the shape of the MTE function. For this purpose, uniform confidence bands are needed. 

Our proposed method is straightforward to implement. The uniform confidence band takes a familiar form: "estimated function $\pm$ critical value $\times$ standard error." This format makes our confidence bands easy to visualize. The critical values can be obtained analytically,  and their computation requires only the sample size, the kernel function, and the bandwidth used to estimate the MTE function. Our approach does not rely on computationally intensive procedures, such as bootstrapping, making it computationally efficient.

Our construction of uniform confidence bands is based on an asymptotic approximation of the supremum of the estimated MTE function over its domain of interest.
We follow \textcite{heckman2005structural}  for the setting. The semiparametric estimation of the finite-dimensional parameters is based on \textcite{carneiro2009estimating}. The local quadratic estimator estimates the MTE function as in \textcite{heckman2006understanding}.
We approximate the supremum of a normalized version of the estimated function by the supremum of the Gaussian process using the result of \textcite{chernozhukov2014gaussian}. Then, we further approximate the Gaussian field by a stationary Gaussian field by the result of \textcite{ghosal2000testing}. 
We derive critical values using an analytic approximation of the distribution of the supremum of a stationary Gaussian process based on the arguments of \textcite{piterbarg1996asymptotic}. The theoretical analysis follows steps similar to those of \textcite{lee2017doubly}. However, our derivation requires non-trivial technical analysis because the MTE function is estimated as the first derivative of the nonparametric regression function, and the nonparametric estimation involves a generated regressor. 

As a key theoretical contribution, we provide a rigorous treatment of the generated regressor problem that arises when using the estimated propensity score in the MTE estimator. Because the nonparametric regression uses the estimated rather than the true propensity score as the regressor, the estimation error in the first step must be carefully controlled. We address this issue through two complementary arguments. First, we show that, under our regularity conditions, the first-step propensity score estimator is sufficiently well behaved so that the bias induced by its estimation is asymptotically negligible. Second, rather than treating the estimated propensity score as a conventional generated covariate, we exploit that it is an estimated function of the instruments and treat it as an infinite-dimensional nuisance component. This perspective allows us to analyze the estimator within a nonparametric framework with an infinite-dimensional nuisance parameter, namely, the estimator is viewed as an empirical process indexed by the propensity score. As a result, we establish that the nonparametric regression based on the estimated propensity score admits a Gaussian process approximation, which provides the theoretical foundation for constructing valid uniform confidence bands for the MTE function.

This new approximation is more precise than the existing Gumbel-based approximation. It is well known that the supremum of a Gaussian process asymptotically follows the Gumbel distribution \parencite{cramer67}. This approximation is used for statistical inference based on kernel estimation on a nonparametric function by \textcite{bickel1973global, johnston1982probabilities, hardle1989asymptotic}. However, the Gumbel approximation is only accurate up to logarithmic order. In contrast, our approximation is valid up to polynomial order, yielding more accurate results. Moreover, it also offers tighter confidence bands than those derived from the Gumbel approximation.

We examine finite sample properties of the proposed uniform confidence band through Monte Carlo simulations. The results demonstrate that our confidence bands cover the true function with probability similar to the prespecified confidence level. Moreover, the simulation results show the superiority over alternative methods. Pointwise confidence bands yield much lower coverage, indicating that they are inadequate for making inferences about the shape of the MTE function. We also find that, although uniform confidence bands are wider than pointwise confidence bands, they remain sufficiently informative. The critical value for a 95\% confidence band is 1.96 for the pointwise procedure. It is around 3.03 for our procedure in our simulation designs. 
The uniform confidence bands based on the Gumbel approximation are much broader and may not be informative. The 95\% critical value from the Gumbel approximation is around 4.08. 

We illustrate the procedure by revisiting an empirical study by \textcite{dinkelman2011effects} about the rural electrification program, using the continuously distributed average land gradient as the instrument. The estimated MTE function displays a mild downward trend over much of the propensity-score range.

The remainder of the paper is organized as follows. In the following subsection, we discuss the relation to the literature. 
In Section \ref{framework}, we describe the econometric model and discuss the identification of the MTE function.
In Section \ref{estimation}, we present a semiparametric estimation method for the MTE function.
In Section \ref{confidenceband}, we construct the uniform confidence band for the MTE function.
Section \ref{asymptotic} provides the asymptotic justification of the proposed procedure. Section \ref{Monte_Carlo} demonstrates our uniform confidence band's finite sample performances compared with other methods via Monte Carlo simulations. Section \ref{empirical_application} illustrates our method with an empirical application. Section \ref{conclusion} concludes. The Online Supplement includes technical appendices and additional simulation and empirical results.

 \subsection{Related literature}

This study contributes to the literature on the MTE function for causal inference.
As discussed in the introduction, the shape of the MTE function is important in empirical applications. Many empirical papers include pointwise confidence bands when plotting the estimated MTE function \parencite[See, for example,][]{carneiro2009estimating}. This paper presents a straightforward method for computing uniform confidence bands, thereby enabling statistical inference about the shape of the MTE function. We also note that the shape of the MTE function is important not only for policy evaluation but also for optimal policy assignment, as discussed in \textcite{chen2022personalized} and \textcite{liu2022policy}.

This paper contributes to the literature on uniform confidence bands for functions estimated nonparametrically. Recent works on uniform confidence bands for nonparametric models include \textcite{horowitz2012uniform}, \textcite{chernozhukov2014gaussian}, \textcite{chernozhukov2014anti}, \textcite{belloni2015new}, \textcite{horowitz2017nonparametric}, \textcite{chen2018optimal}, and \textcite{chen2025adaptive}, among others.
As mentioned above, our setting involves a generated regressor, which introduces an additional source of estimation error that is absent in the existing literature.

The additional difference is that we provide an analytical procedure for uniform inference, whereas the above methods are primarily computationally intensive, such as bootstrapping. A bootstrap procedure should be possible by extending the methods in the aforementioned papers; in our Monte Carlo simulations, a bootstrap-based method (whose validity we do not establish) performs comparably to ours. We nonetheless focus on our straightforward, computationally inexpensive approach.

In the causal inference literature, a uniform confidence band is primarily considered for conditional average treatment effect functions. See, for example, \textcite{lee2017doubly}, \textcite{semenova2021debiased}, \textcite{fan2022estimation}, \textcite{baybutt2023doubly}, and \textcite{Imai03042026}.
We largely follow the approach taken by \textcite{lee2017doubly}. There are two critical differences. First, our target function is not a conditional mean function, but rather its derivative, which complicates the theoretical analysis. Second, the argument of our target function is the propensity score, which must be estimated; we therefore must account for the estimation error in the propensity score. This second difference poses technical challenges and constitutes our technical contribution.

\subsection{Notation} 
Let $: = $ denote ``equals by definition,'' and let a.s. denote ``almost surely.'' Let $\mathbbm{1}\{\cdot\}$ denote the indicator function. For a random variable $X$, $f_{X}(\cdot)$ denotes the probability density function of $X$. $\|f\|_{\infty}$ denotes $\sup_{x\in\mathcal{X}}|f(x)|$, where $\mathcal{X}$ is the support of $f$. For an arbitrary set $T$, let $\ell^{\infty}(T)$ denote the class of functions $f$ that maps from $T$ to $\mathbb{R}$ such that $\|f\|_{\infty}<\infty$. Unless otherwise stated, $c>0$ refers to universal constants whose values may change from place to place.

\section{Framework}\label{framework}
This section presents the econometric framework and defines and identifies the MTE function.
Our discussion is based on the standard potential outcome framework with selection on unobservables. In particular, we follow  \textcite{heckman2005structural}.

Let $Y_1$ be the potential outcome under treatment, and $Y_0$ be the outcome without treatment. The treatment $D$ is a binary variable. We set $D=1$ when the treatment is received, and in this case, $Y_1$ is observed. Conversely, when the treatment is not received, $ D=0$, and $Y_0$ is observed. 
Thus, the observed outcome is 
\begin{align*}
Y=DY_1+(1-D)Y_0.
\end{align*}

The potential outcomes are determined by observable covariates $X$ and unobservable factors $U_1$ and $U_0$. 
We write:
\begin{align*}
Y_1=\mu_1(X, U_1), \text{ and } Y_0=\mu_0(X, U_0),
\end{align*}
where $\mu_1$ and $\mu_0$ are unknown functions. 

We assume the individual selects the treatment status according to the value of a vector of instruments $Z$ and an unobserved characteristic $U_D$: 
\begin{align*}
&D^*=\mu_D(Z)-U_D\,\
&D=1 \textit{ if } D^*\geq 0;\ D=0 \textit{ otherwise},
\end{align*}
where $\mu_D$ is an unknown function.  We assume that, without loss of generality, $Z$ includes all elements of $X$. 

We allow for the possibility that selection into the treatment may be correlated with $(U_1, U_0)$ even after conditioning on $X$. This means that $ (U_1, U_0) $ and $ U_D$ may be correlated. We also assume that the instrument vector $Z$ is independent of the unobservable variables $(U_1, U_0, U_D)$. At least one element of $Z$ is not an element of $X$. This excluded instrument affects whether a unit receives the treatment, but does not directly affect the potential outcomes.

The selection process can be summarized using the propensity score, defined as the probability of receiving treatment based on the exogenous variables. Let $F_{U_D}$ denote the distribution function of $U_D$. We assume that $F_{U_D}$ is strictly increasing. Consequently, it holds $V:= F_{U_D} (U_D) \sim U(0,1)$. Let $P(Z) = F_{U_D} (\mu_D (Z) )$. Note that $D= \mathbbm{1} \{ P(Z) \geq V \} $.
It holds that $ E (D | Z) = P (D=1 | Z) = P (Z) $. Thus, $P(Z)$ is the propensity score.

The MTE is the mean effect of the treatment conditional on observed characteristics $X$ at a particular value of $V$:
\begin{align*}
MTE(v, x):= E[Y_1-Y_0| V=v, X=x].
\end{align*}
The MTE function is a function of $v$ for a given value of $x$, and captures how the unobserved propensity $V$ relates to the treatment effect. A small value of $V$ corresponds to a high propensity to receive the treatment, and vice versa. Thus, a decreasing MTE function indicates that those with a higher propensity to receive the treatment exhibit larger treatment effects.

For the identification, following \textcite{heckman2005structural}, we assume 
\begin{align*}
 (U_0, U_1, V) \perp\!\!\perp Z|X,
\end{align*}
$ E[Y_1]<\infty$ and $ E[Y_0]< \infty.$
Then, MTE can be identified by
\begin{align*}
MTE(p,x)= \frac{\partial E[Y|P(Z)=p, X=x]}{\partial p}
\end{align*}
over the support of distribution of $P(Z)$ conditional on $X=x$.

We make the following assumptions to simplify the MTE formula, thereby facilitating estimation and statistical inference.
First, the potential outcomes are additive in $X$ and $(U_1, U_0)$. That is, we write 
\begin{align*}
Y_1=\mu_1(X, U_1) = \mu_1 (X) + U_1, \text{ and }
Y_0=\mu_0(X, U_0) = \mu_0 (X) + U_0,
\end{align*}
where $\mu_j (X) = E[ Y_j | X]$ for $j=0,1$.

\begin{assumption}\label{ass::semipara}\
    \begin{enumerate}
        \item \label{ass::semipara_indep_X} 
$E[Y_j|X,V]=\mu_j(X)+E[U_j|V]$ holds where $\mu_j (X) = E[ Y_j | X]$ for $j=0,1$. 
        \item \label{ass::semipara_linear} $\mu_{j}(X)=\beta_{j}^\prime X$ for $j=0,1$.
    \end{enumerate}
\end{assumption}
Assumption \ref{ass::semipara}.\ref{ass::semipara_indep_X} states that the conditional expectation of unobservables does not depend on covariates once we condition on $V$. \textcite{brinch2017beyond} employ this assumption to simplify the formulation of the MTE function and its statistical analysis. Assumption \ref{ass::semipara}.\ref{ass::semipara_indep_X} is weaker than the independence between the covariates and all unobservables, which is a standard assumption in applied work such as \textcite{carneiro2009estimating} and \textcite{carneiro2011marginalretrurn}.  Assumption \ref{ass::semipara}.\ref{ass::semipara_linear} imposes the linearity in the regression of a potential outcome on the covariates. Taken together, Assumption \ref{ass::semipara} implies the MTE has a partially linear structure, i.e.
\begin{equation*}
    E[Y_1-Y_0|X,V]=(\beta_1-\beta_0)^{\prime}X+E[U_1-U_0|V]
\end{equation*}
and this structure enables us to obtain a tractable estimator and confidence-band procedure.

Under these assumptions, the MTE function is separable in $X$ and $V$. A straightforward calculation yields
\begin{align}
E[Y|X,P(Z)]  =\mu_{0}(X)+(\mu_{1}(X)-\mu_{0}(X))P(Z)+\Lambda (P(Z), X),  \notag
\end{align}
where we define
\begin{align*}
    \Lambda (p,x):=\int^{p}_{0}E[U_{1}-U_{0}|V=v,X=x]dv.
\end{align*}
Under Assumption \ref{ass::semipara}, $\Lambda(p,x)$ does not depend on $x$, so we write it as $\Lambda(p)$. 
We thus have 
\begin{align}
E[Y|X,P(Z)]  =\beta_0' X+ (\beta_1 - \beta_0)'X P(Z)+\Lambda (P(Z)) .
  \label{conditional_Y_estimation}  
\end{align}
The MTE function is 
\begin{align*}
    MTE (p, x) = (\beta_1 - \beta_0)' x + \left.\frac{\partial \Lambda (P(Z))}{\partial P(Z)}\right|_{P(Z)=p}.
\end{align*}
In this paper, we consider this characterization of the MTE function and propose a method for constructing uniform confidence bands for it.

\section{Estimation}\label{estimation}
This section provides a semiparametric estimation method of the MTE. Our estimation strategy combines the semiparametric estimation approach of \textcite{carneiro2009estimating} with the local quadratic estimation method of \textcite{heckman2006understanding}. We assume an independent and identically distributed sample,  $(Y_i,X_i,Z_i,D_i), i=1,\dots, n$, of $(Y, X, Z, D)$ is available, where $X_i\in\mathbb{R}^d$ and $Z_{i}\in\mathbb{R}^m$. We also assume that $Z_{i}$ has enough variation to generate a propensity score $P(Z_{i})$ continuously from $0$ to $1$ conditional on $X_{i}=x$ for all values of $x$ in the region we want our confidence band to cover.

We first estimate the propensity score $P(Z_i)$. Note that it is a binary choice problem where $D$ is the binary outcome and $Z_i$ is the vector of regressors. We often consider parametric models for $P(Z_i)$ and may use the logit or probit models. Let $\hat P(Z_i)$ be the estimated propensity score for $i$.

We consider a modified version of the partially polynomial estimator considered in \textcite{carneiro2009estimating} for the estimation of $\beta_{0}$ and $\beta_{1}$. Let $\hat \beta_0$ and $\hat \beta_1 $ denote the estimators. Our theory is agnostic about the choice of coefficient estimator, provided its convergence rate is sufficiently fast. We suggest applying the partially linear estimator to the regressions of $D_iY_i$ and $(1-D_i)Y_i$. 
A straightforward calculation gives
\begin{align*}
    E[D_{i}Y_{i}|X_{i},P(Z_{i})]=&\beta_1' X_i P(Z_{i})+ \Lambda_{1}(P(Z_{i})), \\
    E[(1-D_{i})Y_{i}|X_{i},P(Z_{i})]=&\beta_0' X_i (1-P(Z_{i})) + \Lambda_{0}(P(Z_{i})),
\end{align*}
where
\begin{align*}
    \Lambda_{1}(P(Z_{i})):=&E[D_{i}U_{1i}|P(Z_{i})], \\
    \Lambda_{0}(P(Z_{i})):=&E[(1-D_{i})U_{0i}|P(Z_{i})].
\end{align*}
The estimation of each of these two partially linear regression models is carried out in two steps. In the first step, we use the local linear estimator for the regression of $(D_i Y_i, X_i \hat P(Z_i))$ on $\hat P(Z_i)$. In the second step, we regress the residual of $ D_iY_i$ on the residual of $X_i\hat P(Z_{i})$ to obtain $\hat \beta_1$. The estimator $\hat \beta_0$ is obtained by applying the same procedure to the regression model for $(1-D_i) Y_i$.\footnote{The original procedure in \textcite{carneiro2009estimating} applies the partially linear estimator to the regression of $Y_i$ on $X_i$ and $\hat P(Z_i)$ separately for the observations with $D_i=0$ and $D_i =1$. Our modification enables the use of all observations to estimate each equation. Alternatively, we may estimate $\beta_0$ and $\beta_1 - \beta_0$ simultaneously from the regression model of $E[ Y\mid X, P(Z)]$ as in \textcite{heckman2006understanding}. Extending our procedure to these alternative approaches is straightforward.}

Lastly, we estimate the derivative of $\Lambda (\cdot)$. 
Let $\tilde{Y}_{i}=Y_{i}- \beta_{0}^\prime X_{i}-(\beta_{1}-\beta_{0})^\prime X_{i}P(Z_{i})$.
From (\ref{conditional_Y_estimation}), we obtain $\tilde{Y}_{i}=\Lambda (P(Z_{i}))+\varepsilon_{i}$. We thus have 
\begin{align*}
    E[\tilde{Y}_{i}|P(Z_{i})]=\Lambda (P(Z_{i})).
\end{align*}
This equation motivates us to consider a local quadratic estimator to estimate the derivative of $\Lambda(\cdot)$. Let $\hat{\tilde{Y}}_i$ denote $Y_{i}-\hat{\beta}_{0}^\prime X_{i}-(\hat \beta_{1}- \hat \beta_{0})^\prime X_{i}\hat{P}(Z_{i})$.
For each $p$, the local quadratic estimator can be obtained by solving
\begin{align*}\label{min}
\argmin_{(\theta_0,\theta_1,\theta_{2})}\sum_{i=1}^n\left[\big(\hat{\tilde{Y}}_i-\theta_0-\theta_1(\hat{P}(Z_{i})-p)-\theta_2(\hat{P}(Z_{i})-p)^{2}\big)^2K\left(\frac{\hat{P}(Z_{i})-p}{h_n}\right)\right],
\end{align*}
where $K(\cdot)$ is a kernel function of $\mathbb{R}$ and $h_n$ is a sequence of bandwidths. $(\theta_0, \theta_1, \theta_2)$ correspond to the conditional mean, first derivative, and second derivative, respectively. Let $\hat \theta_1 (p)$ be the estimate of $\theta_1$ at $p$.
We use the local quadratic estimator because it provides desirable properties for estimating the first derivative of the nonparametric function. \textcite{carneiro2009estimating} and \textcite{carneiro2011marginalretrurn} also use the local quadratic estimator. We note that the bandwidth $h_n$ must be smaller than the mean-squared-error optimal bandwidth; this undersmoothing centers the asymptotic Gaussian approximation around the true function, and its exact rate is discussed in Section \ref{asymptotic}.

The resulting estimate of $MTE(p,x)$ is 
\begin{equation*}
    \widehat{MTE}(p,x)=(\hat \beta_{1} - \hat \beta_{0})^\prime x+\hat{\theta}_1(p).
\end{equation*}

\section{Uniform confidence band}\label{confidenceband}
This section explains how to compute the critical values for our uniform confidence bands. Our bands have a usual form of ``estimated function $\pm$ critical value  $\times $ standard error.'' Unlike the conventional Gaussian (pointwise) critical value, ours is based on a refined approximation of the distribution of the supremum of a Gaussian process; its theoretical justification is given in the next section.

The critical value for our uniform confidence solves the following equation: 
\begin{align*}
F_{n,1}(c)\geq 1-\alpha,
\end{align*}
where $F_{n,1}(t)=\exp\big(-2e^{-t-t^2/2\ell_n^2}\big)$ and $\ell_n$ is the largest solution to the following equation:
	\begin{align*}
	(b_{0}-a_{0})\cdot h_n^{-1}\sqrt{\lambda}(2\pi)^{-1}\exp(-\ell_n^2/2)=1,
	\end{align*}
	with 
	\begin{align*}
	\lambda:=-\frac{\int g(u)g^{\prime\prime}(u)du}{\int g^2(u)du}\quad \text{and} \quad g(u):=uK(u),
	\end{align*}
 where $(a_0,b_0)$ is defined as the region over which we construct uniform confidence bands for $p$. The explicit expression of $\ell_n$ is:\footnote{We can provide an explicit formula because our nonparametric component, $\hat \theta_1 (p)$, is unidimensional, unlike \textcite{lee2017doubly} who allow multi-dimensional functions.} 
  	\begin{align*}
	\ell_n=\sqrt{2\log\left(\frac{\sqrt{\lambda}(b_0-a_0)}{2\pi h_n}\right)}
	\end{align*}
Note that $h_n$ must be small enough for $l_n$ to be a real number. 
$F_{n,1} (t)$ provides a refined approximation of the supremum of a Gaussian process, which is an asymptotic approximation of the estimated MTE function with undersmoothing.
Solving the equation, we have
\begin{equation*}
    c_{1-\alpha}=(\ell_{n}^{2}-2\log\{\log[(1-\alpha)^{-1/2}]\})^{1/2}.
\end{equation*}

Our two-sided uniform confidence band has the form
\begin{align*}
\widehat{MTE}(p,x) - c_{1-\alpha} \frac{\hat{s}(p)}{\sqrt{nh_n^3}} \leq MTE(p,x) \leq \widehat{MTE}(p,x) + c_{1-\alpha} \frac{\hat{s}(p)}{\sqrt{nh_n^3}},
\end{align*}
where $\hat{s}(p)/\sqrt{nh_n^3}$ is a standard error of $\widehat{MTE(p,x)}$.
Our confidence band has a usual form of confidence interval (i.e., estimate $\pm$ critical value $\times$ standard error). However, the critical value differs from the Gaussian pointwise critical value, which is fixed. Because the critical value increases at the rate of $(\log (n))^{1/2}$, the resulting band is wider and more conservative than a pointwise band while ensuring uniform coverage.

For $\hat{s}(p)$, in the simulations and the empirical example, we use
  	\begin{align*}
	   \hat{s}^2(p)=\frac{\nu_{2}(K)}{nh_n\hat{f}_P^2(p)\kappa_2^2(K)}\sum_{i=1}^{n}\hat{\varepsilon}_i^2K\left(\frac{\hat{P}(Z_{i})-p}{h_n}\right),
		\end{align*}
		where $\hat{f}_{P}(p)$ is the kernel density estimator of $f_P$, $\hat{\varepsilon}_i=\hat{\tilde{Y}}_i-\hat{\theta}_0(\hat{P}(Z_{i})) $, $\nu_{2}(K) = \int u^2K^2(u) du$ and  $\kappa_2(K)=\int u^2K(u)du$.

We would like to emphasize that calculating our confidence band is straightforward. To determine the critical value $c_{1 - \alpha}$, we need to know the kernel function $K$, the bandwidth $h_n$, and the range of the propensity score $ (a_0, b_0) $. Once we establish the critical value, we can compute the confidence band using the standard errors, the bandwidth, and the sample size.

\begin{remark}
\label{rem:gumbel}
The conventional approach based on the Gumbel approximation is based on the approximation given by 
$ F_{\mathrm{Gumbel}}(t)
=
\exp\bigl(-2e^{-t}\bigr)$.
The $(1-\alpha)$ critical value based on this approximation is 
$
c^{\mathrm{Gumbel}}_{1-\alpha}
=
\ell_n
-
\frac{1}{\ell_n}
\log\!\left\{
\log[(1-\alpha)^{-1/2}]
\right\},
$
where $\ell_n$ is the normalizing sequence defined above.
Compared with this critical value, ours includes a logarithmic adjustment term that increases with sample size. The resulting critical value is smaller than the Gumbel critical value, yielding narrower confidence bands.
\end{remark}

\section{Asymptotic Theory}\label{asymptotic}
In this section, we provide an asymptotic justification for our proposed procedure. We begin by introducing the relevant notations and definitions. Next, we outline a set of assumptions that will be used in our asymptotic theory. The distribution of the estimated MTE function is approximated by a Gaussian process. Finally, we justify our critical value based on an approximation of the supremum of this Gaussian process.

First, we introduce some definitions about the region of the uniform confidence band.
\begin{dfn}[Definition of the region of the uniform confidence band]\label{1}
	$\mathcal{X}:= \Pi_{j=1}^d[a_j,b_j]$, where $a_j<b_j$, $j=1,\dots, d$, and $\mathcal{X}$ is a strict subset of the support of $X$. 
	Let $\mathcal{P}:= [a_0,b_0]$, where $0 < a_0 < b_0 < 1$ and $\mathbb{P}\subset (0,1)$ denote an open interval including 	$\mathcal{P}$. 
\end{dfn}

Definition \ref{1} specifies the region over which we construct uniform confidence bands for $X$ and $p$. The requirement that the region for $X$ be a Cartesian product of closed intervals is typically not restrictive, since researchers commonly hold $X$ fixed at a specific value such as the sample mean. Additionally, the region of $p$ must be a closed interval, excluding neighborhoods of the boundary values 0 and 1. It is also important in practice because the estimated MTE function tends to behave poorly near these extreme values.

Second, we introduce some notation. Let $m(p)$ denote $E[\tilde{Y}_{i}|P( Z_{i})=p]$. Also let $\varepsilon_{i}=\tilde{Y}_{i}-m(P( Z_{i}))$.
Let $\sigma^{2}(p)$ denote $E[\varepsilon^{2}_{i}|P(Z_{i})=p]$. 
Let $s_n^2(p)$ denote the  first-order approximated version of the variance of $\widehat{MTE}(p,x)$ that is  not feasible in practice:
\begin{align*}
s_n^2(p)=\frac{1}{h_n^{3}f_P^2(p)\kappa^2_2(K)}E\left[\varepsilon_i^2K^2\left(\frac{P(Z_{i})-p}{h_n}\right)(P(Z_{i})-p)^2\right].
\end{align*}
We note that $\widehat{\beta_{1}-\beta_{0}}$ is estimated at the parametric rate, while $\hat \theta_1$ is nonparametrically estimated and its convergence rate is slower than $\widehat{\beta_{1}-\beta_{0}}$. Therefore, the asymptotic variance of $\widehat{MTE}(p,x)$ depends only on the asymptotic variance of $\hat \theta_1 $. The $s_n^2(p)$ formula corresponds to the variance of $\hat \theta_1$.

We impose the following conditions to establish the asymptotic theory. Many of these assumptions are similar to those used for pointwise inference. The main difference is that, because uniform inference involves a range of propensity-score values, the regularity conditions must be imposed uniformly over the evaluation region rather than at a specific point. Additional assumptions are needed to justify the Gaussian-process approximation used to construct the confidence band and to control the effect of estimating the propensity score in the first stage.

\begin{assumption}
    \label{2}
	The distribution of $P(Z_{i})$ has a bounded Lebesgue density $f_P(\cdot)$ on $\mathbb{P}$ and is bounded away from zero. Furthermore, $f_P(\cdot)$ is twice continuously differentiable in $\mathbb{P}$ and the second derivative of $f_P(\cdot)$ is uniformly bounded on $\mathbb{P}$.
\end{assumption}

\begin{assumption}
	\label{3}
	$\sigma^{2}(p)$ is continuous on $p \in \mathcal{P}$, and   $\sup_{p\in (0,1)}E[|\varepsilon_i|^4|P(Z_{i})=p]<\infty$ holds. Furthermore, $p\mapsto \sigma^{2}(p)f_P(p)$ is Lipschitz continuous. 
\end{assumption}

\begin{assumption}
     \label{4}
	$m(p)$ is three times continuously differentiable on $\mathbb{P}$ with uniformly bounded derivatives.

\end{assumption}

Assumptions \ref{2}-\ref{4} are standard in the literature on nonparametric estimation. To establish uniform convergence results for local quadratic estimators, we follow the approach of \textcite{chernozhukov2014gaussian} and assume the existence of conditional fourth moments of error terms. 
The Lipschitz continuity condition is essential for approximating the supremum of a Gaussian process with the supremum of a stationary Gaussian process.

Assumption \ref{2} is not merely a technical convenience. When the propensity score approaches the boundary values 0 or 1, many treatment effect parameters may become irregularly identified, leading to nonstandard asymptotic behavior. See \textcite{khan2010irregular} and \textcite{heiler2021valid} for discussions of irregular identification.

\begin{assumption}
    \label{5}
	Let $K(\cdot)$ be a kernel function on $\mathbb{R}$ compactly supported, bounded, symmetric around zero, and six times differentiable.

\end{assumption}

We assume that the kernel function is six times differentiable, as stated in Assumption \ref{5}. This assumption is important for establishing the asymptotic theory for the maximum of a Gaussian homogeneous field, particularly because we follow Theorem 4.3 from \textcite{piterbarg1996asymptotic}. This assumption is also essential in analytically deriving the limit distribution of the supremum of a stationary Gaussian process. A practical implication is that it excludes specific kernels, such as the uniform kernel. 

\begin{assumption}
    	\label{7}
    \begin{enumerate}
      \item \label{7-1} $\max_{1\leq i\leq n}|\hat{P}(Z_i)-P(Z_i)|=o_p(h_n^{3})$ holds.
      \item \label{7-2} There exists a VC-type class $\mathcal{M}$ with the envelope 2 such that $\lim_{n\rightarrow\infty}\Pr(\hat{P}\in\mathcal{M})=1$ holds. 
  \end{enumerate}

\end{assumption}

\begin{assumption}
    	
  \label{8}
$h_n=Cn^{-\eta}$, where $C$ and $\eta$ are positive constant such that  $\frac{1}{7}< \eta <\frac{1}{6}$ and   $h_n<\sqrt{\lambda}(b_0-a_0)/2\pi$ hold with $	\lambda=-\int g(u)g^{\prime\prime}(u)du/\int g^2(u)du$ and $g(u)=uK(u)$.

\end{assumption}

Assumption \ref{7} outlines the conditions that the predicted propensity score, $\hat{P}(Z)$, must satisfy. In practice, $\hat{P}(Z)$ can be estimated using parametric models like probit or logit. In these cases, the estimation error is typically of the order $O_p(n^{-1/2})$. This convergence rate can meet the requirements outlined in Assumption \ref{7}.\ref{7-1}, provided that the sequence of $h_n$ satisfies the conditions in Assumption \ref{8}.
Semiparametric estimators are also options. However, appropriate higher-order kernels must be utilized to achieve a sufficiently fast convergence rate. For example, for a single-index model, we require at least a 4th-order kernel. 
The definition of the VC-type class related to Assumption \ref{7}.\ref{7-2} is provided in Appendix \ref{Appendix_VC}. This condition is mild and is satisfied by many commonly used methods for estimating propensity scores.

The bandwidth conditions in Assumption \ref{8} ensure that the estimated MTE function is asymptotically unbiased: the required rate is faster than the mean-squared-error optimal rate, so we undersmooth.  
Furthermore, this condition, along with Assumption \ref{7} and the Lipschitz continuity of the kernel, allows us to asymptotically disregard the bias that arises from propensity score estimation in the first step.

\begin{assumption}
    \label{as-sn}
   \begin{enumerate}
   \item \label{9}
	$\inf_{n\geq 1} \inf_{p\in \mathcal{P}}s_n(p)>0$ and $s_n(p)$ is continuous for each $n\geq 1$. 

    \item \label{10}
	An estimator of $s_n^2(p)$, $\hat{s}^2(p)$, exists such that 
	\begin{align*}
	\sup_{p\in \mathcal{P}}\big|\hat{s}^2(p)- s_n^2(p) \big|=O_p(n^{-c}).
	\end{align*}
\end{enumerate}

\end{assumption}

Assumption \ref{as-sn} addresses the asymptotic variances of the MTE function. The conditions of strict positivity and continuity in Assumption  \ref{as-sn}.\ref{9} are standard requirements. The rate condition in Assumption   \ref{as-sn}.\ref{10} is necessary to prevent the inverse of the standard error from diverging to infinity. This rate can be achieved using the estimator provided at the end of the previous section.\footnote{Alternatively, the formula for $s_n^2(p)$ suggests another estimator:
		\begin{align*}
\tilde{s}^2(p) = \frac{1}{nh_n^3\hat{f}_P^2(p)\kappa_2^2(K)} \sum_{i=1}^{n} \hat{\varepsilon}_i^2 K^2\left(\frac{\hat{P}(Z_{i})-p}{h_n}\right)(\hat{P}(Z_{i})-p)^2.
		\end{align*}
Simulations indicate that the estimator presented in the previous section behaves more stably than this alternative estimator.}

\begin{assumption}
 \label{covariate}
   \begin{enumerate}
  \item \label{11-1}There exists an estimator $\widehat{\beta}_{0}$ and $\widehat{\beta_{1}-\beta_{0}}$ such that
 \begin{equation*}
     \widehat{\beta}_{0}-\beta_{0}=O_{p}(n^{-1/2}),\quad \widehat{\beta_{1}-\beta_{0}}-(\beta_{1}-\beta_{0})=O_{p}(n^{-1/2}).
 \end{equation*}
      \item \label{11-2} For each $\ell\in\{1,\cdots,d\}$, $E[X_{i,\ell}|P(Z_{i})=p]$ is continuously differentiable in $(0,1)$, and the derivative of $E[X_{i,\ell}|P(Z_{i})=p]$ is uniformly bounded on $(0,1)$.
      \item \label{11-3} For each $\ell\in\{1,\cdots,d\}$, $\sup_{p\in (0,1)}E[|X_{i,\ell}|^2|P(Z_{i})=p]<\infty$ holds.
  \end{enumerate}

\end{assumption}

Assumption \ref{covariate} pertains to the impact of preliminary estimation. 
The semiparametric coefficient estimator presented in \textcite{carneiro2009estimating} satisfies Assumption  \ref{covariate}.\ref{11-1}. For more detailed assumptions required to achieve root-$n$ consistent estimators, refer to \textcite{mammen2016semiparametric}. 
Assumptions  \ref{covariate}.\ref{11-2} and \ref{covariate}.\ref{11-3} indicate that the covariates must exhibit sufficient variation, even after conditioning on the propensity score.

\begin{remark}
Note that we can also consider a more flexible estimation of the conditional expectation of potential outcomes. In particular, the main arguments would continue to hold as long as two conditions are satisfied: (i) the estimators of the conditional expectations, $\mu_d(X)$, converge at a polynomial rate faster than $\sqrt{n h_n^{3}}$, and (ii) the generated regressor $\hat{\tilde{Y}}_i$ converges to $\tilde{Y}_i$ sufficiently fast so that the first-stage estimation error remains asymptotically negligible for the construction of the confidence bands.
\end{remark}

The following lemma establishes a linear expansion of the semiparametric estimator.
\begin{lemma} \label{lemma1}
Let Assumptions \ref{ass::semipara}-\ref{covariate} hold. Then,
\begin{align*}
\sup_{(p,x)\in \mathcal{P}\times \mathcal{X}}\sqrt{nh_n^{3}}  &\left| \frac{\widehat{MTE}(p,x)-MTE(p,x)}{\hat{s}(p)}-\right.\\
& \left.\frac{1}{nh_n^{3}f_P(p)\kappa_2(K)s_n(p)}\sum_{i=1}^n\varepsilon_iK\left(\frac{P(Z_i)-p}{h_n}\right)(P(Z_i)-p) \right|\\
=O_P(n^{-c})&,
\end{align*}
for some positive constant $c>0$.
\end{lemma}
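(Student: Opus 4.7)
The plan is to decompose
$\widehat{MTE}(p,x) - MTE(p,x) = (\widehat{\beta_{1}-\beta_{0}} - (\beta_{1}-\beta_{0}))^\prime x + (\hat{\theta}_1(p) - \Lambda^\prime(p))$,
using $m(p) := E[\tilde Y_i \mid P(Z_i)=p] = \Lambda(p)$ so that $\Lambda^\prime(p) = m^\prime(p)$. By Assumption~\ref{basic-assumption}.\ref{covariate} and compactness of $\mathcal X$, the parametric piece is $O_p(n^{-1/2})$ uniformly in $x$; dividing by $\hat s(p)$ (bounded below by Assumption~\ref{basic-assumption}.\ref{as-sn}) and multiplying by $\sqrt{nh_n^{3}}$ yields $O_p(h_n^{3/2}) = O_p(n^{-3\eta/2})$, which is within the $O_p(n^{-c})$ tolerance. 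Similarly, replacing $s_n(p)$ by $\hat s(p)$ in the normalisation of the target costs $O_p(n^{-c})$ by Assumption~\ref{basic-assumption}.\ref{10} combined with the tightness of the leading stochastic term. The substantive work therefore reduces to an expansion of $\hat\theta_1(p) - m^\prime(p)$ uniformly in $p \in \mathcal P$.

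Writing the local quadratic estimator in matrix form, set $\mathbf X_i(p) = (1,\hat P(Z_i)-p, (\hat P(Z_i)-p)^2)^\prime$, $K_i(p) = K((\hat P(Z_i)-p)/h_n)$, and $\hat S_n(p) = n^{-1}\sum_i K_i(p)\mathbf X_i(p)\mathbf X_i(p)^\prime$, so that $\hat\theta_1(p) = e_2^\prime \hat S_n(p)^{-1} n^{-1}\sum_i K_i(p)\mathbf X_i(p)\hat{\tilde{Y}}_i$. I would proceed by two sequential substitutions. First, replace $\hat P(Z_i)$ by $P(Z_i)$ in both the kernel weights and the polynomial basis. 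Since $K$ is differentiable with bounded derivative (Assumption~\ref{basic-assumption}.\ref{5}) and $\|\hat P - P\|_\infty = o_p(h_n^{3})$ (Assumption~\ref{basic-assumption}.\ref{7-1}), mean-value expansions yield $|K_i(p) - K((P(Z_i)-p)/h_n)| = o_p(h_n^{2})$ and correspondingly negligible discrepancies for the polynomial factors. The pointwise bounds are promoted to uniform-in-$p$ bounds by an empirical process argument based on the VC-type envelope of Assumption~\ref{basic-assumption}.\ref{7-2}. Second, replace $\hat{\tilde{Y}}_i$ by $\tilde Y_i$; the difference equals $-(\hat\beta_0-\beta_0)^\prime X_i - (\widehat{\beta_{1}-\beta_{0}}-(\beta_{1}-\beta_{0}))^\prime X_i \hat P(Z_i) - (\beta_{1}-\beta_{0})^\prime X_i(\hat P(Z_i)-P(Z_i))$, and Assumptions~\ref{basic-assumption}.\ref{covariate} and~\ref{basic-assumption}.\ref{7-1}, together with the conditional-moment bounds on $X_i$, make the resulting contribution $o_p((nh_n^{3})^{-1/2})$ uniformly on $\mathcal P \times \mathcal X$ after passing through the local quadratic weights.

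After these substitutions the problem reduces to the oracle local quadratic regression of $\tilde Y_i = m(P(Z_i)) + \varepsilon_i$ on $P(Z_i)$. Set $S_n(p) := E[\mathbf X_i \mathbf X_i^\prime K_i]$ and Taylor expand $m$ to second order around $p$ (Assumption~\ref{basic-assumption}.\ref{4}); then $E[\mathbf X_i K_i m(P(Z_i))] = S_n(p)(m(p), m^\prime(p), m^{\prime\prime}(p)/2)^\prime + r(p)$, and $S_n(p)^{-1}$ exactly recovers the first three coefficients, leaving a bias $e_2^\prime S_n(p)^{-1} r(p) = O(h_n^{2})$ uniformly, which after scaling becomes $O(\sqrt{nh_n^{7}}) = o(1)$ thanks to $\eta > 1/7$ (Assumption~\ref{basic-assumption}.\ref{8}). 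Symmetry of $K$ and Assumption~\ref{basic-assumption}.\ref{2} give the leading form $e_2^\prime S_n(p)^{-1} \approx (0,(h_n^{3} f_P(p)\kappa_2(K))^{-1}, 0)$, whence the stochastic part yields
\begin{align*}
\hat\theta_1(p) - m^\prime(p) = \frac{1}{nh_n^{3} f_P(p)\kappa_2(K)}\sum_{i=1}^n \varepsilon_i K\!\left(\frac{P(Z_i)-p}{h_n}\right)(P(Z_i)-p) + o_p((nh_n^{3})^{-1/2})
\end{align*}
uniformly on $\mathcal P$. Uniform convergence of $\hat S_n(p)$ and of the centred stochastic sums is obtained from standard VC-type maximal inequalities together with the fourth-moment hypothesis in Assumption~\ref{basic-assumption}.\ref{3}. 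Reassembling with the parametric and normalisation reductions from the first paragraph gives the claim.

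The main obstacle is the joint uniform control of the generated-regressor error from substituting $\hat P$ for $P$ simultaneously in the kernel weights and in the polynomial basis. Because the local quadratic design matrix has entries ranging from $O(h_n)$ down to $O(h_n^{5})$, even small perturbations can be amplified through the matrix inversion, so obtaining a sharp polynomial-order remainder requires a careful combination of the $o_p(h_n^{3})$ sup-norm rate from Assumption~\ref{basic-assumption}.\ref{7-1}, the differentiability of $K$ from Assumption~\ref{basic-assumption}.\ref{5}, and the VC-type envelope of Assumption~\ref{basic-assumption}.\ref{7-2}. This is exactly the generated-regressor issue flagged in the introduction and constitutes the main technical content of the argument.
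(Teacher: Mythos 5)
Your overall architecture matches the paper's: isolate the parametric piece, swap $\hat s(p)$ for $s_n(p)$ via Assumption \ref{basic-assumption}.\ref{as-sn}, write the local quadratic estimator in matrix form, Taylor expand $m$ so that the bias of $\hat\theta_1$ is $O(h_n^2)$ and is removed by undersmoothing ($\eta>1/7$), and reduce to the oracle $\varepsilon$-weighted sum. The paper organizes the generated-regressor error differently in one respect --- it keeps $\hat P$ in the design matrix and expands $m(P(Z_i))$ in powers of $\hat P(Z_i)-p$ with explicit correction terms $P_i$ absorbing $P(Z_i)-\hat P(Z_i)$, rather than substituting $P$ for $\hat P$ wholesale --- but that is a reorganization rather than a different idea.

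The gap is in your first substitution step as applied to the noise term. For the design matrix, the bias, and the $\hat{\tilde Y}_i-\tilde Y_i$ pieces, your mean-value bounds suffice. But for
$\frac{1}{nh_n^{3}}\sum_i\varepsilon_i K\bigl(\frac{\hat P(Z_i)-p}{h_n}\bigr)(\hat P(Z_i)-p)$,
the route ``Lipschitz continuity of $uK(u)$ plus $\|\hat P-P\|_\infty=o_p(h_n^{3})$, promoted to uniformity by the VC structure'' fails on the rate, not on the uniformity: each active summand changes by at most $O(\|\hat P-P\|_\infty)=o_p(h_n^{3})$, there are $O_p(nh_n)$ active observations, so the discrepancy is $o_p(h_n)$, and after multiplying by $\sqrt{nh_n^{3}}$ this is $o_p(\sqrt{nh_n^{5}})$, which is vacuous because $nh_n^{5}\to\infty$ under Assumption \ref{basic-assumption}.\ref{8}. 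No deterministic sup-norm bound can close this. What is actually needed --- and what the paper's Lemma \ref{epsilon-convergence} does --- is to treat the difference $\varepsilon_i\bigl[k^{s}_{P,h_n}(r(Z_i),p)\bigr]$ as a \emph{centered} empirical process indexed jointly by $p$ and by $r$ ranging over the shrinking VC class $\{r\in\mathcal M:\|r-P\|_\infty\le h_n^{2}\}$, exploit $E[\varepsilon_i\mid Z_i]=0$, and apply a maximal inequality with the variance proxy $\sup E[\varepsilon_i^{2}k^{s}_{P,h_n}(r(Z_i),p)^{2}]\lesssim\|r-P\|_\infty^{2}/h_n^{2}\lesssim h_n^{2}$; the smallness of the variance of the difference class (order $h_n^{2}$ rather than the order $h_n$ of the main class) is what produces the extra polynomial factor and hence the $O_p(n^{-c})$ remainder. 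Your closing paragraph correctly identifies the generated-regressor interaction as the main obstacle, but the ingredients you list (sup-norm rate, kernel smoothness, VC envelope) omit the centering-plus-small-variance mechanism without which the claimed rate does not follow.
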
 

This lemma demonstrates that preliminary estimation of propensity scores and coefficients in the partial linear model does not asymptotically affect the estimation error of the MTE function. The proof addresses several potential sources of estimation error.
First, we consider the difference in the dependent variables in the local quadratic estimation.
The difference between the true dependent variable $\tilde{Y}$ and the feasible version $\hat{\tilde{Y}}_{i}$  
is asymptotically negligible under Assumptions \ref{7}.\ref{7-1} and \ref{covariate}.\ref{11-1}, noting that Assumptions \ref{7}.\ref{7-1} and \ref{8} together ensure sufficiently fast convergence of the estimated propensity score.
Second, we replace the asymptotic variance $s_n$ with its estimator $\hat{s}$, which is handled by Assumption \ref{as-sn}.
Third, and most importantly, the nonparametric estimator relies on the generated regressor $\hat{P}(Z_i)$ rather than on the true propensity score. 
We therefore treat it as an infinite-dimensional nuisance component depending on $Z$ rather than as a conventional generated covariate. Assumption \ref{7} and \ref{8} ensure that the estimation error in 
$\hat{P}(Z_i)$  is sufficiently well behaved so that the MTE estimator can be analyzed as 
an empirical process indexed by the propensity score.

Define
\begin{align*}
T_n(p)=\frac{1}{nh_n^{3}}\sum_{i=1}^n\varepsilon_i K\left(\frac{P(Z_i)-p}{h_n}\right)(P(Z_i)-p)
\end{align*}
and 
\begin{align*}
c_n(p)=\left\{\frac{1}{h_n^{3}} E\Big[ \varepsilon_i^2K^2\left(\frac{P(Z_i)-p}{h_n}\right)(P(Z_i)-p)^2\Big]  \right\}^{-1/2}.
\end{align*}
By Lemma \ref{lemma1}, we have
\begin{align*}
\sup_{(p,x)\in \mathcal{P}\times \mathcal{X}}\sqrt{nh_n^{3}}\left| \frac{\widehat{MTE}(p,x)-MTE(p,x)}{\hat{s}(p)}-c_n(p)T_n(p) \right|=O_p(n^{-c}).
\end{align*}
We approximate the supremum of the empirical process $c_n(p)\sqrt{nh_n^{3}}(T_n(p)-E[T_n(p)])$ by the supremum of a Gaussian process using the result of \textcite{chernozhukov2014gaussian}. Define
\begin{align*}
W_n=\sup_{p\in \mathcal{P}}c_n(p)\sqrt{nh_n^{3}}\left[T_n(p)-E[T_n(p)]) \right].
\end{align*}

\begin{lemma}\label{lemma2}
	Let Assumptions \ref{ass::semipara}-\ref{covariate} hold. Then, for every $n\geq 1$, there is a tight Gaussian random variable $\tilde{B}_{n,1}$ in $\ell^{\infty} (\mathcal{P})$ with mean zero and 
	covariance function
	\begin{align*}
	&E[ \tilde{B}_{n,1}(p) \tilde{B}_{n,1}(\check{p})] \notag \\
 =&
	h_n^{-3}c_n(p)c_n(\check{p})E \left[\varepsilon_{i}^2K\left(\frac{P(Z_{i})-p}{h_n}\right) K\left(\frac{P(Z_{i})-\check{p}}{h_n}\right)(P(Z_{i})-p)^2(P(Z_{i})-\check{p})^2 \right],
	\end{align*}
	and there is a sequence $\tilde{W}_{n,1}$ of random variables such that $\tilde{W}_{n,1}$ is equal to $\sup_{p\in \mathcal{P}} \tilde{B}_{n,1}(p)$ in distribution and as $n \to \infty$:
	\begin{align*}
	|W_n-\tilde{W}_{n,1}|=O_P\{(nh_n^{3})^{-1/6}\log n +(nh_n^{3})^{-1/4}\log^{5/4}n+(nh_n^{6})^{-1/4}\log^{3/2}n       \}.
	\end{align*}
\end{lemma}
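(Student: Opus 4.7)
The plan is to recognize the left-hand side as the supremum of a normalized empirical process indexed by a one-parameter family of kernel functions and to apply the Gaussian coupling theorem for suprema of empirical processes of \textcite{chernozhukov2012gaussian, chernozhukov2014gaussian}. Define the class
\begin{align*}
\mathcal{G}_n := \bigl\{ g_p : p \in \mathcal{P} \bigr\}, \qquad g_p(W_i) := c_n(p)\, h_n^{-3/2}\, \varepsilon_i\, K\!\left(\frac{P(Z_i)-p}{h_n}\right)(P(Z_i)-p),
\end{align*}
so that $c_n(p)\sqrt{nh_n^3}(T_n(p) - E T_n(p)) = n^{-1/2}\sum_{i=1}^n \bigl(g_p(W_i) - E g_p(W_i)\bigr)$ and $W_n = \sup_{p \in \mathcal{P}} n^{-1/2}\sum_i (g_p(W_i) - Eg_p(W_i))$.

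The first step is to verify the preconditions of the coupling theorem. The scaling factor $c_n(p)$ is $O(1)$ uniformly in $p$: a change-of-variables $u = (P(Z_i)-p)/h_n$ combined with Assumptions \ref{basic-assumption}.\ref{2}--\ref{basic-assumption}.\ref{3} gives $c_n(p)^{-2} \to f_P(p)\sigma^2(p)\nu_2(K)$, which is bounded away from zero and infinity. A direct computation then yields $\sup_{p \in \mathcal{P}} E[g_p(W_i)^2] = 1$, so the weak-variance bound is $\sigma_n^2 \leq 1$. Because $K$ is compactly supported, the envelope can be taken as $G_n(W_i) = C h_n^{-1/2}\,|\varepsilon_i|$, and Assumption \ref{basic-assumption}.\ref{3} gives $\|G_n\|_{P,4} = O(h_n^{-1/2})$. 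The VC-type property of $\mathcal{G}_n$ is obtained by reparameterizing $g_p$ as a translation-and-scale transform of the fixed smooth function $u \mapsto u K(u)$ (six times differentiable by Assumption \ref{basic-assumption}.\ref{5}); since $\mathcal{P}$ is a compact interval, this yields a polynomial covering number bound $\sup_Q N(\mathcal{G}_n, e_Q, \tau\|G_n\|_{Q,2}) \leq (A/\tau)^v$ with $A,v$ independent of $n$.

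The second step applies the Gaussian coupling result (Corollary 2.2 of \textcite{chernozhukov2014gaussian}). It produces a tight centered Gaussian process $\tilde{B}_{n,1}$ on $\ell^\infty(\mathcal{P})$ with covariance
\begin{align*}
E\bigl[\tilde{B}_{n,1}(p)\,\tilde{B}_{n,1}(\check p)\bigr] = E\bigl[g_p(W_i)\,g_{\check p}(W_i)\bigr] - E[g_p(W_i)]E[g_{\check p}(W_i)],
\end{align*}
which, after substituting the definition of $g_p$ and noting that the product of the centered means vanishes at the stated order, reduces exactly to the covariance expression in the lemma. The theorem also supplies a random variable $\tilde{W}_{n,1} \stackrel{d}{=} \sup_{p\in\mathcal{P}} \tilde{B}_{n,1}(p)$ and a bound on $|W_n - \tilde{W}_{n,1}|$. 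Substituting $\sigma_n = O(1)$, $\|G_n\|_{P,4} = O(h_n^{-1/2})$, $\log N_n \asymp \log n$, and the sample size $n$ into the three terms of the coupling bound yields the stated rate: the $(nh_n^3)^{-1/6}\log n$ term comes from the Rademacher maximal-inequality component, $(nh_n^3)^{-1/4}\log^{5/4} n$ from the main Gaussian-approximation component, and $(nh_n^6)^{-1/4}\log^{3/2} n$ from the envelope-truncation component in which $\|G_n\|_{P,4}$ enters explicitly.

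The main obstacle is carefully tracking the $h_n$-dependence so that the three distinct rate terms emerge precisely. Because the target of interest is a first derivative, $g_p$ carries the additional factor $(P(Z_i)-p)$, which modifies the effective envelope and the kernel moment $E[g_p^2]$ in offsetting ways; the normalization by $c_n(p)$ absorbs the leading dependence and leaves $\sigma_n^2 = 1$, but the envelope still scales as $h_n^{-1/2}$. Propagating this $h_n^{-1/2}$ scaling through the fourth-moment term of the coupling bound is what produces the $(nh_n^6)^{-1/4}$ factor in the third rate, distinct from the $(nh_n^3)^{-1/4}$ factor arising in the main Gaussian-approximation term. A second subtlety is ensuring the VC property is uniform in $n$; this is handled by the reparameterization above, so that the covering numbers of $\mathcal{G}_n$ are controlled by a fixed smooth bandwidth-free class.
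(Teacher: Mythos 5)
Your proposal follows essentially the same route as the paper: both recast $W_n$ as the supremum of an empirical process over the VC-type class $\{c_n(p)\varepsilon\,K((t-p)/h_n)(t-p):p\in\mathcal{P}\}$ (yours carries the $h_n^{-3/2}$ normalization inside the class, the paper rescales at the end), verify the moment and covering-number conditions, and invoke Corollary 2.2 of \textcite{chernozhukov2012gaussian}; the covariance identification is the same, with $E[g_p]=0$ holding exactly because $E[\varepsilon_i\mid P(Z_i)]=0$. One bookkeeping caveat: with your sharper envelope $Ch_n^{-1/2}|\varepsilon_i|$ (which exploits $|P(Z_i)-p|\lesssim h_n$ on the kernel support) the three coupling terms come out as $(nh_n)^{-1/6}\log n$, $(nh_n)^{-1/4}\log^{5/4}n$, and an $n^{-1/4}h_n^{-1/2}$-order term rather than the stated ones, so your term-by-term attribution does not literally reproduce the lemma's rates --- the paper instead uses the cruder envelope $C(|\varepsilon_i|+E|\varepsilon_i|)$ with $b=O(1)$ and $\sigma=h_n^{3/2}$ --- but since your bound is strictly smaller, the stated $O_P$ rate still follows.
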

Next, we show that $\tilde{W}_{n,1}$ can be further approximated by the supremum of a stationary Gaussian process.
\begin{lemma}\label{lemma3}
	Let Assumptions \ref{ass::semipara}-\ref{covariate} hold.  For sufficiently large $n$, there is a tight Gaussian process $\tilde{B}_{n,2}$ in $\ell^{\infty} (\mathcal{P})$ with mean zero and 
	covariance function
		\begin{align*}
	E[ \tilde{B}_{n,2}(p) \tilde{B}_{n,2}(\check{p})]=\mathbf{\rho}(p-\check{p})
	\end{align*}
	for $p, \check{p} \in h_{n}^{-1}\mathcal{P}$, where we define
	\begin{align*}
	\rho(p)=\frac{\int uK(u)(u-p)K(u-p)du}{\int u^{2}K^2(u)du}.
	\end{align*}
	And there is a sequence $\tilde{W}_{n,2}$ of random variables such that $\tilde{W}_{n,2}$ is equal to $\sup_{p\in \mathcal{P}} \tilde{B}_{n,2}(h_{n}^{-1}p)$ in distribution and as $n \to \infty$:
	\begin{align*}
	|\tilde{W}_{n,1}-\tilde{W}_{n,2}|=O_P\left\{h_n^{1/2}\sqrt{\log h_n^{-(3/2
 )}} \right\}.
	\end{align*}
\end{lemma}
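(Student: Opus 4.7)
\ The plan is to leverage the approximate stationarity of $\tilde B_{n,1}$'s covariance after rescaling by $h_n^{-1}$ and then to couple with a truly stationary Gaussian process via a comparison argument in the style of \textcite{ghosal2000testing}. The three steps are: (a) construct $\tilde B_{n,2}$ by verifying that $\rho$ is positive-definite; (b) quantify the discrepancy between the covariance of $\tilde B_{n,1}$ and $\rho((\check p - p)/h_n)$; (c) combine this discrepancy with a Gaussian coupling to control the supremum difference.

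For step (a), since $g(u) = u K(u)$ is bounded and compactly supported by Assumption \ref{basic-assumption}.\ref{5}, the function $\rho(\tau) = \int g(u) g(u-\tau)\,du / \int g^2(u)\,du$ is (up to normalization) the autocorrelation of $g$, hence positive-definite; this guarantees the existence of a centered tight Gaussian process $\tilde B_{n,2}$ in $\ell^{\infty}(h_n^{-1}\mathcal P)$ with the required covariance. For step (b), the substitution $u = (P(Z_i)-p)/h_n$ with $t = (\check p - p)/h_n$ reduces the covariance of $\tilde B_{n,1}(p)\tilde B_{n,1}(\check p)$ to
\[
c_n(p) c_n(\check p)\int \sigma^2(p+h_n u) f_P(p+h_n u)\, g(u) g(u-t)\, du,
\]
up to higher-order terms. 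Combining the limiting expansion $c_n(p)^2 = (\sigma^2(p) f_P(p)\nu_2(K))^{-1} + O(h_n)$ with $\nu_2(K) = \int g^2 du$, the Lipschitz continuity of $p\mapsto \sigma^2(p)f_P(p)$ from Assumption \ref{basic-assumption}.\ref{3}, and the compactness of the integration domain, one obtains
\[
\sup_{p,\check p \in \mathcal P}\bigl| E[\tilde B_{n,1}(p)\tilde B_{n,1}(\check p)] - \rho((\check p - p)/h_n)\bigr| = O(h_n).
\]

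For step (c), this $O(h_n)$ covariance gap permits, by a standard Gaussian coupling, the realization of $\tilde B_{n,1}$ and $\tilde B_{n,2}(h_n^{-1}\cdot)$ on a common probability space such that the centered Gaussian difference process $\Delta_n(p) := \tilde B_{n,1}(p) - \tilde B_{n,2}(h_n^{-1}p)$ has pointwise variance $O(h_n)$. Applying Dudley's entropy integral (alternatively, Borell--TIS combined with the covering number of the rescaled index set $h_n^{-1}\mathcal P$, whose diameter is of order $h_n^{-1}$ and hence whose log-covering number under the canonical metric of $\tilde B_{n,2}$ is of order $\log h_n^{-3/2}$) yields
\[
\sup_{p\in\mathcal P}|\Delta_n(p)| = O_P\!\bigl(h_n^{1/2}\sqrt{\log h_n^{-3/2}}\bigr),
\]
and the inequality $|\tilde W_{n,1} - \tilde W_{n,2}| \leq \sup_p |\Delta_n(p)|$ delivers the claim. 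The principal obstacle is step (c): constructing the coupling at the stated rate requires invoking the Gaussian approximation theorem of \textcite{ghosal2000testing} (or an equivalent chaining argument) and carefully tracking how the scale $h_n^{-1}$ of the rescaled domain feeds into the covering number estimate, since the $L^2$ bound from step (b) alone is not enough to deliver the correct logarithmic factor without chaining.
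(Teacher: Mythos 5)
Your overall architecture --- pass to a stationary limit process, bound a discrepancy of order $O(h_n)$, and chain --- matches the paper's, but your step (c) contains a genuine gap that your own closing remark only partially diagnoses. A uniform bound on the difference of the two \emph{covariance functions} does not, by any standard construction, produce a joint realization of $\tilde{B}_{n,1}$ and $\tilde{B}_{n,2}(h_n^{-1}\cdot)$ for which the difference process has pointwise variance $O(h_n)$: the variance of $\tilde{B}_{n,1}(p)-\tilde{B}_{n,2}(h_n^{-1}p)$ involves the \emph{cross}-covariance $E[\tilde{B}_{n,1}(p)\tilde{B}_{n,2}(h_n^{-1}p)]$, which is a property of the coupling, not of the two marginal laws, and your step (b) gives no control over it. The paper sidesteps this entirely: it realizes both processes as images $B_n(\phi_{n,p})$ and $B_n(\psi_{n,p})$ of a \emph{single} Brownian bridge applied to two explicit kernels, where $\psi_{n,p}$ carries the normalization $\big[h_n^{3}\sigma^{2}(P(Z_i))f_P(P(Z_i))\int K^{2}(u)u^{2}du\big]^{-1/2}$ evaluated at the data point rather than at $p$; this makes $B_n(\psi_{n,p})$ \emph{exactly} stationary with covariance $\rho((p-\check{p})/h_n)$ (so no positive-definiteness verification is needed), and the quantity that must be bounded is the $L^2$ distance between the kernels, $\sup_{p}E[(\phi_{n,p}-\psi_{n,p})^{2}]=O(h_n)$, which is the first claim of Lemma \ref{lemma_3_aux} and is not implied by closeness of the covariance functions.

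Second, even granting a coupling with pointwise variance $O(h_n)$, your chaining step uses the wrong metric. To obtain $\sup_{p}|\Delta_n(p)|=O_P\big(h_n^{1/2}\sqrt{\log h_n^{-3/2}}\big)$ from Dudley's entropy bound one must integrate the entropy of $\mathcal{P}$ under the canonical metric of the \emph{difference} process $\Delta_n$ up to its diameter $O(h_n^{1/2})$; this requires an increment bound such as the paper's $e_{Q}(\delta_{n}(p),\delta_{n}(p^{\prime}))\le C\sqrt{|p-p^{\prime}|/h_{n}^{3}}$ (the second claim of Lemma \ref{lemma_3_aux}), which you never establish. Chaining instead with the canonical metric of $\tilde{B}_{n,2}$, as you suggest, gives a diameter of order one and an entropy integral of order $\sqrt{\log h_n^{-1}}$ with no $h_n^{1/2}$ prefactor, so the claimed rate would not follow. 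Both the joint construction and the modulus-of-continuity bound for $\Delta_n$ need to be supplied before your argument closes.
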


 From Lemma \ref{lemma1} to Lemma \ref{lemma3} with symmetry, there exist some positive constant $c$ such that
\begin{equation*}
     \sup_{(p,x)\in \mathcal{P}\times\mathcal{X}}\sqrt{nh_n^{3}}   \left|\frac{\widehat{MTE}(p,x)-MTE(p,x)}{\hat{s}(p)}\right|-\sup_{p\in\mathcal{P}}|\tilde{B}_{n,2}(h^{-1}_{n}p)|=o_p(n^{-c})
\end{equation*}
holds. The supremum of the studentized estimated MTE function can be approximated by the supremum of a stationary Gaussian process. Moreover, the approximation error is of a polynomial order.

Combining this with known results on the distribution of the supremum of a stationary Gaussian process, we obtain the following main theorem:
\begin{theorem}\label{theorem}
	Suppose Assumptions \ref{ass::semipara}-\ref{covariate} hold.  Then, uniformly in $t$ over any finite interval,  the following result holds:
	\begin{align}\nonumber
	&\Pr\Big(\ell_n\Big[ \sup_{(p,x)\in \mathcal{P}\times \mathcal{X}}\sqrt{nh_n^3}\left|\frac{\widehat{MTE}(p,x)-MTE(p,x)}{\hat{s}(p)} \right|-\ell_n   \Big]<t \Big)
	\label{thmeq} \\
	=& \exp\big(-2e^{-t-t^2/2\ell_n^2} \big)+o(1),
	\end{align}
	as $n\to \infty$, where $\ell_n$ is the largest solution to the following equation:
	\begin{align*}
	(b_{0}-a_{0})\cdot h_n^{-1}\sqrt{\lambda}(2\pi)^{-1}\exp(-\ell_n^2/2)=1,
	\end{align*}
	where 
	\begin{align*}
	\lambda:=-\frac{\int g(u)g^{\prime\prime}(u)du}{\int g^2(u)du}\quad \text{and} \quad g(u):=uK(u).
	\end{align*}
\end{theorem}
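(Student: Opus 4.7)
The plan is to leverage the accumulated results of Lemmas \ref{lemma1}--\ref{lemma3} together with their two-sided extension (the display immediately preceding the theorem), which jointly establish
\begin{equation*}
\sup_{(p,x)\in \mathcal{P}\times \mathcal{X}}\sqrt{nh_n^{3}}\left|\frac{\widehat{MTE}(p,x)-MTE(p,x)}{\hat{s}(p)}\right| - \sup_{p\in \mathcal{P}} |\tilde{B}_{n,2}(h_n^{-1}p)| = o_P(n^{-c})
\end{equation*}
for some $c>0$, where $\tilde{B}_{n,2}$ is the unit-variance stationary Gaussian process on $h_n^{-1}\mathcal{P}$ with covariance $\rho(\cdot)$. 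Because $\ell_n$ grows only at the logarithmic rate $\sqrt{\log h_n^{-1}}$ while this coupling error is polynomial in $n$, the discrepancy is negligible on the $1/\ell_n$ scale at which the extreme-value limit operates. Consequently, the proof reduces to deriving the asymptotic distribution of $\sup_{p\in \mathcal{P}} |\tilde{B}_{n,2}(h_n^{-1}p)|$ over the expanding interval $h_n^{-1}\mathcal{P}$ of length $(b_0-a_0)h_n^{-1}$.

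The core step is then to invoke the Piterbarg extreme-value theorem for the maximum of a stationary Gaussian field (Theorem 4.3 of \textcite{piterbarg1996asymptotic}). Two preparatory computations are needed. First, one verifies $\rho(0)=1$ directly from the definition of $\rho$. Second, using the six-fold differentiability of $K$ (Assumption \ref{basic-assumption}.\ref{5}), repeated differentiation under the integral in $\rho(\tau) = [\int uK(u)(u-\tau)K(u-\tau)\,du]/\int u^2 K^2(u)\,du$ delivers the local expansion $\rho(\tau) = 1 - \lambda \tau^2/2 + o(\tau^2)$ as $\tau \to 0$, with $\lambda = -\int g(u)g''(u)\,du / \int g^2(u)\,du$ emerging after integration by parts. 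Piterbarg's theorem then gives, uniformly in $u$ of order $\sqrt{\log h_n^{-1}}$,
\begin{equation*}
\Pr\Big(\sup_{p\in \mathcal{P}} |\tilde{B}_{n,2}(h_n^{-1}p)| \leq u \Big) = \exp\Big(-2(b_0-a_0)h_n^{-1}\frac{\sqrt{\lambda}}{2\pi}\exp(-u^2/2)\Big) + o(1),
\end{equation*}
with the factor $2$ reflecting the two-sided absolute value, since upcrossings of $+u$ and $-u$ by a symmetric stationary Gaussian process are asymptotically independent Poisson events.

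The final step substitutes $u = \ell_n + t/\ell_n$ into the exponent. Expanding $u^2/2 = \ell_n^2/2 + t + t^2/(2\ell_n^2)$, so that $\exp(-u^2/2) = \exp(-\ell_n^2/2)\exp(-t - t^2/(2\ell_n^2))$, and invoking the defining identity $(b_0-a_0)h_n^{-1}\sqrt{\lambda}/(2\pi)\cdot \exp(-\ell_n^2/2) = 1$ for $\ell_n$, the bracket collapses to $-2\exp(-t-t^2/(2\ell_n^2))$, which is exactly the right-hand side of (\ref{thmeq}). Uniformity in $t$ over compact intervals is then inherited from the uniformity of Piterbarg's approximation and the smoothness of the substitution. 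The principal technical obstacles I expect are (i) verifying that the regularity conditions of Piterbarg's theorem hold under Assumption \ref{basic-assumption}.\ref{5}, which is why six-fold differentiability of $K$ is imposed, and (ii) ensuring that the polynomial coupling errors from Lemmas \ref{lemma1}--\ref{lemma3} remain uniformly $o_P(1/\ell_n)$ after the extreme-value rescaling; this is precisely what the bandwidth calibration $\frac{1}{7}<\eta<\frac{1}{6}$ in Assumption \ref{basic-assumption}.\ref{8} is engineered to guarantee.
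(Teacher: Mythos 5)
Your proposal follows essentially the same route as the paper: couple the studentized supremum to $\sup_{p\in\mathcal{P}}|\tilde{B}_{n,2}(h_n^{-1}p)|$ via Lemmas \ref{lemma1}--\ref{lemma3}, verify that $\rho$ is smooth with $\rho(0)=1$ and second spectral moment $\lambda$, apply Piterbarg's refined approximation for the maximum of a smooth stationary Gaussian process, and substitute $u=\ell_n+t/\ell_n$ into the defining equation for $\ell_n$. The computations of $\lambda$ and the final algebra are correct and match the paper.

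There is one genuine gap. You pass from the coupling $\bigl|\sup(\cdot)-\sup_{p}|\tilde{B}_{n,2}(h_n^{-1}p)|\bigr|=o_P(n^{-c})$ to closeness of the two distribution functions by asserting that the error is ``negligible on the $1/\ell_n$ scale.'' That inference is not automatic: closeness of two random variables in probability controls the difference of their CDFs only up to the L\'evy concentration function of the limiting variable, i.e.\ you must bound $\Pr\bigl(\bigl|\sup_{p}|\tilde{B}_{n,2}(h_n^{-1}p)|-(\ell_n+t/\ell_n)\bigr|\le \varepsilon_n/\ell_n\bigr)$ uniformly in $t$, and the supremum is taken over an interval of expanding length $h_n^{-1}(b_0-a_0)$, so this is not a fixed continuous distribution. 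The paper devotes Lemma \ref{theorem4_aux} to exactly this point, combining the Gaussian anti-concentration inequality of \textcite{chernozhukov2014anti} with Dudley's entropy bound to show the concentration function is $O\bigl((\varepsilon_n/\ell_n)\sqrt{\log h_n^{-3/2}}\bigr)=o(1)$ for $\varepsilon_n=n^{-d}$. Without this (or an equivalent argument extracting a uniform Lipschitz bound on the rescaled limit CDF from the explicit Piterbarg formula), the reduction to the stationary Gaussian supremum is incomplete.
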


Theorem \ref{theorem} provides theoretical justification for our confidence band. For any $(p,x)\in\mathcal{P}\times \mathcal{X}$, we have
\begin{align*}
&\Pr\left(\sqrt{nh_n^3}\left|\frac{\widehat{MTE}(p,x)-MTE(p,x)}{\hat{s}(p)} \right|> c_{1-\alpha}\right) \\
    \leq &\Pr\left(\sup_{(p,x)\in \mathcal{P}\times \mathcal{X}}\sqrt{nh_n^3}\left|\frac{\widehat{MTE}(p,x)-MTE(p,x)}{\hat{s}(p)} \right|> c_{1-\alpha}\right) \\
    \leq & \alpha + o(1).
\end{align*}

\begin{remark}
The result in Theorem \ref{theorem} is derived from a refined approximation of the distribution of the supremum of a stationary Gaussian process $\tilde{W}_{n,2}$, as presented by \textcite{piterbarg1996asymptotic}. While the commonly used Gumbel approximation $\Pr\Big(l_n \left(|\tilde{W}_{n,2}| - l_n \right) < t \Big) = \exp\big(-2e^{-t}\big) + o(1)$ has a logarithmic rate of convergence, \textcite{piterbarg1996asymptotic} demonstrates a more accurate approximation: $\Pr\Big(l_n \left(|\tilde{W}_{n,2}| - l_n \right) < t \Big) = \exp\big(-2e^{-t - t^2/2 \ell_n^2}\big) + o(n^{-c})$, where the approximation error is of polynomial order. The correction term $-t^2/2 \ell_n^2$ reduces the critical value relative to the Gumbel distribution, resulting in uniform confidence bands that achieve better coverage while being less conservative than Gumbel-based methods.
\end{remark}

\section{Monte Carlo Experiments}\label{Monte_Carlo}

In this section, we present the results of Monte Carlo experiments. To reduce researcher discretion and facilitate comparisons with the existing literature, we base our simulation design on the framework used in the illustration of the MTE in Figures 1A and 1B of \textcite{heckman2005structural}.\footnote{
    We note that alternative Monte Carlo designs have also been considered in the recent MTE literature, including \textcite{heiler2022efficient} and \textcite{sloczynski2025abadie}. We choose the design of \textcite{heckman2005structural} because it provides a simple and transparent benchmark that directly illustrates the key features of the MTE framework while satisfying the support conditions required by our theory.}

\subsection{Data generating process}

We consider settings with one exogenous (included) covariate and one excluded instrument. 
 We generate $(X_{1},Z_{1})$ as follows: $ (X_{1},Z_{1})^{\prime}:=N(\mathbf{0}_{2},I_{2})$ 
where we denote $\mathbf{0}_{2}$ and $I_{2}$ as a $2\times1$ zero vector and a $2\times2$ identity matrix, respectively. 
$X=X_1$ is a covariate that affects potential outcomes, and $Z=(X_1, Z_1)$ is the vector of instruments that affect the propensity score. 
The potential outcomes are generated by
\begin{align*}
Y_1=0.67+0.2+0.5X_{1}+U_{0} , \text{ and } Y_0=0.67+0.2X_{1}+U_{1}.
\end{align*}
The treatment $D$ is generated by $ D=\mathbbm{1}\{Z_{1}>U_D\}$.
The vector of unobservable components $(U_{1},U_{0},U_D)^{\prime}$ is generated from $N(\mathbf{0}_{3},\Sigma)$, where $\Sigma$ depends on the design. We use the following three types of covariance matrices.
\begin{align*}
 \Sigma_{1} =& \begin{pmatrix}
                \frac{9}{400} & -\frac{9}{400}& \frac{3}{20} \\
                -\frac{9}{400} & \frac{9}{400} & -\frac{3}{20} \\
                \frac{3}{20}   & -\frac{3}{20}  & 1 
            \end{pmatrix}, 
            \quad
            \Sigma_{2} =& \begin{pmatrix}
               \frac{9}{400}  & -\frac{9}{400}  & -\frac{3}{20} \\
                -\frac{9}{400}  & \frac{9}{400} & \frac{3}{20}\\
                -\frac{3}{20}  & \frac{3}{20} & 1 
            \end{pmatrix},
            \quad  
            \Sigma_{3} =& \begin{pmatrix}
                \frac{9}{400}  & -\frac{9}{400}  & 0 \\
                -\frac{9}{400}  & \frac{9}{400}  & 0 \\
                0  & 0 & 1 
            \end{pmatrix}.
\end{align*}
In the designs $\Sigma_1$ and $\Sigma_2$, the unobservable that affects selection into the treatment ($U_D$) is correlated with $U_1$ and $U_0$. Thus, there exists endogenous selection. In contrast, in design $\Sigma_{3}$, the treatment status is independent of the potential outcomes.

We now derive the MTE function in this setting.
By the definition of normal distribution, the conditional distribution of $(U_1, U_0)$ given $V=\Phi (U_D)$, where  $\Phi(\cdot)$ is the cumulative distribution function of the standard normal distribution, can be analytically calculated. Thus, the MTE function has a closed-form representation:
\begin{align*}
    MTE(p, x_1)&=0.2+0.3 x_1+ (Cov(U_{1},U_D)-Cov(U_{0},U_D))\times \Phi^{-1}(p) .
\end{align*}
We focus on $MTE(p,E(X_1)) = MTE(p,0) = 0.2+(Cov(U_{1},U_D)-Cov(U_{0},U_D))\times \Phi^{-1}(p)  $.
 The shape of the MTE function varies across the types of $\Sigma$. $\Sigma_1$ yields an increasing MTE function, while the MTE function under $\Sigma_2$ is decreasing. In the case of $\Sigma_3$, the MTE function is constant. 

The number of Monte Carlo and bootstrap replications is 1000 and 500, respectively. The sample size is $n=3000$.

\subsection{Estimation Procedure}

We estimate the MTE function using the following steps.
\begin{enumerate}
    \item We estimate the coefficients in $\mu_D$ using the maximum likelihood estimator (probit) and construct the estimated propensity score, $P(Z)$. The MTE function can only be estimated at the intersection of the supports of the propensity scores for individuals who received the treatment and those who did not receive it. Therefore, we exclude all observations that fall outside this intersection when we estimate the MTE function. 

    \item The partially linear estimator is used to estimate regression coefficients. The estimation method is described in Section \ref{estimation}. The bandwidths for conditional expectations are set as the one minimizing the average squared error through the cross-validation method. 
Let $\hat{\beta}_{d}$ denote the coefficient estimator on $X_1$ in the regression for $Y_d$.
    \item 
    We estimate the $\Lambda (p)=\int^{p}_{0}E[U_{1}-U_{0}|V=v]dv$ and its first derivative through the local quadratic estimator with a Gaussian kernel. 
    The bandwidth is determined by adjusting the rule-of-thumb bandwidth $ h_{n} $ as described by \textcite{fan1996local}. First, we compute the standard rule-of-thumb bandwidth, and then we modify its order from $ n^{-1/7} $ to $ n^{-2/13} $, so that the bandwidth meets the requirements of Assumption \ref{8}.
    \item $MTE(p, E(X_1), E(X_2))$ is estimated as $(\hat \beta_{1} - \hat \beta_{0}) \bar X_1  + \hat \Lambda'(p)$, where $\hat \Lambda'(p)$ is the local quadratic estimator of the first derivative of $\Lambda (p)$. 
\end{enumerate}

\subsection{Methods in comparison}

We compare four confidence bands --- pointwise, Gumbel, bootstrap, and our method --- on the interval $[0.15,0.85]$. The pointwise method uses the usual critical values of 1.64 (90\%) and 1.96 (95\%). 
The Gumbel method is explained in Remark \ref{rem:gumbel}. 
For the bootstrap method, we employ a Gaussian multiplier bootstrap that perturbs the local polynomial estimator through independent random weights and approximates the distribution of the supremum of the studentized estimation error process.\footnote{We do not examine the theoretical properties of the bootstrap method. We expect its theoretical validity to follow from an extension of \textcite{chernozhukov2014gaussian} combined with our Lemma \ref{lemma1}.}

\subsection{Results}
The results of the Monte Carlo simulations are summarized in Table \ref{monte_result_1}. It specifically presents the empirical coverage frequencies and mean values of the critical values observed in the Monte Carlo experiments. 

 \begin{table}[thb]
        \centering
    \caption{Results of Monte Carlo simulations}
           \label{monte_result_1}
        \begin{tabular}{ccccc}
        \hline
             & CP (90\%) & Mean Crit. Val. (90\%) & CP (95\%) &  Mean Crit. Val. (95\%) \\
             \hline 
    \multicolumn{5}{c}{Design: $\Sigma = \Sigma_1$. The mean bandwidth is 0.054} \\
            Pointwise & 0.212 & 1.65 & 0.427  & 1.96 \\
            Gumbel &  0.987  & 3.54 & 0.998 & 4.08 \\
            Bootstrap & 0.911  & 2.89 & 0.963 & 3.15 \\ 
            Ours  & 0.889 & 2.78 & 0.951 & 3.03 \\ 
            \hline
    \multicolumn{5}{c}{Design: $\Sigma = \Sigma_2$. The mean bandwidth is 0.055} \\
            Pointwise & 0.223  & 1.65  & 0.430 & 1.96 \\
            Gumbel & 0.985  & 3.54  & 0.999 & 4.08 \\
            Bootstrap & 0.905 & 2.89 & 0.957 & 3.16\\ 
            Ours  & 0.889 & 2.78  & 0.942 & 3.03 \\ 
            \hline
    \multicolumn{5}{c}{Design: $\Sigma = \Sigma_3$.  The mean bandwidth is 0.056} \\
            Pointwise &  0.206 & 1.65  & 0.434 & 1.96 \\
            Gumbel & 0.990 & 3.55  & 0.999 & 4.09 \\
           Bootstrap & 0.885 & 2.88  & 0.957 & 3.15\\ 
            Ours  &0.855 &2.77 & 0.934 & 3.02 \\ 
            \hline \\
        \end{tabular}

\begin{minipage}{\linewidth}

    Note: This table shows the empirical coverage and critical values for pointwise, Gumbel, bootstrap, and our method. The reported critical values and mean bandwidths are rounded to two and three decimal places, respectively.
\end{minipage}

\end{table}

The confidence band constructed by our method contains the true MTE function with a nominal coverage probability, and the maximum error is 2.5\% in all cases except cases with $\Sigma =\Sigma_3$ (constant MTE).
The empirical coverage probabilities of the pointwise confidence band are much lower than the nominal levels. This is not surprising, because the pointwise band is designed to cover the function at a single point rather than over a range, making it unsuitable for assessing the uncertainty of the estimated function.
The confidence bands based on the Gumbel distribution attain the nominal coverage probabilities. However, this method is conservative and may not be informative.

In cases with $\Sigma =\Sigma_3$, where the slope of the true MTE function is zero, all methods exhibit slightly lower precision relative to cases with $\Sigma= \Sigma_1$ and $\Sigma_2$. One possible explanation is that the estimation procedure is based on a Taylor approximation. When the true first derivative is close to zero, the approximation becomes more sensitive to sampling fluctuations, leading to less stable finite-sample performance.

The 95\% critical value given by our method is approximately 3.03. As expected, it is larger than the conventional pointwise approach (1.96). However, this value remains below the critical value from the Gumbel distribution, which is around 4.08. Interestingly, the critical values do not vary significantly across different designs.

We next compare our method with the bootstrap procedure. The empirical coverage probabilities and critical values of the two methods are similar at both the 90\% and 95\% confidence levels across all designs. For example, the 95\% critical values from the bootstrap procedure are around 3.15, while ours are around 3.03. Although the bootstrap method attains slightly better empirical coverage probabilities in some cases, the differences between the two procedures are small. Overall, these results indicate that our method achieves statistical performance comparable to the bootstrap approach. Importantly, our method is roughly ten times faster than the bootstrap method (Table \ref{monte_result_time} in Appendix \ref{sec_monte_time}).

\section{Empirical Application}\label{empirical_application}

We illustrate the use of our proposed method through an empirical application that revisits \textcite{dinkelman2011effects}, which studies the rural electrification program. Using the MTE framework, we analyze how the effect of electrification on the female employment change varies across communities with different propensities to receive electricity access. We then construct our uniform confidence band for the MTE function and compare it with alternative inference methods.

We use the data from \textcite{dinkelman2011effects}, which examines the labor market effects of rural electrification in South Africa. The outcome variable $Y_i$ is the change in the female employment share between 1996 and 2001. The treatment variable $D_i$ is an indicator for whether a community was connected to the electricity grid during the study period. Following \textcite{dinkelman2011effects}, the vector $X_i$ includes a rich set of community characteristics.\footnote{Specifically, it includes household density, poverty rates, demographic composition, educational attainment, and geographic measures of infrastructure access; they are measured prior to electrification using information from the 1996 Census. In addition, we include changes in access to water and sanitation services.}

For the potential outcomes, we consider the partially linear specification:
\[
Y_{ji}=X_i'\beta_j+U_{ji},
\qquad j\in{0,1},
\]
where $X_i$ denotes the vector of baseline covariates and $U_{ji}$ is an unobserved error term. The MTE framework allows the treatment effect to vary across communities by assuming that $(U_{1i}, U_{0i})$ depend on the latent selection variable.

To identify the propensity score, we use the geographic instrument proposed by \textcite{dinkelman2011effects}. Specifically, the instrument is the average land gradient of a community, which affects the cost of extending the electricity network but is plausibly unrelated to future labor market outcomes except through electrification. We estimate the propensity score using a probit model of the form:
\[
D_i=\mathbbm{1}\{X_i'\gamma+\delta Z_i\ge V_i\},
\]
where $Z_i$ denotes the average land gradient and $V_i\sim N(0,1)$. Observations with missing values are excluded from the analysis.

\begin{figure}[thb]
  \centering
        \includegraphics[scale=0.45]{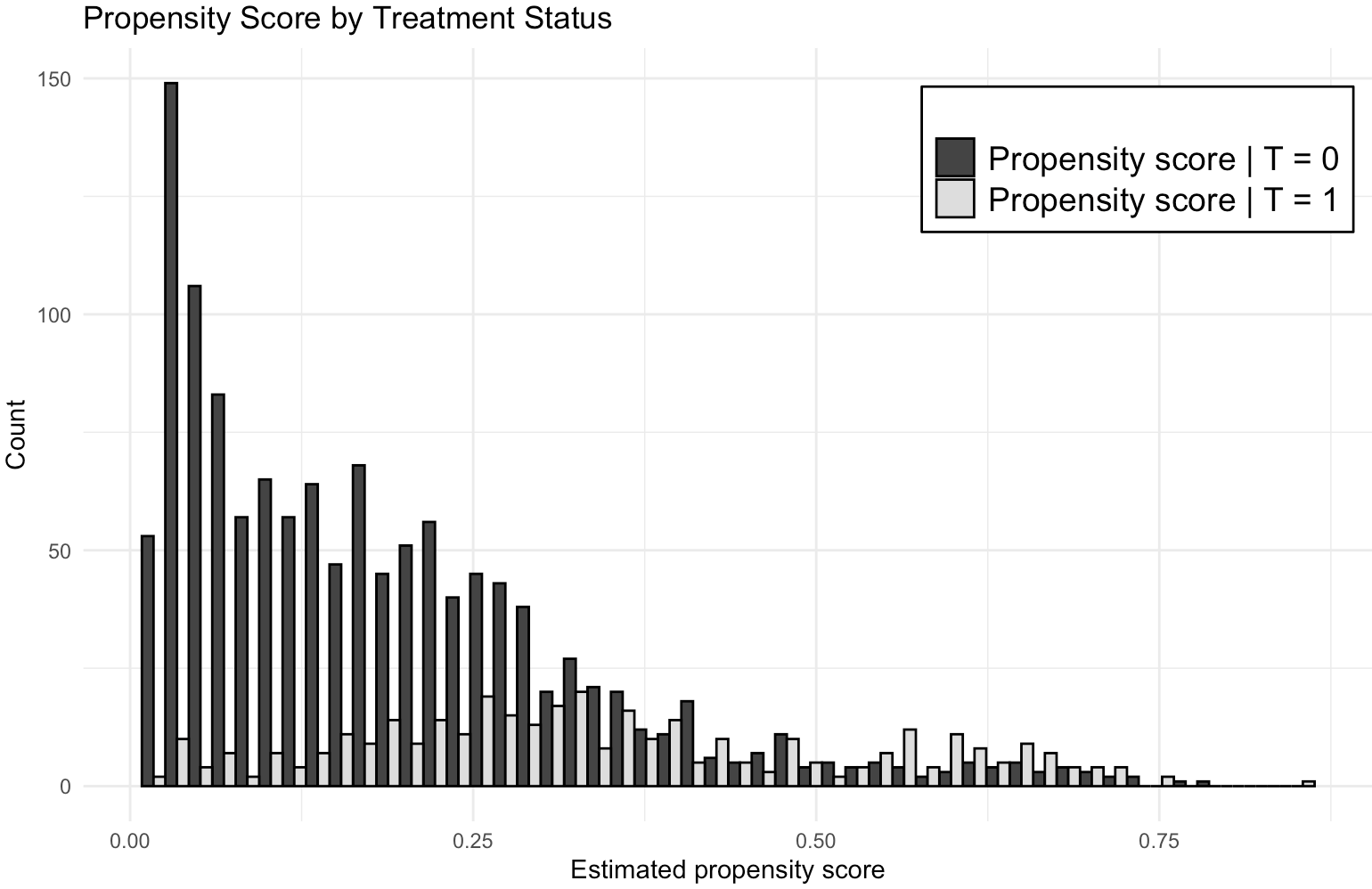}
        \caption{This figure presents histograms of the estimated propensity scores for the treated and untreated communities. The propensity scores are estimated using the probit model. 
        }
        \label{support of propensity score}
\end{figure}  

Figure \ref{support of propensity score} presents histograms of the estimated propensity scores for the treated and untreated communities. The average land gradient is continuously distributed across communities, and consequently, the estimated propensity scores exhibit substantial variation and span a broad range. 
The figure also shows considerable overlap between the treated and untreated groups across most of the propensity score distribution; Appendix \ref{sec:add_emp} confirms that substantial overlap remains even after conditioning on covariates.

We estimate the MTE function using the local quadratic estimator. The bandwidth $h_n$ is selected using the rule-of-thumb procedure based on \textcite{fan1996local}, and we employ the Gaussian kernel throughout the analysis. To improve estimation accuracy near the boundaries, the local polynomial estimator is computed over the support region $[0.02,0.858]$, which corresponds to the intersection of the estimated propensity-score supports for the treated and untreated communities. This region contains 1,647 observations, and the estimated propensity score takes a distinct value for each observation.

For statistical inference, we focus on the interval $[0.05,0.50]$. Although the MTE is estimable over a wider range, the precision of nonparametric estimation deteriorates near the boundaries of the propensity score support due to limited local information. Restricting attention to the interior region allows us to conduct inference where overlap between treated and untreated communities remains substantial, and the estimator is most reliable.

\begin{figure}[thb]
  \centering
        \includegraphics[scale=0.4]{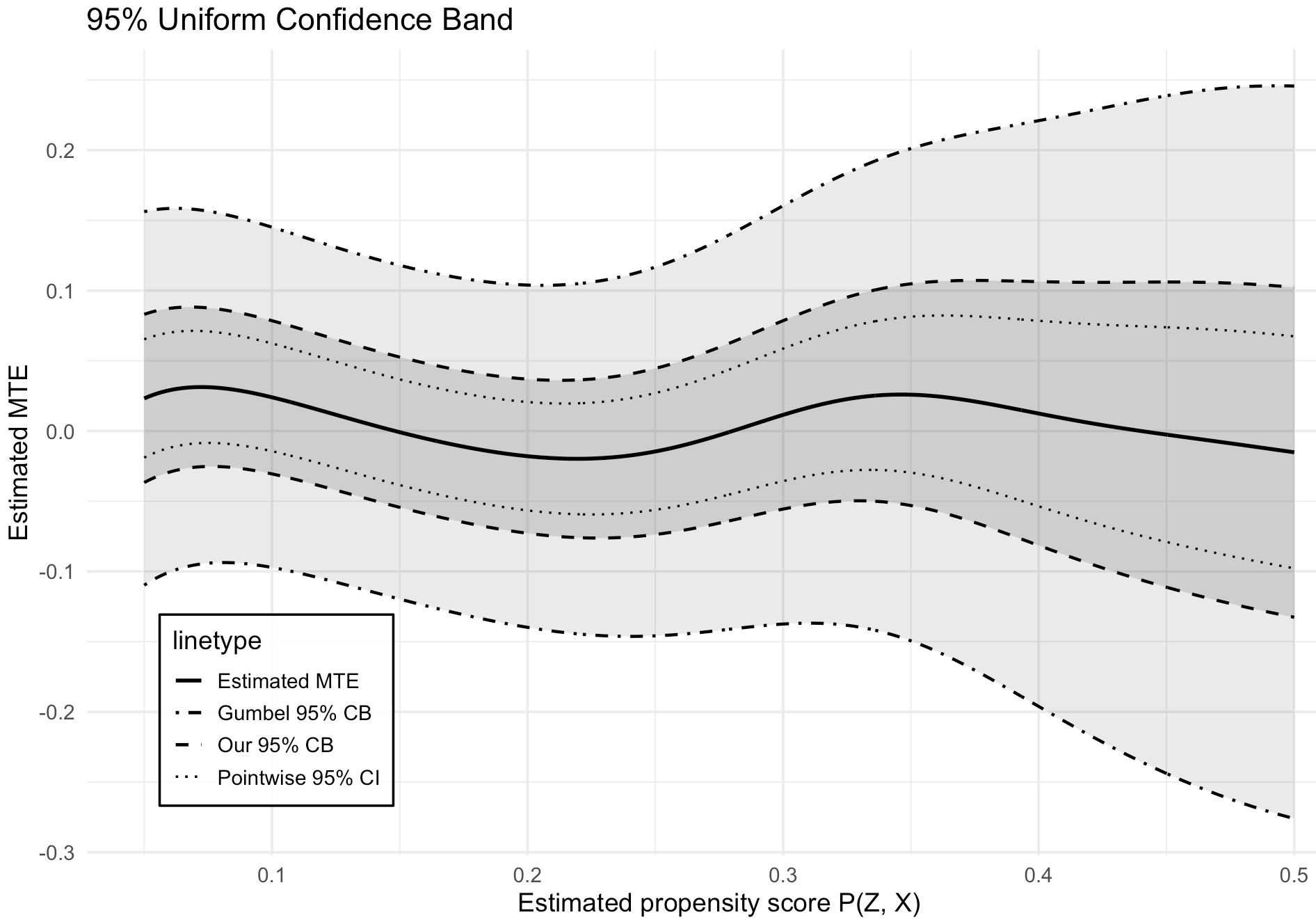}
        \caption{This figure shows the MTE with three types of confidence bands. The MTE estimates the causal effects of the rural electrification program. The calculated bandwidth is $0.070$.}
        \label{applied_result}
\end{figure}   

As shown in Figure \ref{applied_result}, the estimated MTE exhibits a mild downward trend over much of the propensity score range. However, the uniform confidence band indicates considerable uncertainty regarding the shape of the treatment effect function: the data do not provide sufficient evidence to establish monotonicity of the MTE, and several plausible treatment-effect patterns remain consistent with the data. Compared with the pointwise confidence interval, our uniform confidence band is necessarily wider, reflecting the additional uncertainty associated with simultaneous inference over the entire propensity score range. Nevertheless, it remains substantially narrower and more informative than the band based on the Gumbel approximation, which is sufficiently wide to accommodate a broad class of treatment effect functions while still being valid for uniform inference. Overall, our procedure provides a useful compromise between statistical validity and informativeness in empirical applications.

\section{Conclusion}\label{conclusion}
 This paper proposes a method to construct a uniform confidence band for the marginal treatment effect function. To estimate the MTE, we assume that the potential outcomes are linear in the covariates and propose a semiparametric estimator. Our uniform confidence band relies on the approximation of the supremum of a Gaussian process combined with the Gaussian approximation of empirical processes. Empirical researchers are recommended to add our easy-to-implement uniform confidence bands when reporting MTE function estimation results.

 Several avenues for future research emerge from this study. While our paper focuses on local quadratic estimation of the MTE function, alternative approaches such as sieve or series estimation, as employed by \textcite{hoshino2022estimating}, warrant exploration. Developing methods for uniform confidence bands for the MTE function estimated via sieve methods presents an intriguing challenge, as it requires mathematical techniques distinct from those employed here. Relatedly, it is also a standard practice to use parametric polynomial models in MTE function estimation, as demonstrated by \textcite{brinch2017beyond} and \textcite{sasaki2024welfare}. Extending methodologies to accommodate such parametric approaches could benefit practitioners. We may also try to formally establish the validity of bootstrap procedures, extending the analysis of \textcite{chernozhukov2014anti}. 
We may also be able to improve the rate at which we approximate errors. \textcite{CattaneoYu2025} present more accurate methods for establishing the rate of Gaussian approximations. However, extending these findings to construct confidence intervals and determine the rate of coverage errors remains a significant challenge. We leave these extensions to future research. 

\paragraph{Conflict of Interest statement:} The authors report there are no competing interests to declare.

\paragraph{Use of Generative AI Statement:} During the preparation of this work, the authors used Grammarly to improve language. After utilizing this tool, the authors reviewed and edited the content as needed and take full responsibility for the final content.

\paragraph{Data Availability Statement:} The data used in the empirical application are available on the replication package of \textcite{dinkelman2011effects} (\url{https://www.openicpsr.org/openicpsr/project/112474/version/V1/view}).

\printbibliography[heading=subbibliography]

\end{refsection}

\newpage

\renewcommand{\thepage}{S.\arabic{page}} 
\setcounter{page}{1}

\section*{Online Supplemental Materials for ``A Uniform Confidence Band for the Marginal Treatment Effect Function'' by Tsuda, Jin, and Okui}

\begin{refsection}

\appendix

\renewcommand{\theequation}{A.\arabic{equation}} 
\def\thesection{\Alph{section}}
\setcounter{equation}{0}

\section{Definition of VC-type classes}
\label{Appendix_VC}

This appendix defines VC-type classes used in Assumption \ref{7-2}. We first introduce various notations and then give the definition.

For any function $f$, let $\|f\|_{Q,2}$ denote $(\int |f|^2d Q)^{1/2}$. We use the notation $e_{Q}$ as the $L^{2}(Q)$-semimetric, i.e. $e_{Q}(f,g):=\|f-g\|_{Q,2}$. Let $(T,d)$ denote the semimetric space. For $\varepsilon>0$, we define an $\varepsilon$-net of $T$, $T_{\varepsilon}$, as a subset of $T$ such that for every $t\in T$ there exists a point 
$t_{\varepsilon}\in T_{\varepsilon}$ with $d(t, t_{\varepsilon}) < \varepsilon$. The $\varepsilon$-covering number $N(T,d,\varepsilon)$ is the infimum of the cardinality of $T_{\varepsilon}$, namely, $N(T,d,\varepsilon):=\inf\{\text{Card}(T_{\varepsilon}) : T_{\varepsilon}  \text{ is an }  \varepsilon\text{-net of T}\}$.  Unless otherwise stated, $c>0$ refers to universal constants whose values may change from place to place. 

We now define VC-type classes. 
\begin{dfn}[Definition of VC-type classes]\label{VC-type-def}
		Let $\mathcal{F}$ be a class of measurable functions on a measurable space $(S,\mathcal{S})$, to which a measurable envelope $F$ is attached. We say that $\mathcal{F}$ is VC-type with envelope $F$ if there are constants $A$, $v>0$ such that $\sup_QN(\mathcal{F},e_{Q}, \epsilon\|F\|_{Q,2} )\leq (A/\epsilon)^v$ for all $0<\epsilon\leq 1$, where the supremum is taken over all finitely discrete probability measure on $(S,\mathcal{S})$.
\end{dfn}

\section{Proof of Lemma \ref{lemma1}}\label{proof_unif_conv}

\begin{proof}[Proof of Lemma \ref{lemma1}]
 Let $\Gamma(\hat{V})$, $\Omega(\hat{V})$, $\widehat{\tilde{\mathbf{Y}}}$, $\tilde{\mathbf{Y}}$ and $e_{2}$ denote
\begin{align*}
\Gamma(\hat{V})=&\begin{bmatrix}
1 & (\hat{P}(Z_{1})-p) & (\hat{P}(Z_{1})-p)^2 \\
\vdots & \vdots & \vdots \\
1 & (\hat{P}(Z_{n})-p) & (\hat{P}(Z_{n})-p)^2
\end{bmatrix},& \\
\Omega(\hat{V})=&diag\left[K\left(\frac{\hat{P}(Z_{1})-p}{h_{n}}\right), \cdots, K\left(\frac{\hat{P}(Z_{n})-p}{h_{n}}\right)\right],& \\
\widehat{\tilde{\mathbf{Y}}}=&\begin{bmatrix}
\hat{\tilde{Y}}_{1} \\
\vdots \\
\hat{\tilde{Y}}_{n}
\end{bmatrix}, \quad
\tilde{\mathbf{Y}}=\begin{bmatrix}
\tilde{Y}_{1} \\
\vdots \\
\tilde{Y}_{n}
\end{bmatrix}, \quad  e_{2}=\begin{bmatrix}
0\\
1 \\
0
\end{bmatrix},&
\end{align*}
respectively.
We define $\hat{\tilde{Y}}_{i}$ and $\tilde{Y}_{i}$ as $Y_{i}-\hat{\beta}_{0}^\prime X_{i}+(\widehat{\beta_{1}-\beta_{0}})^\prime X_{i}\hat{P}(Z_{i})$ and $Y_{i}-\beta_{0}^\prime X_{i}+(\beta_{1}-\beta_{0})^\prime X_{i}P(Z_{i})$, respectively. Under the model assumptions, we obtain
\begin{align*}
&\widehat{MTE}(p,x)-MTE(p,x) \\
=&\left((\widehat{\beta_{1}-\beta_{0}})^{\prime}x+e^{\prime}_{2}\big[\Gamma^{\prime}(\hat{V})\Omega(\hat{V})\Gamma(\hat{V})\big]^{-1}\Gamma^{\prime}(\hat{V})\Omega(\hat{V})\widehat{\tilde{\mathbf{Y}}}\right)-\left((\beta_{1}-\beta_{0})^{\prime}x+\frac{\partial E[\tilde{Y}|P(Z)=p]}{\partial P(Z)}\right)  \\
=&((\widehat{\beta_{1}-\beta_{0}})-(\beta_{1}-\beta_{0}))^{\prime}x+e^{\prime}_{2}\big[\Gamma^{\prime}(\hat{V})\Omega(\hat{V})\Gamma(\hat{V})\big]^{-1}\Gamma^{\prime}(\hat{V})\Omega(\hat{V})\widehat{\tilde{\mathbf{Y}}}-\frac{\partial E[\tilde{Y}|P(Z)=p]}{\partial P(Z)} \\
     =&((\widehat{\beta_{1}-\beta_{0}})-(\beta_{1}-\beta_{0}))^{\prime}x \\
     +&e^{\prime}_{2}\big[\Gamma^{\prime}(\hat{V})\Omega(\hat{V})\Gamma(\hat{V})\big]^{-1}\Gamma^{\prime}(\hat{V})\Omega(\hat{V})(\widehat{\tilde{\mathbf{Y}}}-\tilde{\mathbf{Y}}) \\
     +&e^{\prime}_{2}\big[\Gamma^{\prime}(\hat{V})\Omega(\hat{V})\Gamma(\hat{V})\big]^{-1}\Gamma^{\prime}(\hat{V})\Omega(\hat{V})\tilde{\mathbf{Y}}-\frac{\partial E[\tilde{Y}|P(Z)=p]}{\partial P(Z)} .
\end{align*}
 For any $i$, from \eqref{conditional_Y_estimation}, it follows that
\begin{align*}
    \tilde{Y}_{i}=E[\tilde{Y}_{i}|P(Z)=P(Z_{i})]+\varepsilon_{i},
\end{align*}
where $E[\varepsilon_{i}|P(Z)]=0$. A straightforward application of the Taylor expansion yields
\begin{align*}
    E[\tilde{Y}_{i}|P(Z)=P(Z_{i})]&=E[\tilde{Y}_{i}|P(Z)=p] \\
    &+\frac{\partial E[\tilde{Y}_{i}|P(Z)=p]}{\partial P(Z)}(P(Z_{i})-\hat{P}(Z_{i})) \\
    &+\frac{\partial E[\tilde{Y}_{i}|P(Z)=p]}{\partial P(Z)}(\hat{P}(Z_{i})-p) \\
     &+\frac{\partial^2 E[\tilde{Y}_{i}|P(Z)=p]}{\partial (P(Z))^2}(\hat{P}(Z_{i})-p)^2 \\
     &+2\frac{\partial^2 E[\tilde{Y}_{i}|P(Z)=p]}{\partial (P(Z))^2}(\hat{P}(Z_{i})-p)(P(Z_{i})-\hat{P}(Z_{i})) \\
     &+\frac{\partial^2 E[\tilde{Y}_{i}|P(Z)=p]}{\partial (P(Z))^2}(P(Z_{i})-\hat{P}(Z_{i}))^2 \\
    &+r_{i},
\end{align*}
where $r_{i}$ is a Taylor series reminder term, i.e. for $\lambda_{i}\in(0,1)$,
\begin{equation*}
    r_{i}=\frac{\partial^3 E[\tilde{Y}_{i}|P(Z)=p+\lambda_{i}(P(Z_{i})-p)]}{\partial (P(Z))^3}(P(Z_{i})-p)^3.
\end{equation*}
For each $i$, we set $P_{i}$ as
\begin{align*}
    P_{i}=&\frac{\partial E[\tilde{Y}_{i}|P(Z)=p]}{\partial P(Z)}(P(Z_{i})-\hat{P}(Z_{i}))+2\frac{\partial^2 E[\tilde{Y}_{i}|P(Z)=p]}{\partial (P(Z))^2}(\hat{P}(Z_{i})-p)(P(Z_{i})-\hat{P}(Z_{i})) \\
    +&\frac{\partial^2 E[\tilde{Y}_{i}|P(Z)=p]}{\partial (P(Z))^2}(P(Z_{i})-\hat{P}(Z_{i}))^2 .
\end{align*}
From the above definition, we obtain
\begin{align*}
    \tilde{\mathbf{Y}}=\Gamma(\hat{V})\begin{bmatrix}
E[\tilde{Y}_{i}|P(Z)=p] \\
\frac{\partial E[\tilde{Y}_{i}|P(Z)=p]}{\partial P(Z)} \\
\frac{\partial^2 E[\tilde{Y}_{i}|P(Z)=p]}{\partial (P(Z))^2}
\end{bmatrix}
+P+r+\varepsilon,
\end{align*}
where we define 
\begin{equation*}
\tilde{\mathbf{Y}}=\begin{bmatrix}
\tilde{Y}_{1} \\
\vdots  \\
\tilde{Y}_{n}
\end{bmatrix}, 
\quad
P=
\begin{bmatrix}
P_1 \\
\vdots \\
P_n
\end{bmatrix},
\quad
    r=
\begin{bmatrix}
r_{1} \\
\vdots \\
r_{n}
\end{bmatrix},
\ \text{and} \ \varepsilon=
\begin{bmatrix}
\varepsilon_{1} \\
\vdots \\
\varepsilon_{n}
\end{bmatrix},
\end{equation*}
respectively.

Following the approach used in the proof of Theorem 4 in \textcite{carneiro2009estimating}, we establish
\begin{equation}\label{CL_0}
\begin{split}
&\widehat{MTE}(p,x)-MTE(p,x)  \\
     =&((\widehat{\beta_{1}-\beta_{0}})-(\beta_{1}-\beta_{0}))^{\prime}x  \\
     +&e_2^{\prime}\big[\Gamma^{\prime}(\hat{V})\Omega(\hat{V})\Gamma(\hat{V})\big]^{-1}\Gamma^{\prime}(\hat{V})\Omega(\hat{V})(\widehat{\tilde{\mathbf{Y}}}-\tilde{\mathbf{Y}}) \\     
     +&e_2^{\prime}\big[\Gamma^{\prime}(\hat{V})\Omega(\hat{V})\Gamma(\hat{V})\big]^{-1}\Gamma^{\prime}(\hat{V})\Omega(\hat{V})\tilde{\mathbf{Y}}-e_2^{\prime}\big[\Gamma^{\prime}(\hat{V})\Omega(\hat{V})\Gamma(\hat{V})\big]^{-1}\Gamma^{\prime}(\hat{V})\Omega(\hat{V})(\tilde{\mathbf{Y}}-P-r-\varepsilon) \\
      =&((\widehat{\beta_{1}-\beta_{0}})-(\beta_{1}-\beta_{0}))^{\prime}x \\
     +&e_2^{\prime}\big[\Gamma^{\prime}(\hat{V})\Omega(\hat{V})\Gamma(\hat{V})\big]^{-1}\Gamma^{\prime}(\hat{V})\Omega(\hat{V})(\widehat{\tilde{\mathbf{Y}}}-\tilde{\mathbf{Y}}) \\  
+&e_2^{\prime}\big[\Gamma^{\prime}(\hat{V})\Omega(\hat{V})\Gamma(\hat{V})\big]^{-1}\Gamma^{\prime}(\hat{V})\Omega(\hat{V})P  \\  
     +&e_2^{\prime}\big[\Gamma^{\prime}(\hat{V})\Omega(\hat{V})\Gamma(\hat{V})\big]^{-1}\Gamma^{\prime}(\hat{V})\Omega(\hat{V})\varepsilon+e_2^{\prime}\big[\Gamma^{\prime}(\hat{V})\Omega(\hat{V})\Gamma(\hat{V})\big]^{-1}\Gamma^{\prime}(\hat{V})\Omega(\hat{V})r. 
\end{split}
\end{equation}
Through the results for nonparametric estimation with a generated regressor, we achieve the following convergence rate:
\begin{align}
     e_2^{\prime}\big[\Gamma^{\prime}(\hat{V})\Omega(\hat{V})\Gamma(\hat{V})\big]^{-1}\Gamma^{\prime}(\hat{V})\Omega(\hat{V})(\widehat{\tilde{\mathbf{Y}}}-\tilde{\mathbf{Y}})&=O_{p}(h_{n}^{2}), \label{CL_S-1} \\   
     e_2^{\prime}\big[\Gamma^{\prime}(\hat{V})\Omega(\hat{V})\Gamma(\hat{V})\big]^{-1}\Gamma^{\prime}(\hat{V})\Omega(\hat{V})P &=O_{p}(h_{n}^{2}), \label{CL_S-2} \\  
     e_2^{\prime}\big[\Gamma^{\prime}(\hat{V})\Omega(\hat{V})\Gamma(\hat{V})\big]^{-1}\Gamma^{\prime}(\hat{V})\Omega(\hat{V})r&=O_{p}(h_{n}^{2}), \label{CL_2} \\
e_2^{\prime}\big[\Gamma^{\prime}(\hat{V})\Omega(\hat{V})\Gamma(\hat{V})\big]^{-1}\Gamma^{\prime}(\hat{V})\Omega(\hat{V})\varepsilon&=e_2^{\prime}\big[\Gamma^{\prime}(V)\Omega(V)\Gamma(V)\big]^{-1}\Gamma^{\prime}(V)\Omega(V)\varepsilon \notag \\
&+o_{p}\left(n^{-c}\sqrt{\frac{\log n}{nh^{3}_{n}}}\right). \label{CL_1}
\end{align}
Proofs of all the convergence results are given in Appendix \ref{proof_proof_unif_conv}. From Assumptions \ref{9} and \ref{10}, we know that there exists $\tilde{s}(p)=\lambda s_n(p)+(1-\lambda)\hat{s}(p), \lambda\in (0,1)$ such that
	$\big[s_n^2(p)\big]^{-1/2}=\big[\hat{s}^2(p)\big]^{-1/2}-\frac{1}{2}\big[\tilde{s}^2(p)\big]^{-3/2}(s_n^2(p)-\hat{s}^2(p) ) $, which implies 
	\begin{equation}\label{sd-est}
	    \sup_{p\in \mathcal{P}}\left|\frac{1}{s_n(p)}-\frac{1}{\hat{s}(p)}\right|=O_p(n^{-c}).
	\end{equation}
It holds from \eqref{CL_0} to \eqref{sd-est} that
\begin{align*}
&\sup_{(p,x)\in \mathcal{P}\times\mathcal{X}}\sqrt{nh_n^{3}}&& \left| \frac{\widehat{MTE}(p,x)-MTE(p,x)}{\hat{s}(p)}\right.\\
&&&-\left.
\frac{1}{nh_{n}^{3}f_{P}(p)\mu_{2}(K)}\sum_{i=1}^{n}\varepsilon_{i}K\left(\frac{P(Z_{i})-p}{h_{n}}\right)(P(Z_{i})-p) \right|\\
\leq &\sup_{(p,x)\in \mathcal{P}\times\mathcal{X}}\sqrt{nh_n^{3}}&& 
 \left| \frac{\widehat{MTE}(p,x)-MTE(p,x)}{\hat{s}(p)}-\frac{\widehat{MTE}(p,x)-MTE(p,x)}{s_n(p)} \right|\\
+&\sup_{(p,x)\in \mathcal{P}\times\mathcal{X}}\sqrt{nh_n^{3}}&&\left|\frac{\widehat{MTE}(p,x)-MTE(p,x)}{s_n(p)}\right.\\
&&&-\left.\frac{1}{nh_{n}^{3}f_{P}(p)\mu_{2}(K)s_n(p)}\sum_{i=1}^{n}\varepsilon_{i}K\left(\frac{P(Z_{i})-p}{h_{n}}\right)(P(Z_{i})-p) \right|\\
=&O_p(n^{-c}).&&
\end{align*}
\end{proof}

\renewcommand{\theequation}{C.\arabic{equation}} 
\setcounter{equation}{0}

\section{Proof of Lemma \ref{lemma2}}\label{proof_emp_gauss}

\begin{proof}[Proof of Lemma \ref{lemma2}]
	
	We closely follow the proof of Proposition 3.2 of \textcite{chernozhukov2014gaussian}.
	Let $\mathbb{V}=(0,1)$.	For given $p \in \mathcal{P}$ and $h_{n}>0$, define 
	\begin{align*}
	f_{p,h_{n}}(\varepsilon,t):=c_{n}(p)K\left(\frac{t-p}{h_{n}}\right)\varepsilon (t-p), (\varepsilon,t)\in \mathbb{\varepsilon} \times\mathbb{V},
	\end{align*} 
	where we define $c_{n}(p)$ as follows:
	\begin{equation*}
	    c_{n}(p):=\left\{\frac{1}{h_n^{3}} E\left[ \varepsilon_i^2K^2\left(\frac{P(Z_i)-p}{h_n}\right)(P(Z_i)-p)^2\right]  \right\}^{-1/2}.
	\end{equation*}
	We consider the class of functions $\mathcal{F}_n=\{f_{p,h_n}-E[f_{p,h_n}(\varepsilon_i,P(Z_{i}))]: p\in \mathcal{P} \}$.
Let $Z_n=\sup_{f\in \mathcal{F}_n}\mathbb{G}_nf$, where $\mathbb{G}_nf$ denotes the empirical process indexed by $\mathcal{F}_n$.
	
Under Assumptions \ref{2}-\ref{covariate}, the following conditions are satisfied:
	
	\begin{enumerate}
		\item 
		$\sup_{p\in (0,1)} E[|\varepsilon_{i}|^4|P(Z_{i})=p]<\infty$ holds.
		\item 
		$K(\cdot)$ is a bounded and continuous kernel function on $\mathbb{R}$ , with the function class $\mathcal{K}:= \{t\mapsto (ht +v)K(ht +v): h>0, v\in \mathbb{R}\}$ of VC type and an envelope $ \|K\|_\infty$. 
		\item 
		The distribution of $P(Z_{i})$ has a bounded Lebesgue density on its support.
		\item 
		$h_n \to 0$ and $\log(1/h_n)=O(\log n)$ as $n\to \infty$.
		\item The constant
		$C_{\mathcal{V}}:= \sup_{n\ \geq1}\sup_{p\in \mathcal{P}} |c_n(p)|$ is finite, and  for every fixed $n\geq 1$ and for every $p_m\in \mathcal{P} \to p\in \mathcal{P}$ pointwise, $c_n(p_m) $ converges to $c_n(p)$.		
	\end{enumerate}

Note that $|f_{p,h}-E[f_{p,h}(\varepsilon_i,P(Z_{i}))]| \leq C_{\mathcal{V}} \|K\|_\infty (|\varepsilon|+E[|\varepsilon_i|])$. We thus use
\[
F(\varepsilon, p) \coloneqq C_{\mathcal{V}} \|K\|_\infty (|\varepsilon|+E[|\varepsilon_i|]) \]
as an envelope of $\mathcal{F}_n$. 
As in equation (31) of \textcite{chernozhukov2014gaussian}, we can show that there exist constants $A, r>0$ such that  
\begin{align*}
\sup_QN(\mathcal{F}_n,e_{Q}, u \| F \|_{Q,2}) \leq (A/u)^r, \ 0< \forall u <1, \forall n\geq 1.
\end{align*}
Hence, for every $n\geq 1$, it follows from Lemma 2.1 of \textcite{chernozhukov2014gaussian}  that $\mathcal{F}_n$ is pre-Gaussian and there exists a tight Gaussian random variable $G_n$ in $\ell^\infty(\mathcal{F}_n)$ with mean zero and covariance function 
\begin{align*}
E[G_n(f)G_n(\check{f})]=Cov (f(\varepsilon_i, P(Z_{i})),\check{f}(\varepsilon_i, P(Z_{i}))), \  f,\check{f}\in \mathcal{F}_n.
\end{align*}

Let $M=\sup_{p\in (0,1)} E[\varepsilon_{i}^4|P(Z_{i})=p]$. A straightforward calculation gives
\begin{align*}
E[|f_{p,h_n}(\varepsilon_i, P(Z_{i}))-E[f_{p,h_n}(\varepsilon_i,P(Z_{i}))]|^2 ]&\lesssim E[|f_{p,h_n}(\varepsilon_i, P(Z_{i}))|^2]\\
&\leq (1+M)C_{\mathcal{V}}^2\|f_{P}\|_\infty h_n^{3}\int_{\mathbb{R}}|K(t)t|^2dt,\\
E[|f_{p,h_n}(\varepsilon_i, P(Z_{i}))-E[f_{p,h_n}(\varepsilon_i,P(Z_{i}))]|^3 ]&\lesssim C E[|f_{p,h_n}(\varepsilon_i, P(Z_{i}))|^3]\\
&\leq (1+M)C_{\mathcal{V}}^3\|f_{P}\|_\infty h_n^{4}\int_{\mathbb{R}}|K(t)t|^3dt,\\
E[|f_{p,h_n}(\varepsilon_i, P(Z_{i}))-E[f_{p,h_n}(\varepsilon_i,P(Z_{i}))]|^4 ]&\lesssim C E[|f_{p,h_n}(\varepsilon_i, P(Z_{i}))|^4]\\
&\leq (1+M)C_{\mathcal{V}}^4\|f_{P}\|_\infty h_n^{5}\int_{\mathbb{R}}|K(t)t|^4dt.
\end{align*}
Then, by Corollary 2.2 of \textcite{chernozhukov2014gaussian} with parameters $\gamma=\gamma_n=(\log n)^{-1}$, $b=O(1)$ and $\sigma=\sigma_n=h_n^{3/2}$, there exists a sequence $\tilde{Z}_n$ of random variables such that $\tilde{Z} \eqd \sup_{f\in \mathcal{F}_n}G_nf$ and as $n \to \infty$,
\begin{align*}
|Z_n-\tilde{Z}_n|=O_P(n^{-1/6}h_n\log n +n^{-1/4}h_{n}^{3/4}\log^{5/4}n+n^{-1/4}\log^{3/2}n).
\end{align*}
Define $\tilde{B}_{n,1}(p)=h_n^{-3/2}G_n(f_{p,h_n})$, $p\in \mathcal{P}$ and $\tilde{W_n}=h_n^{-3/2}\tilde{Z_n}$. Then $B_n(p)$ is the desired Gaussian process, and since $W_n=h_n^{-3/2}Z_n$, we have $\tilde{W_n}\eqd \sup_{p\in \mathcal{P}}\tilde{B}_n(p)$ and 
\begin{align*}
&|W_n-\tilde{W_n}|=h_n^{-3/2}|Z_n-\tilde{Z_n}|\\
=&O_P\{(nh_n^{3})^{-1/6}\log n +(nh_n^{3})^{-1/4}\log^{5/4}n+(nh_{n}^{6})^{-1/4}\log^{3/2}n       \}.
\end{align*}
\end{proof}

\renewcommand{\theequation}{D.\arabic{equation}} 
\setcounter{equation}{0}

\section{Proof of Lemma \ref{lemma3}}\label{proof_gauss_stationary}

\begin{proof}[Proof of Lemma \ref{lemma3}]

	This lemma follows the approach of Lemma 3.4 in \textcite{ghosal2000testing}.
Let:
\begin{align*}
\phi_{n,p}(\varepsilon_i,P(Z_{i}))= &\Big[E\big[\varepsilon^{2} K^{2}\left(\frac{P(Z)-p}{h_n}\right) (P(Z)-p)^{2} \big]\Big]^{-1/2} \varepsilon_i K\left(\frac{P(Z_{i})-p}{h_n}\right)(P(Z_i)-p), \\
\psi_{n,p}(\varepsilon_i,P(Z_{i}))=&\Big[h_n^{3}E[\varepsilon^{2}|P(Z)=P(Z_{i})]f_P(P(Z_{i}))\int K^{2}(u)u^{2}du   \Big]^{-1/2} \\
\times &\varepsilon_iK\left(\frac{P(Z_{i})-p}{h_n}\right)(P(Z_i)-p).
\end{align*}
The remaining proof steps follow the approach in Lemma 5 of \textcite{lee2017doublysup}.
We can interpret Gaussian processes $\tilde{B}_{n,1}$ and $\tilde{B}_{n,2}$ as Brownian bridges $B_n(\phi_{n,p})$ and $B_n(\psi_{n,p})$, respectively, in the sense that $E(B_n(g))=0$ and $E[B_n(g)B_n(\check{g})]=\mathrm{Cov} (g,\check{g})$ for $g=\phi_{n,p}$, $\check{g}=\phi_{n,\check{p}}$.
Define $\delta_n(p):=B_n(\phi_{n,p})-B_n(\psi_{n,p})$. The process, $\delta_n(p)$, is a mean zero Gaussian process with
\begin{align*}
E[\delta_n(p)\delta_n(\check{p})]=E\big[(\phi_{n,p}(\varepsilon,P(Z))-\psi_{n,p}(\varepsilon,P(Z)))(\phi_{n,\check{p}}(\varepsilon,P(Z))-\psi_{n,\check{p}}(\varepsilon, P(Z))\big].
\end{align*}
For the proof of Lemma \ref{lemma3}, we use the following result with the proof provided in Appendix \ref{proof_proof_gauss_stationary}.
\begin{lemma}\label{lemma_3_aux}
    Let Assumptions \ref{2}-\ref{covariate} hold. The supremum of the $L_{2}$ diameter of $\delta_{n}(p)$ converges to zero at the same rate as $h_{n}$, namely  $\sup_{p\in\mathcal{P}}E[(\delta_{n}(p)^{2})]=O(h_{n})$. Furthermore, for any $p,p^{\prime}\in \mathcal{P}$, $e_{Q}(\delta_{n}(p),\delta_{n}(p^{\prime}))$ is bounded by $C\sqrt{|p-p^{\prime}|/h_{n}^{3}} $ where $C>0$ is a universal constant.
\end{lemma}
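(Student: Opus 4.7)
The plan is to reduce everything to $L_2(Q)$-moment calculations using the fact that for a Brownian bridge $B_n$ and mean-zero integrands, $E[(B_n(f) - B_n(g))^2] \leq E[(f-g)^2]$. Since $E[\varepsilon \mid P(Z)] = 0$, both $\phi_{n,p}$ and $\psi_{n,p}$ have mean zero, so it suffices to bound the second moments of their differences. The key observation is that the two index functions share a common random factor and differ only in a normalizing prefactor: writing $a_n(p) := (E[\varepsilon^2 K^2((P-p)/h_n)(P-p)^2])^{-1/2}$ and $b_n(v) := (h_n^3 \sigma^2(v) f_P(v) \nu_2(K))^{-1/2}$, we have
\begin{equation*}
\phi_{n,p}(\varepsilon, P(Z)) - \psi_{n,p}(\varepsilon, P(Z)) = [a_n(p) - b_n(P(Z))] \, \varepsilon \, K\!\left(\frac{P(Z) - p}{h_n}\right) (P(Z) - p).
\end{equation*}

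For the first claim, I would apply the substitution $u = (P-p)/h_n$ together with the Lipschitz continuity of $\sigma^2 f_P$ (Assumption \ref{basic-assumption}.\ref{3}) to obtain $a_n(p)^{-2} = b_n(p)^{-2}(1 + O(h_n))$; since $b_n(p) = O(h_n^{-3/2})$, this yields $a_n(p) = b_n(p) + O(h_n^{-1/2})$. Combined with the Lipschitz estimate $|b_n(p) - b_n(P(Z))| = O(h_n^{-3/2})|P(Z) - p| = O(h_n^{-1/2})$ on the kernel support $\{|P(Z) - p| \lesssim h_n\}$, this gives $|a_n(p) - b_n(P(Z))| = O(h_n^{-1/2})$ there. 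Therefore $E[\delta_n(p)^2] \leq O(h_n^{-1}) \cdot E[\varepsilon^2 K^2((P-p)/h_n)(P-p)^2] = O(h_n^{-1}) \cdot O(h_n^3) = O(h_n^2)$, uniformly in $p \in \mathcal{P}$, which implies the claimed $O(h_n)$ bound.

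For the modulus, I would split $E[(\delta_n(p) - \delta_n(p'))^2] \leq 2 E[(\phi_{n,p} - \phi_{n,p'})^2] + 2 E[(\psi_{n,p} - \psi_{n,p'})^2]$ and analyze each term via the decomposition $\phi_{n,p} - \phi_{n,p'} = a_n(p) h_n \varepsilon [g((P-p)/h_n) - g((P-p')/h_n)] + (a_n(p) - a_n(p')) \varepsilon K((P-p')/h_n)(P-p')$, where $g(u) := u K(u)$. For the dominant first piece, change of variables reduces the squared expectation to $a_n(p)^2 h_n^3 \int \sigma^2 f_P (g(u) - g(u-\tau))^2 du$ with $\tau := (p - p')/h_n$. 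The integral is $O(|\tau|)$ uniformly (using $L_g^2 \tau^2$ on overlapping supports and the disjoint-support bound $4\|g\|_2^2$ otherwise), which yields $O(|p-p'|/h_n)$. The residual second piece contributes $O(|p-p'|^2)$ via the Lipschitz estimate $|a_n(p) - a_n(p')| = O(h_n^{-3/2} |p-p'|)$, and is dominated. The analysis of $\psi_{n,p} - \psi_{n,p'}$ is analogous and simpler because the prefactor $b_n(P(Z))$ does not depend on $p$. Combining and using $h_n \leq 1$ gives $E[(\delta_n(p) - \delta_n(p'))^2] = O(|p-p'|/h_n) \leq C|p-p'|/h_n^3$, as claimed.

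The main obstacle will be the bookkeeping in the modulus estimate: ensuring the integrated Lipschitz bound on $g$ does not blow up when $|p - p'|$ is large relative to $h_n$, which requires switching between the Lipschitz regime $|\tau| \lesssim 1$ and the disjoint-support regime $|\tau| \gtrsim 1$, and carefully tracking the powers of $h_n$ appearing in the normalization constants $a_n$ and $b_n$.
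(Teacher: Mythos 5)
Your proposal is correct and follows essentially the same route as the paper's proof: the same identification of $\phi_{n,p}-\psi_{n,p}$ as a prefactor difference times the common random factor, the same split of $\delta_n(p)-\delta_n(p')$ into a prefactor-difference term and a kernel-difference term, and the same use of the Lipschitz continuity of $\sigma^2 f_P$ and of $uK(u)$ after a change of variables. Your intermediate bounds are slightly sharper (you exploit that $x\mapsto x^{-1/2}$ is Lipschitz on a domain bounded away from zero and that $g$ has compact support, getting $O(h_n^2)$ and $O(|p-p'|/h_n)$ where the paper settles for $O(h_n)$ and $O(|p-p'|/h_n^3)$ via $|\sqrt{a}-\sqrt{b}|\le\sqrt{|a-b|}$ and a crude integration over the full range), but both versions yield exactly the stated conclusions.
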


Using Lemma \ref{lemma_3_aux}, we obtain
\begin{align*}
    N(\{\delta_n(p):p\in\mathcal{P}\},  e_{Q},\varepsilon)\leq N\left(\mathcal{P},|\cdot|,\frac{\varepsilon^{2}h_{n}^{3}}{C^2},\right)\lesssim \frac{1}{\varepsilon^{2}h_{n}^{3}}.
\end{align*}
Applying Corollary 2.2.8 of van der Vaart and Wellner (1996), we conclude that
\begin{align*}
    E\left(\sup_{p\in\mathcal{P}}|\delta_{n}(p)|\right)&\lesssim \int^{\infty}_{0}\sqrt{\log  N(\{\delta_n(p):p\in\mathcal{P}\}, e_{Q}} ,\varepsilon) d\varepsilon \\
    &\lesssim \int^{O(h_{n}^{1/2})}_{0}\sqrt{\log \frac{1}{h_{n}^{3/2}\varepsilon}} d\varepsilon \\
    &= O(\sqrt{h_{n}\log h_{n}^{-3/2}}).
\end{align*}

\end{proof}

\renewcommand{\theequation}{E.\arabic{equation}} 
\setcounter{equation}{0}

\section{Proof of Theorem \ref{theorem}}\label{proof_main}

\begin{proof}[Proof of  Theorem \ref{theorem}]
Through a straightforward calculation, for any $t\in\mathbb{R}$, we have
\begin{align*}
    &\left|\Pr\left(\ell_{n} \left[\sup_{(p,x)\in \mathcal{P}\times\mathcal{X}}\sqrt{nh_n^{3}}  \left|\frac{\widehat{MTE}(p,x)-MTE(p,x)}{\hat{s}(p)}\right|-\ell_{n}\right]<t\right)\right. \\
    &\left.-\Pr\left(\ell_{n}\left[\sup_{p\in\mathcal{P}}|\tilde{B}_{n,2}(h^{-1}_{n}p)|-\ell_{n}\right]<t\right)\right| \\
    \leq &\Pr\left(\ell_{n} \left|\sup_{(p,x)\in \mathcal{P}\times\mathcal{X}}\sqrt{nh_n^{3}}   \left|\frac{\widehat{MTE}(p,x)-MTE(p,x)}{\hat{s}(p)}\right|-\sup_{p\in\mathcal{P}}|\tilde{B}_{n,2}(h^{-1}_{n}p)|\right|\geq\varepsilon_{n}\right) \\
    &+\max\left\{\Pr\left(\ell_{n}\left[\sup_{p\in\mathcal{P}}|\tilde{B}_{n,2}(h^{-1}_{n}p)|-\ell_{n}\right]<t+\varepsilon_{n}\right)-\Pr\left(\ell_{n} \left[\sup_{p\in\mathcal{P}}|\tilde{B}_{n,2}(h^{-1}_{n}p)|-\ell_{n}\right]<t\right),\right. \\
    &\left. \Pr\left(\ell_{n} \left[\sup_{p\in\mathcal{P}}|\tilde{B}_{n,2}(h^{-1}_{n}p)|-\ell_{n}\right]<t\right)-\Pr\left(\ell_{n}\left[\sup_{p\in\mathcal{P}}|\tilde{B}_{n,2}(h^{-1}_{n}p)|-\ell_{n}\right]<t-\varepsilon_{n}\right)\right\},
\end{align*}
where $\varepsilon_{n}$ is a sequence converging to zero, i.e. $\varepsilon_{n}\rightarrow0$ as $n$ goes toward infinity. We state the following lemma to achieve the polynomial convergence to the distribution function, with the proof given in Appendix \ref{proof_proof_main}. \begin{lemma}\label{theorem4_aux}
 Let Assumptions \ref{2}-\ref{covariate} hold. There exist real positive numbers $d$ such that
\begin{align*}
    &\Pr\left(\ell_{n} \left|\sup_{(p,x)\in \mathcal{P}\times\mathcal{X}}\sqrt{nh_n^{3}}   \left|\frac{\widehat{MTE}(p,x)-MTE(p,x)}{\hat{s}(p)}\right|-\sup_{p\in\mathcal{P}}|\tilde{B}_{n,2}(h^{-1}_{n}p)|\right|\geq n^{-d}\right) \\
    =&o(1), 
\end{align*}
and that 
\begin{align*}
       &\max\left\{\Pr\left(\ell_{n}\left[\sup_{p\in\mathcal{P}}|\tilde{B}_{n,2}(h^{-1}_{n}p)|-\ell_{n}\right]<t+n^{-d}\right)-\Pr\left(\ell_{n} \left[\sup_{p\in\mathcal{P}}|\tilde{B}_{n,2}(h^{-1}_{n}p)|-\ell_{n}\right]<t\right),\right. \\
    &\left. \Pr\left(\ell_{n} \left[\sup_{p\in\mathcal{P}}|\tilde{B}_{n,2}(h^{-1}_{n}p)|-\ell_{n}\right]<t\right)-\Pr\left(\ell_{n}\left[\sup_{p\in\mathcal{P}}|\tilde{B}_{n,2}(h^{-1}_{n}p)|-\ell_{n}\right]<t-n^{-d}\right)\right\} \\
    =&o(1).
    \end{align*}
\end{lemma}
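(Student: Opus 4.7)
The plan is to establish the two displayed bounds separately. The first is a coupling bound for the studentized MTE statistic, and the second is an anti-concentration / smoothness bound for the approximating stationary Gaussian supremum.

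For the first assertion, I would use the combined coupling already recorded in the text immediately before Theorem \ref{theorem}: Lemmas \ref{lemma1}--\ref{lemma3}, together with a symmetrization step to pass from signed suprema to suprema of absolute values (by applying the Chernozhukov-Chetverikov-Kato coupling in Lemma \ref{lemma2} to the doubled class $\mathcal{F}_n\cup(-\mathcal{F}_n)$ and using $-\tilde B_{n,j}\eqd \tilde B_{n,j}$), yield
\begin{equation*}
\left|\sup_{(p,x)\in\mathcal{P}\times\mathcal{X}}\sqrt{nh_n^3}\left|\frac{\widehat{MTE}(p,x)-MTE(p,x)}{\hat s(p)}\right| - \sup_{p\in\mathcal{P}}|\tilde B_{n,2}(h_n^{-1}p)|\right| = O_P(r_n),
\end{equation*}
where $r_n$ is the maximum of the rates from Lemmas \ref{lemma1}--\ref{lemma3}. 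Under the bandwidth regime $h_n=Cn^{-\eta}$ with $1/7<\eta<1/6$, each of $n^{-c}$, $(nh_n^3)^{-1/6}\log n$, $(nh_n^3)^{-1/4}\log^{5/4}n$, $(nh_n^6)^{-1/4}\log^{3/2}n$, and $h_n^{1/2}\sqrt{\log h_n^{-3/2}}$ is polynomially small --- crucially, $1-6\eta>0$ makes the $(nh_n^6)^{-1/4}$ term decay at a polynomial rate. Hence $r_n = O(n^{-d'}(\log n)^C)$ for some $d'>0$. From the defining equation one reads off $\ell_n^2 = 2\log(1/h_n) + O(1) = O(\log n)$, so multiplication by $\ell_n = O(\sqrt{\log n})$ preserves polynomial decay: $\ell_n r_n = O_P(n^{-d})$ for any $0<d<d'$, and Markov's inequality delivers the first claim. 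The $x$-dependent piece $(\widehat{\beta_1-\beta_0}-(\beta_1-\beta_0))'x/\hat s(p)$ contributes at most $O_P(h_n^{3/2})$ after multiplication by $\sqrt{nh_n^3}$, which is absorbed by the $n^{-c}$ rate of Lemma \ref{lemma1}.

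For the second assertion, I would invoke the refinement of the Gumbel limit underlying Theorem \ref{theorem}, namely Theorem 4.3 of \textcite{piterbarg1996asymptotic}, which supplies a polynomial convergence rate: $\sup_{t\in I}|F_n(t) - G_n(t)| = O(n^{-c_2})$ on any finite interval $I$, where $F_n(t):=\Pr(\ell_n[\sup_p|\tilde B_{n,2}(h_n^{-1}p)|-\ell_n]<t)$ and $G_n(t):=\exp(-2e^{-t-t^2/(2\ell_n^2)})$. A direct computation gives $G_n'(t) = G_n(t)\cdot 2e^{-t-t^2/(2\ell_n^2)}(1+t/\ell_n^2)$, which is bounded on $I$ uniformly in $n$ once $\ell_n$ exceeds a threshold; so $G_n$ is Lipschitz on $I$ with a constant independent of $n$. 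Therefore
\begin{equation*}
|F_n(t+n^{-d}) - F_n(t)| \le |G_n(t+n^{-d}) - G_n(t)| + 2\sup_{s\in I}|F_n(s)-G_n(s)| = O(n^{-d}) + O(n^{-c_2}) = o(1),
\end{equation*}
and the same bound for the one-sided left difference yields the second claim. The constant $d$ here can be chosen to coincide with (or be smaller than) the one selected in the first assertion, so that a single $d>0$ satisfies both conclusions.

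The main obstacle is the symmetrization step in the first assertion, since Lemmas \ref{lemma2} and \ref{lemma3} are stated for signed suprema rather than suprema of absolute values. The standard remedy --- doubling the indexing class to $\mathcal{F}_n\cup(-\mathcal{F}_n)$ and exploiting $-\tilde B_{n,j}\eqd \tilde B_{n,j}$ --- is conceptually routine, but one must verify that the doubled class inherits the VC covering bound and envelope used in the Chernozhukov-Chetverikov-Kato coupling, so that all polynomial rates survive unchanged. Everything else reduces to bookkeeping on the bandwidth inequalities and to the smoothness of the corrected-Gumbel distribution function.
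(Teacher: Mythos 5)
Your proposal is correct, and it splits naturally into two parts relative to the paper. For the first assertion you follow essentially the paper's own route: the paper disposes of it in one line (``from Lemma \ref{lemma1} to Lemma \ref{lemma3} with symmetry''), and your elaboration --- doubling the index class to handle absolute values, checking that every rate in Lemmas \ref{lemma1}--\ref{lemma3} is polynomial under $1/7<\eta<1/6$ (with $(nh_n^6)^{-1/4}$ indeed the binding term), and noting that multiplication by $\ell_n=O(\sqrt{\log n})$ preserves polynomial decay --- is exactly the bookkeeping the paper leaves implicit; the VC-inheritance worry you raise is covered by the paper's own Lemma \ref{union-VC}. For the second assertion, however, you take a genuinely different route. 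The paper never invokes Piterbarg at this stage: it bounds the maximum of the two one-sided differences by the probability that $\sup_{p\in\mathcal{P}}|\tilde{B}_{n,2}(h_n^{-1}p)|$ lands in an interval of length $2n^{-d}/\ell_n$ around $\ell_n+t/\ell_n$, and then applies the Gaussian anti-concentration inequality of \textcite{chernozhukov2014anti} together with Dudley's entropy bound from \textcite{vanderVaartWellner96}, yielding a bound of order $(n^{-d}/\ell_n)\big(E\big[\sup_{p\in\mathcal{P}}|\tilde{B}_{n,2}(h_n^{-1}p)|\big]+1\big)=O(n^{-d})$ since $E\big[\sup_{p}|\tilde{B}_{n,2}(h_n^{-1}p)|\big]=O\big(\sqrt{\log h_n^{-3/2}}\big)\asymp\ell_n$. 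Your alternative --- Piterbarg's corrected-Gumbel approximation plus the uniform-in-$n$ Lipschitz bound on $G_n$ over finite intervals --- is sound: there is no circularity, since Piterbarg's theorem is an external fact about the stationary process $\tilde{B}_{n,2}$, and in fact even an $o(1)$ uniform approximation error would suffice for the lemma's $o(1)$ conclusion, so the argument does not hinge on the polynomial rate you cite. The trade-off is that the paper's anti-concentration argument is self-contained, gives smoothness of the distribution uniformly over \emph{all} centering points rather than only on finite $t$-intervals, and defers all reliance on Piterbarg to the single final step of Theorem \ref{theorem}; your route is more economical in tools, reusing a result needed anyway, but it ties this auxiliary lemma to the strongest external ingredient and requires verifying that Piterbarg's expansion is uniform in $t$ on the interval in question.
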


By Lemma \ref{theorem4_aux}, uniformly in $t$ over any finite interval, we obtain
\begin{align*}
   &\Pr\left(\ell_{n} \left[\sup_{(p,x)\in \mathcal{P}\times\mathcal{X}}\sqrt{nh_n^{3}}  \left|\frac{\widehat{MTE}(v)-MTE(v)}{\hat{s}(p)}\right|-\ell_{n}\right]<t\right) \\
   =&\Pr\left(\ell_{n}\left[\sup_{p\in\mathcal{P}}|\tilde{B}_{n,2}(h^{-1}_{n}p)|-\ell_{n}\right]<t\right)+o(1).
\end{align*}
We note that $\tilde{B}_{n,2}$ defined in Lemma \ref{lemma3} is a homogeneous Gaussian field with mean zero and the covariance function $\rho(p)$. The imposed condition on the kernel function ensures the covariance function $\rho(p)$ has finite support and is six times differentiable, which implies that the Gaussian process $\tilde{B}_{n,2}$ is three times differentiable in the mean square sense. For more details, see Chapter 4 of \textcite{rasmussen2006gaussian}. Using Theorem 14.3 of \textcite{piterbarg1996asymptotic}, we obtain
\begin{align*}
	&\Pr\Big(\ell_n\Big[ \sup_{(p,x)\in \mathcal{P}\times \mathcal{X}}\sqrt{nh_n^3} \left|\frac{\widehat{MTE}(p,x)-MTE(p,x)}{\hat{s}(p)}\right| -\ell_n   \Big]<t \Big) \\
	\label{thmeq}
	=& \exp\big(-2e^{-t-t^2/2\ell_n^2} \big)+o(1),
	\end{align*}
	as $n\to \infty$, where $\ell_n$ is the largest solution to the following equation:
	\begin{align*}
	(b_{0}-a_{0})\cdot h_n^{-1}\sqrt{\lambda}(2\pi)^{-1}\exp(-\ell_n^2/2)=1,
	\end{align*}
	where we define
    \begin{align*}
	\lambda:=-\frac{\int g(u)g^{\prime\prime}(u)du}{\int g^2(u)du}\quad \text{and} \quad g(u):=uK(u).
	\end{align*}
 
\end{proof}

\renewcommand{\theequation}{F.\arabic{equation}} 
\setcounter{equation}{0}

\section{Preliminaries}\label{main-proposition}
\textcite{Dony_Einmahl_Mason_2016} give results about the uniform convergence of kernel functions. We introduce their assumptions with slight modifications.

Let $\mathcal{G}$ denote a class of measurable real valued functions $g$ of $u\in\mathbb{R}^{d}$. Assume the following
\begin{assumption}[Assumptions of \textcite{Dony_Einmahl_Mason_2016}]\label{Assumption-g}
$\ $
\begin{enumerate}
    \item[(G.i)] $\sup_{g\in\mathcal{G}}\|g\|_{\infty}=:\kappa<\infty$,
    \item[(G.ii)] $\sup_{g\in\mathcal{G}}\int_{\mathbb{R}^{d}}g^{2}(x)dx=:L<\infty$.
\end{enumerate}
\end{assumption}
Let $ \mathcal{F}_{\mathcal{G}}$ denote the class of functions of $s\in\mathbb{R}$ formed from $\mathcal{G}$. We define $\mathcal{F}_{\mathcal{G}}$ as follows:
\begin{equation*}
    \mathcal{F}_{\mathcal{G}}:=\{g(z-s\lambda):\lambda\geq1, z\in\mathbb{R}\ \text{and}\ g\in\mathcal{G}\}.
\end{equation*}
Before stating the convergence result for kernel-type functions, we introduce some preferable properties of the class of functions. 
\begin{dfn}
    The class of functions $\mathcal{F}$ is pointwise measurable if there exists a countable subset $\mathcal{G}$ such that for every $f\in\mathcal{F}$, there exists a sequence $g_{m}\in\mathcal{G}$ with $g_{m}(x)\rightarrow f(x)$ for every $x\in S$. 
\end{dfn}
Then, assume $\mathcal{F}_{\mathcal{G}}$ satisfies the following properties:
\begin{assumption}[Modified assumptions of \textcite{Dony_Einmahl_Mason_2016} ]\label{Assumption-f}
$\ $
\begin{enumerate}
    \item[(F.i)] The $ \mathcal{F}_{\mathcal{G}}$ is of  VC-type with envelope $\kappa$.
    \item[(F.ii)] $\mathcal{F}_{\mathcal{G}}$ is pointwise measurable.
\end{enumerate}
\end{assumption}

\begin{remark}\label{def-VC}
\textcite{Dony_Einmahl_Mason_2016} also requires the $ \mathcal{F}_{\mathcal{G}}$ be of  VC-type. However, they use a stronger definition of VC-type: $ \mathcal{F}_{\mathcal{G}}$ is of VC-type with envelope $F_{\mathcal{G}}$ if there exist finite constants $A, v>0$ such that 
\begin{equation*}
    \sup_{Q}N( \mathcal{F}_{\mathcal{G}},e_{Q},\varepsilon\|F_{\mathcal{G}}\|_{Q,2})\leq (A/\varepsilon)^{\nu} 
\end{equation*}
    where the supremum is taken over all probability measures $Q$ for which $0 <\|F\|_{Q,2}<\infty$. Therefore, we need to show that the result of \textcite{Dony_Einmahl_Mason_2016} holds with the weaker condition introduced in Definition \ref{VC-type-def}.
\end{remark}

For any $g\in\mathcal{G}$ and $h$, define $g_{n,h}$ as follows
\begin{equation*}
    g_{n,h}(x):=\frac{1}{nh_{n}}\sum_{i=1}^{n}g\left(\frac{x-P(Z_{i})}{h_{n}}\right),\quad x\in\mathbb{R}^{d}.
\end{equation*}
Then, we obtain uniform convergence results for $g$.
\begin{corollary}[Corollary of Theorem 4.2 of \textcite{Dony_Einmahl_Mason_2016}]\label{donny}
Assuming (G.i), (G.ii), (F.i), (F.ii), and $f_{P}$ bounded, we have for $c>0$ and $0<h_{0}<1$,
\begin{equation*}\label{uniform-general}
\lim\sup_{n\rightarrow\infty}\sup_{\frac{c\log n}{n}\leq h\leq h_{0}}\sup_{g\in\mathcal{G}}\frac{\sqrt{nh}\|g_{n,h}-Eg_{n,h}\|_{\infty}}{\sqrt{|\log h|\vee\log\log n}}=:G(c)<\infty,\quad a.s. 
\end{equation*}
\end{corollary}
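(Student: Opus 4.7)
The plan is to invoke Theorem 4.2 of \textcite{Dony_Einmahl_Mason_2016} directly, arguing that the weaker VC-type condition in Definition \ref{VC-type-def} (covering numbers uniform only over finitely discrete probability measures) is enough for their proof to go through. The strategy is to show that for pointwise measurable classes, the two definitions of VC-type are effectively equivalent for the chaining and symmetrization arguments underlying Theorem 4.2, so that the rest of the conclusion is unaltered.

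First, I would inspect where the VC-type condition actually enters the proof of Theorem 4.2. The main usages are within (i) a Dudley-type chaining bound, which controls the expected supremum through an entropy integral of the form $\int_0^{\kappa}\sqrt{\log N(\mathcal{F}_{\mathcal{G}}, L^2(\mathbb{P}_n), \varepsilon\|F_{\mathcal{G}}\|_{\mathbb{P}_n,2})}\,d\varepsilon$, and (ii) a Talagrand-type concentration inequality applied after symmetrization. In both places, the relevant probability measure is the empirical measure $\mathbb{P}_n$ (or a symmetrized variant), which is always finitely discrete. Consequently, the weaker VC-type assumption in (F.i) delivers exactly the polynomial covering-number bound required by those steps, with the same exponents $A,v$.

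Second, once the chaining and concentration inequalities are in place, the remainder of the proof proceeds as in \textcite{Dony_Einmahl_Mason_2016}: truncation of the kernel functions at a suitable level to reduce to a bounded class (which is where assumptions (G.i)--(G.ii) enter, together with boundedness of $f_P$), a bandwidth-discretization argument to obtain uniformity over $h\in[c\log n/n, h_0]$, and a Borel-Cantelli step to pass from tail bounds to almost-sure convergence. The pointwise measurability in (F.ii) ensures that these symmetrization, supremum, and discretization operations are measurable, sidestepping the usual outer-probability subtleties.

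The main obstacle is a careful bookkeeping exercise to confirm that no step of the original proof secretly invokes covering numbers with respect to a non-discrete probability measure, for example via a bracketing reduction or a preliminary $L^2(P)$ entropy bound. Standard empirical-process proofs of uniform-in-bandwidth kernel results rely solely on empirical-measure entropy, so this check should be essentially mechanical; if any gap did arise, it could be closed by the classical approximation argument that, for pointwise measurable classes, $N(\mathcal{F}_{\mathcal{G}}, L^2(\mathbb{P}_n), \cdot)$ transfers the polynomial bound to $N(\mathcal{F}_{\mathcal{G}}, L^2(Q), \cdot)$ for general $Q$ with $0<\|F_{\mathcal{G}}\|_{Q,2}<\infty$, up to immaterial constants in $A$ and $v$.
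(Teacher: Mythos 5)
Your proposal is correct and takes essentially the same route as the paper: the paper's proof also reduces the matter to observing that the VC-type condition is invoked only through the moment bound of Proposition A.1 of Einmahl and Mason (2000), whose proof uses covering numbers solely with respect to finitely discrete (empirical) probability measures, so the weaker Definition \ref{VC-type-def} suffices and the rest of the Dony--Einmahl--Mason argument goes through unchanged. The paper packages this observation as Proposition \ref{donny-aux}, which is exactly the ``bookkeeping check'' you describe.
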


We define $\mathcal{G}$ as 
\begin{align}\label{concrete_g}
    \mathcal{G}:=\{&|(-x)^{s}K^{(\ell)}(-x)|\}\quad \text{for $0\leq s\leq 5$ and $0\leq \ell \leq5$.}
\end{align}
To apply the result of Corollary \ref{donny}, we need to guarantee the existence of $E[g_{n,h}]$ for each $g\in\mathcal{G}$ and check whether $\mathcal{G}$ and $\mathcal{F}_{\mathcal{G}}$ satisfy Assumptions (G.i), (G.ii), (F.i), and (F.ii). We have the following general results.
\begin{lemma}\label{VC-type}
Let $g:\mathbb{R}\mapsto\mathbb{R}$. Assume $g$ is a $K$-Lipschitz continuous function with compact support where $K>0$. Consider the class of functions
\begin{equation*}
    \mathcal{F}_{g}:=\{x\mapsto g(tx-s); t>0, s\in\mathbb{R}\}.
\end{equation*}
Then, $\mathcal{F}_{g}$ is of VC-type with the envelope $\|g\|_{\infty}$ and pointwise measurable.
\end{lemma}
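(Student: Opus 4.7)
The plan is to establish the two conclusions separately.

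For pointwise measurability, I would take the countable subclass
\[
\mathcal{F}_g^{0} = \{x \mapsto g(tx - s) : t \in \mathbb{Q}_{>0},\ s \in \mathbb{Q}\}.
\]
For any $(t, s) \in (0, \infty) \times \mathbb{R}$, choose rationals $t_m \in \mathbb{Q}_{>0}$ with $t_m \to t$ and $s_m \in \mathbb{Q}$ with $s_m \to s$. The Lipschitz property yields
\[
|g(t_m x - s_m) - g(tx - s)| \leq K\bigl(|x||t_m - t| + |s_m - s|\bigr) \to 0
\]
for every fixed $x \in \mathbb{R}$, which is the definition required.

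For the VC-type claim, the envelope is clearly $F \equiv \|g\|_\infty$. The structural observation is that $\mathcal{F}_g = \{g \circ h_{t,s} : t > 0,\ s \in \mathbb{R}\}$ with $h_{t,s}(x) = tx - s$, and the affine family $\mathcal{H} = \{h_{t,s}\}$ lies inside the two-dimensional vector space spanned by $x \mapsto x$ and $x \mapsto 1$. By Lemma 2.6.15 of van der Vaart and Wellner (1996), $\mathcal{H}$ is VC-subgraph. The next step is to propagate this through the composition with $g$. Since $g$ is Lipschitz with compact support $[-M, M]$, it is of bounded variation and admits the Jordan decomposition $g = g_+ - g_-$ with $g_+$ and $g_-$ bounded and non-decreasing. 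For each of these monotone components, rewriting the subgraph as
\[
\{(x,y) : y < g_\pm(tx - s)\} = \{(x,y) : tx - s - g_\pm^{-1}(y) > 0\}
\]
identifies it with the positivity set of a function lying in the three-dimensional vector space of functions of $(x,y)$ spanned by $(x,y)\mapsto x$, $(x,y)\mapsto 1$, and $(x,y)\mapsto g_\pm^{-1}(y)$. Dudley's theorem then gives that this class of sets has finite VC dimension, so the corresponding class $\mathcal{F}_{g_\pm}$ is VC-subgraph, and Theorem 2.6.7 of van der Vaart and Wellner upgrades this to polynomial uniform $L^2$ covering-number bounds. A product-of-covers argument combines the two pieces to give the VC-type bound for $\mathcal{F}_g$ with envelope $\|g\|_\infty$ (the constants $\|g_\pm\|_\infty \leq 2MK$ absorb into the polynomial constants after rescaling).

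The main obstacle I anticipate is the unboundedness of the parameter space: although $\mathcal{H}$ has no bounded envelope, $\mathcal{F}_g$ does, and many different $(t, s)$ produce essentially the zero function on the support of any given $Q$. This is resolved by observing that $g(tx - s) = 0$ whenever $|tx - s| > M$. For any finitely discrete $Q$ supported on $\{x_1, \ldots, x_n\}$, the effective parameter set where $g(t\cdot - s)|_{\operatorname{supp}(Q)}$ is nonzero is contained in the union of at most $n$ strips $\{(t, s) : |tx_i - s| \leq M\}$; outside this union all restrictions collapse to a single function (identically zero), and within it the VC-subgraph argument yields covering numbers uniform in $Q$, which is precisely what the definition of VC-type requires.
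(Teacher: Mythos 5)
Your proof is correct, but it takes a genuinely different route from the paper's for the VC-type part (the pointwise-measurability argument via rational parameters is identical). The paper delegates to Lemma \ref{Lemma3.6.11} and Proposition \ref{proposition3.6.12}: it factors $g=\tilde g\circ h$ where $h(x)=v_x(g)$ is the (monotone, Lipschitz) running total-variation function and $\tilde g$ is $1$-Lipschitz, applies the generalized-inverse/finite-dimensional-vector-space argument to the single monotone class $\{h(t\cdot-s)\}$, and then transfers $\varepsilon$-covers through the $1$-Lipschitz outer map, so no recombination step is needed. You instead use the Jordan decomposition $g=g_+-g_-$ into bounded monotone pieces, run the same Dudley-type subgraph argument on each composed class $\mathcal{F}_{g_\pm}$, and recombine with a product-of-covers bound for the difference class; this avoids the factorization lemma but costs you the combination step and slightly messier envelope bookkeeping (the rescaling from $\|g_+\|_\infty+\|g_-\|_\infty\le v(g)$ back to $\|g\|_\infty$ works because $v(g)/\|g\|_\infty<\infty$ for $g\not\equiv 0$, exactly parallel to the paper's constant $C=\|f\|_\infty/v(f)$). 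Two minor remarks: the strict-inequality identity $\{y<g_\pm(tx-s)\}=\{tx-s-g_\pm^{-1}(y)>0\}$ needs the appropriate one-sided generalized inverse where $g_\pm$ is flat, a technicality the paper also glosses over with its closed-subgraph version; and your final paragraph about the unbounded parameter space addresses a non-issue, since the VC-subgraph property of $\mathcal{F}_{g_\pm}$ together with Theorem 2.6.7 of van der Vaart and Wellner already yields covering bounds uniform over all finitely discrete $Q$ relative to the bounded envelope, with no need to localize the parameters to the strips $\{|tx_i-s|\le M\}$.
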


\begin{lemma}\label{union-VC}
Let $\mathcal{F}$ and $\mathcal{G}$ is of VC-type with envelopes $F$ and $G$, respectively. Then, $\mathcal{F}\cup\mathcal{G}$ is also of VC-type class with envelope $H(x):=\max\{F(x),G(x)\}$. 
\end{lemma}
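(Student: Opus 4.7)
The proof is essentially an elementary counting argument exploiting monotonicity of covering numbers and the pointwise inequalities $F \le H$, $G \le H$. The plan is as follows.

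First, since $\mathcal{F}$ is VC-type with envelope $F$ and $\mathcal{G}$ is VC-type with envelope $G$, there exist constants $A_1, v_1, A_2, v_2 > 0$ such that, uniformly in finitely discrete probability measures $Q$,
\begin{equation*}
N(\mathcal{F}, e_Q, \varepsilon \|F\|_{Q,2}) \le (A_1/\varepsilon)^{v_1}, \qquad N(\mathcal{G}, e_Q, \varepsilon \|G\|_{Q,2}) \le (A_2/\varepsilon)^{v_2},
\end{equation*}
for all $0 < \varepsilon \le 1$. Since $F \le H$ and $G \le H$ pointwise, we have $\|F\|_{Q,2} \le \|H\|_{Q,2}$ and $\|G\|_{Q,2} \le \|H\|_{Q,2}$. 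By monotonicity of covering numbers in the radius, this yields
\begin{equation*}
N(\mathcal{F}, e_Q, \varepsilon \|H\|_{Q,2}) \le N(\mathcal{F}, e_Q, \varepsilon \|F\|_{Q,2}) \le (A_1/\varepsilon)^{v_1},
\end{equation*}
and analogously for $\mathcal{G}$.

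Next, I would use the obvious fact that the union of an $r$-cover of $\mathcal{F}$ and an $r$-cover of $\mathcal{G}$ is an $r$-cover of $\mathcal{F} \cup \mathcal{G}$, which gives
\begin{equation*}
N(\mathcal{F} \cup \mathcal{G}, e_Q, \varepsilon \|H\|_{Q,2}) \le N(\mathcal{F}, e_Q, \varepsilon \|H\|_{Q,2}) + N(\mathcal{G}, e_Q, \varepsilon \|H\|_{Q,2}) \le (A_1/\varepsilon)^{v_1} + (A_2/\varepsilon)^{v_2}.
\end{equation*}

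Finally, I would combine the two bounds into a single VC-type bound. Set $v := \max\{v_1, v_2\}$ and $B := \max\{A_1, A_2, 1\}$. Since $B/\varepsilon \ge 1$ for $\varepsilon \in (0,1]$, we have $(A_i/\varepsilon)^{v_i} \le (B/\varepsilon)^{v_i} \le (B/\varepsilon)^v$, so the previous display is bounded by $2(B/\varepsilon)^v = (2^{1/v}B/\varepsilon)^v$. Setting $A := 2^{1/v} B$ gives $\sup_Q N(\mathcal{F} \cup \mathcal{G}, e_Q, \varepsilon \|H\|_{Q,2}) \le (A/\varepsilon)^v$, the required VC-type bound. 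There is no real obstacle here; the only minor subtlety is handling the different exponents $v_1, v_2$ and the normalization constants so that the final constants are universal in $\varepsilon \in (0,1]$, which is resolved by taking the max and using the restriction $\varepsilon \le 1$ to absorb lower-power terms.
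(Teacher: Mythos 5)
Your proposal is correct and follows essentially the same route as the paper: bound the covering number of the union by the sum of the individual covering numbers (using $\|F\|_{Q,2},\|G\|_{Q,2}\le\|H\|_{Q,2}$ and monotonicity in the radius), then absorb the two polynomial bounds into a single one with exponent $\max\{v_1,v_2\}$. Your handling of the constants (taking $B\ge 1$ so that $B/\varepsilon\ge 1$ justifies raising to the larger exponent) is slightly more explicit than the paper's, which writes the combined constant as $(A^{\mu}+B^{\mu})^{1/\mu}$, but the argument is the same.
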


\begin{lemma}\label{union-pointwise-measurable}
Let $\mathcal{F}$ and $\mathcal{G}$  pointwise measurable. Then, $\mathcal{F}\cup\mathcal{G}$ is also pointwise measurable. 
\end{lemma}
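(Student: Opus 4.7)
The plan is to verify the definition of pointwise measurability directly for $\mathcal{F}\cup\mathcal{G}$, using the countable approximating subsets supplied by the hypothesis. By assumption, there exist countable subsets $\mathcal{F}_{0}\subset\mathcal{F}$ and $\mathcal{G}_{0}\subset\mathcal{G}$ such that every $f\in\mathcal{F}$ is the pointwise limit of some sequence in $\mathcal{F}_{0}$, and every $g\in\mathcal{G}$ is the pointwise limit of some sequence in $\mathcal{G}_{0}$. The natural candidate for a countable approximating subset of $\mathcal{F}\cup\mathcal{G}$ is then $\mathcal{H}_{0}:=\mathcal{F}_{0}\cup\mathcal{G}_{0}$, which is countable as a finite union of countable sets and is a subset of $\mathcal{F}\cup\mathcal{G}$.

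Next, I would verify the approximation property. Take any $h\in\mathcal{F}\cup\mathcal{G}$. If $h\in\mathcal{F}$, then by the assumed pointwise measurability of $\mathcal{F}$, there exists a sequence $\{h_{m}\}\subset\mathcal{F}_{0}\subset\mathcal{H}_{0}$ with $h_{m}(x)\to h(x)$ for every $x\in S$. Otherwise, $h\in\mathcal{G}$, and we obtain an analogous sequence $\{h_{m}\}\subset\mathcal{G}_{0}\subset\mathcal{H}_{0}$ pointwise converging to $h$. In either case, the sequence lies in $\mathcal{H}_{0}$, so the defining condition for pointwise measurability of $\mathcal{F}\cup\mathcal{G}$ is satisfied.

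There is no real obstacle here; the lemma reduces to bookkeeping about countable unions and case analysis on whether a given element lies in $\mathcal{F}$ or $\mathcal{G}$. The only subtle point worth noting is that $\mathcal{F}$ and $\mathcal{G}$ need not be disjoint, but this is harmless because the sequence extracted via either assumption already lies in $\mathcal{H}_{0}$. Hence the proof is essentially a one-paragraph verification from the definition.
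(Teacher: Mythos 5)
Your proof is correct and follows essentially the same route as the paper's: both take the union of the two countable approximating subsets and verify the pointwise-limit property by a case split on whether the given function lies in $\mathcal{F}$ or $\mathcal{G}$. Your write-up is in fact slightly more careful (noting explicitly that non-disjointness is harmless), but there is no substantive difference.
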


Under Assumptions \ref{2}-\ref{covariate}, each function in $\mathcal{G}$ satisfies the conditions of Lemma \ref{VC-type}. Because $\mathcal{G}$ includes finite functions, $\mathcal{G}$ and $\mathcal{F}_{\mathcal{G}}$ satisfy Assumptions (G.i), (G.ii), (F.i), and (F.ii) by Lemma \ref{VC-type}, \ref{union-VC}, and \ref{union-pointwise-measurable}. Moreover, we have the following result.
\begin{lemma}\label{expectation}
 Under Assumptions \ref{2}-\ref{covariate}, $E[|g_{n,h}|]$ does exist for any $g\in\mathcal{G}$.
\end{lemma}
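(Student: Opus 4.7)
\textbf{Proof proposal for Lemma \ref{expectation}.}

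The plan is to observe that each $g \in \mathcal{G}$ is uniformly bounded on $\mathbb{R}$, from which the existence of $E[|g_{n,h}(x)|]$ for every $x \in \mathbb{R}^d$ follows trivially. Recall that the members of $\mathcal{G}$ take the form $g(u) = |(-u)^{s} K^{(\ell)}(-u)|$ with $s,\ell \in \{0,1,\ldots,5\}$, and that the randomness in $g_{n,h}(x)$ comes only from the i.i.d.\ sample $\{P(Z_i)\}_{i=1}^n$.

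First I would invoke Assumption \ref{basic-assumption}.\ref{5}: the kernel $K$ is six times differentiable on $\mathbb{R}$ and compactly supported, say on $[-a,a]$. Because $K$ vanishes on the open set $\{|u|>a\}$ and is differentiable at the boundary, each derivative $K^{(\ell)}$ for $\ell \in \{0,1,\ldots,5\}$ is continuous on $\mathbb{R}$ and also vanishes outside $[-a,a]$. A continuous function with compact support is bounded, and the polynomial factor $(-u)^{s}$ is bounded on $[-a,a]$. Hence there exists a constant $M_{s,\ell} > 0$, depending only on $K$, $a$, $s$, and $\ell$, such that $g(u) \leq M_{s,\ell}$ for every $u \in \mathbb{R}$.

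Plugging this uniform bound into the definition of $g_{n,h}$ gives
\begin{align*}
|g_{n,h}(x)| \;\leq\; \frac{1}{nh_n}\sum_{i=1}^{n}\left|g\!\left(\frac{x - P(Z_i)}{h_n}\right)\right| \;\leq\; \frac{M_{s,\ell}}{h_n},
\end{align*}
almost surely (indeed, pointwise in the sample). Taking expectations immediately yields $E[|g_{n,h}(x)|] \leq M_{s,\ell}/h_n < \infty$ for every $x$, so $E[g_{n,h}]$ is well-defined as a function of $x$.

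There is no real obstacle in this argument; the only subtle point worth checking is that $K^{(\ell)}$ remains continuous and compactly supported for each $\ell \leq 5$, which is forced by the combination of global differentiability and compact support of $K$ in Assumption \ref{basic-assumption}.\ref{5}. This same uniform boundedness is exactly what is needed to read off condition (G.i) in Assumption \ref{Assumption-g}, so the bound $M_{s,\ell}$ obtained here also feeds directly into the subsequent application of Corollary \ref{donny}.
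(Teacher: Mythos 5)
Your proof is correct for the literal claim (existence of $E[|g_{n,h}|]$ at each fixed $n$), but it takes a cruder route than the paper and loses the quantitative content that the lemma is actually there to deliver. The paper does not bound the summand by $\|g\|_{\infty}$; instead it writes, using the i.i.d.\ structure, $E[|g_{n,h}(p)|]=E\big[h_n^{-1}\big|\big(\tfrac{P(Z_i)-p}{h_n}\big)^{s}K^{(\ell)}\big(\tfrac{P(Z_i)-p}{h_n}\big)\big|\big]$ and then changes variables $u=(t-p)/h_n$ to absorb the factor $h_n^{-1}$, obtaining the bound $\int |u|^{s}|K^{(\ell)}(u)|f_P(p+uh_n)\,du\le M\int |u|^{s}|K^{(\ell)}(u)|\,du$ via the bounded density of $P(Z)$ (Assumption \ref{basic-assumption}.\ref{2}). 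That bound is $O(1)$ uniformly in $p$ and $n$, whereas your bound $M_{s,\ell}/h_n$ diverges as $n\to\infty$. For bare existence both work, but the lemma feeds into Proposition \ref{kernel-uniform}, where $\sup_{g}\sup_{p}|g_{n,h}(p)|=O_p(1)$ is deduced by combining Corollary \ref{donny} (which controls $\|g_{n,h}-Eg_{n,h}\|_{\infty}$) with a uniform $O(1)$ bound on $Eg_{n,h}$; your $O(1/h_n)$ bound would not support that step. So if you take your route, you should at least add the change-of-variables computation to recover the $n$-uniform bound, or acknowledge that the sharper estimate is needed downstream.
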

Finally, we obtain the uniform convergence result.
\begin{proposition}\label{kernel-uniform}
 Under Assumptions \ref{2}-\ref{covariate}, it holds that
\begin{equation*}
    \sup_{g\in\mathcal{G}}\sup_{p\in[a_{0},b_{0}]}|g_{n,h}(p)|=O_{p}(1).
\end{equation*}
\end{proposition}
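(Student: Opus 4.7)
The strategy is to decompose $g_{n,h}(p) = (g_{n,h}(p) - E[g_{n,h}(p)]) + E[g_{n,h}(p)]$ and bound each summand uniformly over $g \in \mathcal{G}$ and $p \in [a_0,b_0]$. The centered part will be shown to vanish almost surely via the uniform-in-bandwidth empirical process result in Corollary~\ref{donny}, while the mean will be shown to be uniformly bounded by a direct change-of-variables argument. Almost all the machinery is already provided in the preceding lemmas, so this step is essentially assembly.

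For the stochastic deviation, I would verify the hypotheses of Corollary~\ref{donny}. Every $g \in \mathcal{G}$ has the form $|(-x)^s K^{(\ell)}(-x)|$ with $0 \le s,\ell \le 5$; because $K$ is bounded, compactly supported, and six times differentiable by Assumption~\ref{basic-assumption}.\ref{5}, each such $g$ is bounded and square-integrable, so (G.i) and (G.ii) hold. Lemmas~\ref{VC-type}, \ref{union-VC}, and \ref{union-pointwise-measurable} together imply that $\mathcal{F}_{\mathcal{G}}$ is VC-type with envelope $\kappa$ and pointwise measurable, giving (F.i) and (F.ii); $f_P$ is bounded by Assumption~\ref{basic-assumption}.\ref{2}; and $E[g_{n,h}]$ exists by Lemma~\ref{expectation}. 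Finally, Assumption~\ref{basic-assumption}.\ref{8} gives $h_n = Cn^{-\eta}$ with $\eta \in (1/7, 1/6)$, so $c\log n/n \le h_n \le h_0$ for every fixed $c>0$ and large $n$. Hence Corollary~\ref{donny} applies and yields
\[
\sup_{g\in\mathcal{G}} \sup_{p\in[a_0,b_0]} \bigl|g_{n,h}(p) - E[g_{n,h}(p)]\bigr| = O\!\left(\sqrt{\frac{|\log h_n|\vee \log\log n}{nh_n}}\right) = o(1) \quad \text{a.s.}
\]

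For the mean, the substitution $u = (p - v)/h_n$ in the integral defining $E[g_{n,h}(p)]$ gives
\[
E[g_{n,h}(p)] = \int g(u)\, f_P(p - h_n u)\, du.
\]
Since each $g \in \mathcal{G}$ inherits compact support from $K$, and $\|f_P\|_\infty < \infty$, this is bounded by $\|f_P\|_\infty \int |g(u)|\,du$, which is finite and uniform over the finite class $\mathcal{G}$ (36 functions) and $p \in [a_0,b_0]$. Combining the two bounds gives the claim. The only subtlety is matching Assumption~\ref{basic-assumption}.\ref{8} to the bandwidth range required by Corollary~\ref{donny} and citing the correct auxiliary lemmas for (G.i)--(F.ii); once those are in place, no further nontrivial work is needed.
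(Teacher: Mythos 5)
Your proposal is correct and follows essentially the same route as the paper: both reduce the claim to Corollary~\ref{donny} for the centered process (verifying (G.i), (G.ii), (F.i), (F.ii) via Lemmas~\ref{VC-type}, \ref{union-VC}, and \ref{union-pointwise-measurable}) and control the mean term by the change-of-variables bound $\|f_P\|_\infty \int |g(u)|\,du$, which is exactly the content of Lemma~\ref{expectation}. No substantive differences.
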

The above result implies that
\begin{equation*}
\sup_{p\in[a_{0},b_{0}]}\left(\frac{1}{n}\sum_{i=1}^{n}\frac{1}{h_{n}}\left|K^{(\ell)}\left(\frac{P(Z_{i})-p}{h_{n}}\right)\left(\frac{P(Z_{i})-p}{h_{n}}\right)^{s}\right|\right)=O_{p}(1), 
\end{equation*}
for $0\leq s\leq 5$ and $0\leq \ell \leq5$.

We close this section by stating some auxiliary results that are useful to prove results in Appendix  \ref{proof_unif_conv}. The proofs are stored in Appendix \ref{aux_aux}.

\begin{proposition}\label{aux-1}
    Under Assumptions \ref{2}-\ref{covariate}, for any integer $0\leq s\leq 5$ and $0\leq \ell$, we have
    \begin{align*}
        \sup_{p\in[a_{0},b_{0}]}\left(\frac{1}{n}\sum_{i=1}^{n}\frac{1}{h_{n}}\left|\left(K\left(\frac{\hat{P}(Z_{i})-p}{h_{n}}\right)-K\left(\frac{P(Z_{i})-p}{h_{n}}\right)\right)\left(\frac{P(Z_{i})-p}{h_{n}}\right)^{s}\right|\right)&=o_{p}(h_{n}), \\
         \sup_{p\in[a_{0},b_{0}]}\left(\frac{1}{n}\sum_{i=1}^{n}\frac{1}{h_{n}}\left|K\left(\frac{\hat{P}(Z_{i})-p}{h_{n}}\right)\left(\frac{P(Z_{i})-p}{h_{n}}\right)^{s}\left(\frac{\hat{P}(Z_{i})-P(Z_{i})}{h_{n}^{3}}\right)^{\ell}\right|\right)&=o_{p}(1).
    \end{align*}
\end{proposition}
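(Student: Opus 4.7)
\textbf{Proof plan for Proposition \ref{aux-1}.}

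The plan is to use the Lipschitz property of $K$ together with the rate $\|\hat{P}-P\|_{\infty}=o_{p}(h_{n}^{3})$ from Assumption \ref{basic-assumption}.\ref{7-1}, and then to reduce each supremum to a kernel-type average controlled by Proposition \ref{kernel-uniform}. Since $K$ is compactly supported on some $[-A,A]$ and continuously differentiable (Assumption \ref{basic-assumption}.\ref{5}), its derivative is uniformly bounded, so $K$ is globally Lipschitz with constant $L:=\sup|K'|<\infty$. By the mean value theorem,
\begin{equation*}
\left|K\!\left(\tfrac{\hat{P}(Z_{i})-p}{h_{n}}\right)-K\!\left(\tfrac{P(Z_{i})-p}{h_{n}}\right)\right|\le L\,\tfrac{|\hat{P}(Z_{i})-P(Z_{i})|}{h_{n}}\le L\,\tfrac{\|\hat{P}-P\|_{\infty}}{h_{n}}.
\end{equation*}

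For the first bound, the difficulty is that the weight $((P(Z_{i})-p)/h_{n})^{s}$ is not compactly supported, so a naive Lipschitz bound does not give anything useful by itself. I would exploit the compact support of $K$: the integrand vanishes unless $|(\hat{P}(Z_{i})-p)/h_{n}|\le A$ or $|(P(Z_{i})-p)/h_{n}|\le A$. Because $\|\hat{P}-P\|_{\infty}/h_{n}=o_{p}(h_{n}^{2})=o_{p}(1)$, either possibility forces $|(P(Z_{i})-p)/h_{n}|\le A+1$ with probability tending to one, uniformly in $p$. On this event the weight is bounded by $(A+1)^{s}$, so the sum is at most
\begin{equation*}
(A+1)^{s}\,L\,\tfrac{\|\hat{P}-P\|_{\infty}}{h_{n}}\cdot\tfrac{1}{nh_{n}}\sum_{i=1}^{n}\mathbf{1}\{|(P(Z_{i})-p)/h_{n}|\le A+1\}.
\end{equation*}
The final average is $O_{p}(1)$ uniformly in $p$: dominating the indicator by a smooth, compactly supported majorant brings it inside the class $\mathcal{G}$ used in Proposition \ref{kernel-uniform}. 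Combining with $\|\hat{P}-P\|_{\infty}/h_{n}=o_{p}(h_{n}^{2})$ yields an overall rate $o_{p}(h_{n}^{2})$, which is $o_{p}(h_{n})$ as claimed.

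For the second bound, I would decompose
\begin{equation*}
K\!\left(\tfrac{\hat{P}(Z_{i})-p}{h_{n}}\right)=\Bigl[K\!\left(\tfrac{\hat{P}(Z_{i})-p}{h_{n}}\right)-K\!\left(\tfrac{P(Z_{i})-p}{h_{n}}\right)\Bigr]+K\!\left(\tfrac{P(Z_{i})-p}{h_{n}}\right).
\end{equation*}
Multiplying by $((P(Z_{i})-p)/h_{n})^{s}$ and summing $\frac{1}{nh_{n}}\sum_{i=1}^{n}|\cdot|$, the first bracket was just shown to be $o_{p}(h_{n})$, while for the second piece Proposition \ref{kernel-uniform} (applied with $g(u)=|u^{s}K(u)|\in\mathcal{G}$) gives $O_{p}(1)$, uniformly in $p\in[a_{0},b_{0}]$. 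Hence
\begin{equation*}
\sup_{p\in[a_{0},b_{0}]}\tfrac{1}{nh_{n}}\sum_{i=1}^{n}\Bigl|K\!\left(\tfrac{\hat{P}(Z_{i})-p}{h_{n}}\right)\Bigr|\,\Bigl|\tfrac{P(Z_{i})-p}{h_{n}}\Bigr|^{s}=O_{p}(1).
\end{equation*}
The remaining factor is pulled out uniformly:
\begin{equation*}
\sup_{i\le n}\Bigl|\tfrac{\hat{P}(Z_{i})-P(Z_{i})}{h_{n}^{3}}\Bigr|^{\ell}\le\Bigl(\tfrac{\|\hat{P}-P\|_{\infty}}{h_{n}^{3}}\Bigr)^{\ell}=o_{p}(1)^{\ell},
\end{equation*}
so for $\ell\ge 1$ the product is $o_{p}(1)\cdot O_{p}(1)=o_{p}(1)$, which is the desired conclusion.

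The main technical obstacle is the first step: the Lipschitz estimate on its own loses the compact-support structure of $K$, and the weight $((P(Z_{i})-p)/h_{n})^{s}$ can be arbitrarily large. The device of restricting attention to the set where one of the two kernel arguments lies in the support, then using $\|\hat{P}-P\|_{\infty}=o_{p}(h_{n}^{3})\ll h_{n}$ to enlarge this set only negligibly, is what restores a bounded weight and lets Proposition \ref{kernel-uniform} close the argument. Uniformity in $p\in[a_{0},b_{0}]$ is handled throughout by Proposition \ref{kernel-uniform}, and all other steps are routine given Assumption \ref{basic-assumption}.
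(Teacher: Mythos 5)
Your argument is correct, but it reaches the first (and crucial) bound by a genuinely different route from the paper. The paper Taylor-expands $K\big(\tfrac{\hat P(Z_i)-p}{h_n}\big)$ around $\tfrac{P(Z_i)-p}{h_n}$ to fifth order with a sixth-derivative remainder: the first five terms involve $K^{(j)}\big(\tfrac{P(Z_i)-p}{h_n}\big)\big(\tfrac{P(Z_i)-p}{h_n}\big)^{s}$, which are compactly supported in the \emph{true}-propensity argument and hence are exactly the functions placed in the class $\mathcal{G}$ of \eqref{concrete_g} and controlled by Proposition \ref{kernel-uniform}; the remainder is bounded crudely by $\sup|K^{(6)}|\cdot\big(\|\hat P-P\|_{\infty}/h_n\big)^{6}h_n^{-s}=o_p(h_n^{12-s})$, i.e.\ the expansion is pushed far enough that the unbounded weight $h_n^{-s}$ is absorbed. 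You instead use only the first-order Lipschitz bound and recover a bounded weight by localizing to the event that one of the two kernel arguments lies in the support of $K$, which under $\|\hat P-P\|_{\infty}=o_p(h_n^{3})$ forces $|(P(Z_i)-p)/h_n|\le A+1$ with probability tending to one, uniformly in $p$. Both routes yield $o_p(h_n^{2})$; yours consumes only one derivative of $K$ rather than six and makes the role of compact support more transparent, while the paper's version slots mechanically into the machinery already built for $K^{(\ell)}$, $\ell\le 5$. One loose end in your write-up: the smooth majorant of $\mathbbm{1}\{|u|\le A+1\}$ is not literally an element of $\mathcal{G}$ as defined in \eqref{concrete_g}, so you should either enlarge the class (Lemmas \ref{VC-type}, \ref{union-VC}, \ref{union-pointwise-measurable} and \ref{expectation} make this immediate for any bounded, Lipschitz, compactly supported function) or apply Corollary \ref{donny} directly to that single function; this is cosmetic, not a gap in the idea. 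Your explicit restriction to $\ell\ge 1$ in the second display is also the right reading of the statement: for $\ell=0$ the quantity is only $O_p(1)$, and the paper's own proof establishes no more than that. The second bound is otherwise handled by the same decomposition in both proofs.
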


\begin{proposition}\label{aux-2}
    Under Assumptions \ref{2}-\ref{covariate}, for any integer $0\leq s\leq 2$, $0\leq k$ and $\ell\in\{1,\cdots,d\}$, we have
    \begin{align*}
   \sup_{p\in[a_{0},b_{0}]}\left| \frac{1}{nh_{n}}\sum_{i=1}^{n}K\left(\frac{\hat{P}(Z_{i})-p}{h_{n}}\right)\left(\frac{\hat{P}(Z_{i})-p}{h_{n}}\right)^{s}X_{i,\ell}\right|=O_{p}(1), 
    \\
     \sup_{p\in[a_{0},b_{0}]}\left|\frac{1}{nh_{n}}\sum_{i=1}^{n}K\left(\frac{\hat{P}(Z_{i})-p}{h_{n}}\right)\left(\frac{\hat{P}(Z_{i})-p}{h_{n}}\right)^{s}X_{i,\ell}P(Z_{i})\right|=O_{p}(1), \\
       \sup_{p\in[a_{0},b_{0}]}\left|\frac{1}{nh_{n}}\sum_{i=1}^{n}K\left(\frac{\hat{P}(Z_{i})-p}{h_{n}}\right)\left(\frac{\hat{P}(Z_{i})-p}{h_{n}}\right)^{s}X_{i,\ell}\left(\frac{\hat{P}(Z_{i})-P(Z_{i})}{h_{n}^{3}}\right)^{k}\right|=o_{p}(1).
\end{align*}
\end{proposition}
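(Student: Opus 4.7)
\textbf{Proof proposal for Proposition \ref{aux-2}.}
The plan is to treat the three statements as variants of the same kernel-weighted empirical-process bound with an additional covariate factor, and to reduce each to results already stated: Proposition \ref{kernel-uniform} (true-$P$ kernel sums), Proposition \ref{aux-1} (the cost of replacing $P$ by $\hat P$), and the uniform rate $\|\hat P - P\|_\infty = o_p(h_n^3)$ from Assumption \ref{basic-assumption}.\ref{7-1}, together with Assumption \ref{basic-assumption}.\ref{covariate} on the conditional moments of $X_{i,\ell}$.

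First I would pass from $\hat P$ to $P$ inside both $K(\cdot)$ and the power $(\cdot)^s$. By the smoothness of $K$ (Assumption \ref{basic-assumption}.\ref{5}) and a one-term Taylor expansion,
\[
\left|K\!\left(\tfrac{\hat P(Z_i)-p}{h_n}\right)\!\left(\tfrac{\hat P(Z_i)-p}{h_n}\right)^{s} - K\!\left(\tfrac{P(Z_i)-p}{h_n}\right)\!\left(\tfrac{P(Z_i)-p}{h_n}\right)^{s}\right| \lesssim \frac{|\hat P(Z_i)-P(Z_i)|}{h_n},
\]
and the right-hand side is $o_p(h_n^2)$ uniformly in $i$. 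Multiplying by $|X_{i,\ell}|$ or $|X_{i,\ell} P(Z_i)|$ and averaging, the moment bound in Assumption \ref{basic-assumption}.\ref{11-3} makes the resulting error uniformly negligible. This reduces the first two statements to the analogous sums with $\hat P$ replaced by $P$.

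Second, for the reduced sums I would decompose the covariate as $X_{i,\ell} = g_\ell(P(Z_i)) + \eta_{i,\ell}$, where $g_\ell(p) := E[X_{i,\ell}\mid P(Z_i)=p]$ and $E[\eta_{i,\ell}\mid P(Z_i)] = 0$. For the conditional-mean piece, the bounded derivative of $g_\ell$ (Assumption \ref{basic-assumption}.\ref{11-2}) gives $g_\ell(P(Z_i)) = g_\ell(p) + O(h_n)$ on the kernel's support, so this contribution is $O_p(1)$ by Proposition \ref{kernel-uniform} applied to $|K(u)u^s|$ and $|K(u)u^{s+1}|$. For the mean-zero residual piece, I would extend Corollary \ref{donny} to the enlarged pointwise-measurable VC-type class
\[
\mathcal{F}_{\mathcal{G},X} = \bigl\{(\eta,p) \mapsto \eta\,g\bigl((p-t)/h_n\bigr)\ :\ g\in\mathcal{G},\ t\in\mathcal{P}\bigr\},
\]
with envelope $|\eta_{i,\ell}|\,\|g\|_\infty$, using the conditional second-moment bound of Assumption \ref{basic-assumption}.\ref{11-3} to verify the Einmahl--Mason variance and envelope conditions. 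The same argument covers the second statement, since $P(Z_i)\in(0,1)$ is bounded and $p\,g_\ell(p)$ inherits continuous differentiability on $[a_0,b_0]$.

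Finally, for the third statement and any integer $k\geq 1$, Assumption \ref{basic-assumption}.\ref{7-1} gives
\[
\sup_{1\leq i\leq n}\left|\frac{\hat P(Z_i)-P(Z_i)}{h_n^{3}}\right|^{k} = o_p(1),
\]
so pulling this uniform factor out of the sum and applying the first statement of the proposition (already shown to be $O_p(1)$) delivers $o_p(1)$. The main obstacle is the extension of the Einmahl--Mason bound in Step 2 to a class whose envelope involves the random covariate $X_{i,\ell}$: the VC-type property of $\mathcal{F}_{\mathcal{G},X}$ is inherited from $\mathcal{F}_{\mathcal{G}}$ because multiplying by a fixed variable preserves the $L^2(Q)$-covering numbers after rescaling by the envelope, but the variance factor $\sigma_n^2$ now depends on the conditional variance of $X_{i,\ell}$ given $P(Z_i)$, and it is at this step that Assumption \ref{basic-assumption}.\ref{11-3} is essential and (as stated) tight.
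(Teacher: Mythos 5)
Your overall architecture matches the paper's: first remove the generated-regressor error using the Lipschitz property of $u\mapsto K(u)u^{s}$ together with Assumption \ref{basic-assumption}.\ref{7-1} and a law of large numbers for $\frac{1}{n}\sum_{i}|X_{i,\ell}|$; then control the true-$P$ kernel sum by splitting a bias part from a fluctuation part; and for the third display pull out the uniform factor $\left(\|\hat P-P\|_{\infty}/h_{n}^{3}\right)^{k}=o_{p}(1)$ and invoke the $O_{p}(1)$ bound already obtained (the paper phrases this with the true-$P$ sum, but the logic is identical). The genuine difference is in the middle step: where you decompose the covariate as $X_{i,\ell}=g_{\ell}(P(Z_{i}))+\eta_{i,\ell}$ and treat the two pieces separately, the paper instead centers the entire summand $K(\cdot)|\cdot|^{s}|X_{i,\ell}|$ at its expectation, bounds the centered empirical process ``as in the proof of Lemma \ref{unifrom_varepsilon_kernel}'' (i.e., via the VC-type maximal inequality, Corollary 5.1 of Chernozhukov et al.), and evaluates the expectation term by a standard kernel-bias calculation using Assumptions \ref{basic-assumption}.\ref{11-2} and \ref{basic-assumption}.\ref{11-3}. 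Your decomposition is equally legitimate and makes more explicit where each of those two assumptions enters.

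However, one step of your plan would fail as written: Corollary \ref{donny} cannot simply be ``extended'' to the class $\mathcal{F}_{\mathcal{G},X}$ with envelope $|\eta|\,\|g\|_{\infty}$. The Dony--Einmahl--Mason machinery behind it (Proposition \ref{donny-aux}) is stated for uniformly bounded classes: condition (G.i) and the requirement $\sup_{g}\|g\|_{\infty}\lesssim\sqrt{n\sigma^{2}/\log(\beta\vee 1/\sigma)}$ are sup-norm conditions, and a conditional second-moment bound on $X_{i,\ell}$ (Assumption \ref{basic-assumption}.\ref{11-3}) cannot substitute for them, so the ``envelope condition'' you claim to verify with \ref{basic-assumption}.\ref{11-3} is not verifiable this way. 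To repair your route you would need either a truncation of $\eta_{i,\ell}$ at a slowly growing level, with the tail controlled by the second moment, or---more simply, and this is what the paper effectively does---to use the moment-based maximal inequality employed in the proof of Lemma \ref{unifrom_varepsilon_kernel} (Corollary 5.1 of Chernozhukov et al.), which tolerates an unbounded envelope possessing finite moments; the VC-type property of your multiplier class does follow from Corollary A.1 of that reference, exactly as you say. With that single substitution your proof goes through and yields the same conclusion and rates as the paper's.
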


\renewcommand{\theequation}{G.\arabic{equation}} 
\setcounter{equation}{0}

\section{Proofs in Appendix  \ref{proof_unif_conv}}\label{proof_proof_unif_conv}

\subsection{The Convergence Rate of Inverse Matrix}
\begin{lemma}\label{inverse-convergence}
Suppose Assumptions \ref{2}-\ref{covariate} holds. Then, uniformly in $p\in[a_{0},b_{0}]$, we have the following result. 
\begin{align*}
    \frac{1}{nh_{n}}\sum_{i=1}^{n}K\left(\frac{\hat{P}(Z_{i})-p}{h_{n}}\right)&=f_{P}(p)+o_{p}(h_{n})+O_{P}\left(\sqrt{\frac{\log n}{nh_{n}}}\right), \\
    \frac{1}{nh_{n}}\sum_{i=1}^{n}K\left(\frac{\hat{P}(Z_{i})-p}{h_{n}}\right)(\hat{P}(Z_{i})-p)&=\mu_{2}(K)h^{2}_{n}(\partial f_{P}(p)/\partial p)+o_{p}(h_{n}^{2})+O_{p}\left(h_{n}\sqrt{\frac{\log n}{n h_{n}}}\right), \\
    \frac{1}{nh_{n}}\sum_{i=1}^{n}K\left(\frac{\hat{P}(Z_{i})-p}{h_{n}}\right)(\hat{P}(Z_{i})-p)^{2}&=\mu_{2}(K)f_{P}(p)h_{n}^{2}+o_{p}(h_{n}^{3})+O_{p}\left(h_{n}^{2}\sqrt{\frac{\log n}{n h_{n}}}\right),    \\
   \frac{1}{nh_{n}}\sum_{i=1}^{n}K\left(\frac{\hat{P}(Z_{i})-p}{h_{n}}\right)(\hat{P}(Z_{i})-p)^{3}&=\mu_{4}(K)(\partial f_{P}(p)/\partial p)h_{n}^{4}+o_{p}(h_{n}^{4})+O_{p}\left(h_{n}^{3}\sqrt{\frac{\log n}{n h_{n}}}\right),    \\
    \frac{1}{nh_{n}}\sum_{i=1}^{n}K\left(\frac{\hat{P}(Z_{i})-p}{h_{n}}\right)(\hat{P}(Z_{i})-p)^{4}&=\mu_{4}(K)f_{P}(p)h_{n}^{4}+o_{p}(h_{n}^{5})+O_{p}\left(h_{n}^{4}\sqrt{\frac{\log n}{n h_{n}}}\right).
\end{align*}
\end{lemma}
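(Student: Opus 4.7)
The plan is to prove each of the five convergence statements through a common three-step scheme: reduce from $\hat{P}(Z_i)$ to the true propensity score $P(Z_i)$, split the resulting ``ideal'' sum into its expectation plus a centered deviation, and bound each piece separately. All five lines ($s=0,\ldots,4$) follow the same template, differing only in which moment of $K$ enters the leading term and at what order the Taylor remainder appears.

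For the reduction step, fix $s$ and use the identity $(\hat{P}-p)^s=\sum_{k=0}^s\binom{s}{k}(P-p)^{s-k}(\hat{P}-P)^k$ together with adding and subtracting $K\bigl((P(Z_i)-p)/h_n\bigr)(P(Z_i)-p)^s$. The discrepancy splits into (i) a kernel-difference term $[K_{\hat P}-K_P](P-p)^s$, controlled by the first statement of Proposition \ref{aux-1}, and (ii) binomial cross-terms involving $(\hat{P}-P)^k$ for $k\geq 1$, controlled by the second statement of Proposition \ref{aux-1} together with the uniform bound $\|\hat{P}-P\|_\infty=o_p(h_n^3)$ from Assumption \ref{basic-assumption}.\ref{7-1}. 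Each piece is $o_p(h_n^{s+1})$ and hence absorbed by the claimed remainder.

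For the ideal sum (with true $P(Z_i)$), the change of variables $u=(y-p)/h_n$ followed by a Taylor expansion of $f_P$ around $p$ yields
\begin{align*}
E\!\left[\tfrac{1}{h_n}K\!\left(\tfrac{P(Z)-p}{h_n}\right)(P(Z)-p)^s\right]=h_n^s\int K(u)u^s f_P(p+h_n u)\,du.
\end{align*}
Using $\int K=1$, the symmetry of $K$ (which kills the odd moments $\mu_1,\mu_3,\mu_5$), and the bounded second derivative of $f_P$ from Assumption \ref{basic-assumption}.\ref{2}, this equals $f_P(p)$ for $s=0$, $\mu_2(K)h_n^2 f_P'(p)$ for $s=1$, $\mu_2(K)h_n^2 f_P(p)$ for $s=2$, $\mu_4(K)h_n^4 f_P'(p)$ for $s=3$, and $\mu_4(K)h_n^4 f_P(p)$ for $s=4$, in each case up to an $O(h_n^{s+2})$ bias residual that fits inside the allowed $o_p(h_n^{s+1})$. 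For the centered deviation, I rewrite the ideal sum as $h_n^s\bigl(g_{n,h_n}(p)-E g_{n,h_n}(p)\bigr)$ with $g(u)=u^sK(u)\in\mathcal{G}$ from \eqref{concrete_g} and apply Corollary \ref{donny} to obtain $\sup_{p\in\mathcal{P}}|g_{n,h_n}-E g_{n,h_n}|=O_p\!\bigl(\sqrt{\log n/(nh_n)}\bigr)$; multiplying by the prefactor $h_n^s$ produces exactly the stochastic remainder stated in each line.

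The main obstacle is the reduction step: because the local quadratic estimator uses the \emph{generated} regressor $\hat{P}(Z_i)$, the estimation error enters every summand multiplicatively through both the kernel weight and the polynomial factor $(\hat{P}-p)^s$, and its cumulative effect across the $O(nh_n)$ active terms must be kept within $o_p(h_n^{s+1})$. The uniform rate $\|\hat{P}-P\|_\infty=o_p(h_n^3)$ from Assumption \ref{basic-assumption}.\ref{7-1}, combined with the VC-type envelopment of $K$ and its derivatives supplied by Propositions \ref{kernel-uniform}--\ref{aux-1}, is precisely what is needed to absorb these errors. Once this reduction is secured, the remainder of the argument is a standard kernel bias--variance decomposition.
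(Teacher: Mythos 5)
Your proposal is correct and follows essentially the same route as the paper's own proof: reduce from $\hat{P}$ to $P$ via the binomial expansion plus the add-and-subtract of the kernel evaluated at $P$, bound both resulting pieces by Proposition \ref{aux-1} (using $\|\hat{P}-P\|_{\infty}=o_p(h_n^3)$), and then treat the ideal sum by Corollary \ref{donny} for the uniform stochastic deviation together with the standard change-of-variables/Taylor/symmetry computation for the bias. The order-of-magnitude bookkeeping in each of the five lines matches the paper's.
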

By combining the result of Lemma \ref{inverse-convergence}, we obtain
\begin{align*}
&e_{2}^{\prime}\big[\frac{1}{n}\Gamma^{\prime}(\hat{V})\Omega(\hat{V})\Gamma(\hat{V})\big]^{-1} \\
=&\begin{bmatrix}
o_{p}(1), &(f_{P}(p)\mu_{2}(K)h_{n}^{2})^{-1}(1+o_{p}(n^{-c})), & -f^{\prime}_{P}(p)(f^2_{P}(p)\mu_{2}(K)h_{n}^{2})^{-1}(1+o_{p}(1))
\end{bmatrix}.
\end{align*}
\begin{proof}[Proof of  Lemma \ref{inverse-convergence}]
It holds from Corollary \ref{donny}  and the expected value of kernel functions that, uniformly in $p\in[a_{0},b_{0}]$,
\begin{align*}
    \frac{1}{nh_{n}}\sum_{i=1}^{n}K\left(\frac{P(Z_{i})-p}{h_{n}}\right)&=f_{P}(p)+o\left(h_{n}\right)+O_{P}\left(\sqrt{\frac{\log n}{nh_{n}}}\right), \\
    \frac{1}{nh_{n}}\sum_{i=1}^{n}K\left(\frac{P(Z_{i})-p}{h_{n}}\right)(P(Z_{i})-p)&=\mu_{2}(K)h^{2}_{n}(\partial f_{P}(p)/\partial p)+o(h_{n}^{2})+O_{p}\left(h_{n}\sqrt{\frac{\log n}{n h_{n}}}\right), \\
   \frac{1}{nh_{n}}\sum_{i=1}^{n}K\left(\frac{P(Z_{i})-p}{h_{n}}\right)(P(Z_{i})-p)^{2}&=\mu_{2}(K)f_{P}(p)h_{n}^{2}+o(h_{n}^{3})+O_{p}\left(h_{n}^{2}\sqrt{\frac{\log n}{n h_{n}}}\right),    \\
   \frac{1}{nh_{n}}\sum_{i=1}^{n}K\left(\frac{P(Z_{i})-p}{h_{n}}\right)(P(Z_{i})-p)^{3}&=\mu_{4}(K)(\partial f_{P}(p)/\partial p)h_{n}^{4}+o(h_{n}^{4})+O_{p}\left(h_{n}^{3}\sqrt{\frac{\log n}{n h_{n}}}\right),    \\
    \frac{1}{nh_{n}}\sum_{i=1}^{n}K\left(\frac{P(Z_{i})-p}{h_{n}}\right)(P(Z_{i})-p)^{4}&=\mu_{4}(K)f_{P}(p)h_{n}^{4}+o(h_{n}^{5})+O_{p}\left(h_{n}^{4}\sqrt{\frac{\log n}{n h_{n}}}\right).
\end{align*}
Therefore, we need to show the following convergence result for $0\leq s\leq 5$:
\begin{align}
    \sup_{p\in[a_{0},b_{0}]}&\left|\frac{1}{n}\sum_{i=1}^{n}\frac{1}{h_{n}}K\left(\frac{\hat{P}(Z_{i})-p}{h_{n}}\right)\left(\frac{\hat{P}(Z_{i})-p}{h_{n}}\right)^{s}-\frac{1}{n}\sum_{i=1}^{n}\frac{1}{h_{n}}K\left(\frac{P(Z_{i})-p}{h_{n}}\right)\left(\frac{P(Z_{i})-p}{h_{n}}\right)^{s}\right| \notag \\
    =&o_{p}(h_{n}). \label{inverse_core}
\end{align}

For the case $s=0$, \eqref{inverse_core} trivially holds from Proposition \ref{aux-1}. For $1\leq s\leq 5$, a straightforward calculation gives
\begin{align*}
& \sup_{p\in[a_{0},b_{0}]}\left|\frac{1}{n}\sum_{i=1}^{n}\frac{1}{h_{n}}K\left(\frac{\hat{P}(Z_{i})-p}{h_{n}}\right)\left(\frac{\hat{P}(Z_{i})-p}{h_{n}}\right)^{s}-\frac{1}{n}\sum_{i=1}^{n}\frac{1}{h_{n}}K\left(\frac{P(Z_{i})-p}{h_{n}}\right)\left(\frac{P(Z_{i})-p}{h_{n}}\right)^{s}\right| \notag \\
    \leq& \sum_{\ell=0}^{s-1}sC_{\ell}
\sup_{p\in[a_{0},b_{0}]}\left|\frac{1}{n}\sum_{i=1}^{n}\frac{1}{h_{n}}K\left(\frac{\hat{P}(Z_{i})-p}{h_{n}}\right)\left(\frac{P(Z_{i})-p}{h_{n}}\right)^{\ell}\left(\frac{\hat{P}(Z_{i})-P(Z_{i})}{h_{n}}\right)^{s-\ell}\right| \notag \\
    +&\sup_{p\in[a_{0},b_{0}]}\left|\frac{1}{n}\sum_{i=1}^{n}\frac{1}{h_{n}}\left(K\left(\frac{\hat{P}(Z_{i})-p}{h_{n}}\right)-K\left(\frac{P(Z_{i})-p}{h_{n}}\right)\right)\left(\frac{P(Z_{i})-p}{h_{n}}\right)^{s}\right|. 
\end{align*}
where $sC_{\ell}$ refers to a combination.
By Proposition \ref{aux-1},  we have
\begin{align*}
    \sup_{p\in[a_{0},b_{0}]}\left|\frac{1}{n}\sum_{i=1}^{n}\frac{1}{h_{n}}K\left(\frac{\hat{P}(Z_{i})-p}{h_{n}}\right)\left(\frac{P(Z_{i})-p}{h_{n}}\right)^{\ell}\left(\frac{\hat{P}(Z_{i})-P(Z_{i})}{h_{n}}\right)^{s-\ell}\right|&=o_{p}(h^2_{n}), \\
    \sup_{p\in[a_{0},b_{0}]}\left|\frac{1}{n}\sum_{i=1}^{n}\frac{1}{h_{n}}\left(K\left(\frac{\hat{P}(Z_{i})-p}{h_{n}}\right)-K\left(\frac{P(Z_{i})-p}{h_{n}}\right)\right)\left(\frac{P(Z_{i})-p}{h_{n}}\right)^{s}\right|&=o_{p}(h_{n})
\end{align*}
for $1\leq s\leq 5$ and $0\leq s-1$. Therefore, required result \eqref{inverse_core} holds for $0\leq s\leq 5$.

\end{proof}

\subsection{\eqref{CL_S-1} and \eqref{CL_S-2}}

\begin{lemma}\label{unifrom_dif}
Under Assumptions \ref{2}-\ref{covariate}, we have following results:
\begin{align*}
     \frac{1}{nh_{n}}\sum_{i=1}^{n}K\left(\frac{\hat{P}(Z_{i})-p}{h_{n}}\right)(\hat{\tilde{Y}}_{i}-\tilde{Y}_{i})&=O_{p}(h^{2}_{n}), \\
     \frac{1}{nh^{3}_{n}}\sum_{i=1}^{n}K\left(\frac{\hat{P}(Z_{i})-p}{h_{n}}\right)(\hat{P}(Z_{i})-p)(\hat{\tilde{Y}}_{i}-\tilde{Y}_{i})&=O_{p}(h^{2}_{n}), \\
     \frac{1}{nh^{3}_{n}}\sum_{i=1}^{n}K\left(\frac{\hat{P}(Z_{i})-p}{h_{n}}\right)(\hat{P}(Z_{i})-p)^2(\hat{\tilde{Y}}_{i}-\tilde{Y}_{i})&=O_{p}(h^{2}_{n}),
\end{align*}
and 
\begin{align*}
     \frac{1}{nh_{n}}\sum_{i=1}^{n}K\left(\frac{\hat{P}(Z_{i})-p}{h_{n}}\right)P_{i}&=O_{p}(h^{2}_{n}), \\
     \frac{1}{nh^{3}_{n}}\sum_{i=1}^{n}K\left(\frac{\hat{P}(Z_{i})-p}{h_{n}}\right)(\hat{P}(Z_{i})-p)P_{i}&=O_{p}(h^{2}_{n}), \\
     \frac{1}{nh^{3}_{n}}\sum_{i=1}^{n}K\left(\frac{\hat{P}(Z_{i})-p}{h_{n}}\right)(\hat{P}(Z_{i})-p)^2P_{i}&=O_{p}(h^{2}_{n}).
\end{align*}
\end{lemma}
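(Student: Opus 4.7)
The plan is to expand $\hat{\tilde Y}_i - \tilde Y_i$ and $P_i$ explicitly, substitute into each of the six displayed sums, and control the resulting pieces by combining the parametric rates in Assumption \ref{basic-assumption}.\ref{11-1}, the uniform propensity-score rate $\|\hat P - P\|_\infty = o_p(h_n^3)$ from Assumption \ref{basic-assumption}.\ref{7-1}, and the uniform kernel-average bounds of Proposition \ref{aux-2}.

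For the first three displays I start from
\begin{equation*}
\hat{\tilde Y}_i - \tilde Y_i = -(\hat\beta_0-\beta_0)'X_i - \bigl(\widehat{\beta_1-\beta_0}-(\beta_1-\beta_0)\bigr)'X_i\hat P(Z_i) - (\beta_1-\beta_0)'X_i\bigl(\hat P(Z_i)-P(Z_i)\bigr),
\end{equation*}
and insert this into each sum. The first two pieces factor out $\hat\beta_0-\beta_0$ and $\widehat{\beta_1-\beta_0}-(\beta_1-\beta_0)$, both $O_p(n^{-1/2})$, leaving kernel-weighted averages of either $X_{i,\ell}$ or $X_{i,\ell}\hat P(Z_i) = X_{i,\ell}P(Z_i) + X_{i,\ell}(\hat P(Z_i)-P(Z_i))$. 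Writing $(\hat P(Z_i)-p)^{j} = h_n^{j}\bigl(\tfrac{\hat P(Z_i)-p}{h_n}\bigr)^{j}$ and applying Proposition \ref{aux-2} uniformly in $p\in[a_0,b_0]$, these averages are $O_p(1)$ once reduced to the canonical $\tfrac{1}{nh_n}$ normalization (the $X_i(\hat P-P)$ piece that arises in rewriting $X_i\hat P$ contributes an extra $o_p(h_n^3)$ and is harmless). Collecting powers of $h_n$, the contributions of the first two pieces are $O_p(n^{-1/2})$ for the first display, $O_p(n^{-1/2}/h_n)$ for the second, and $O_p(n^{-1/2})$ for the third. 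Under Assumption \ref{basic-assumption}.\ref{8} (i.e., $1/7<\eta<1/6$) one has $n^{-1/2} = o(h_n^2)$ and $n^{-1/2}/h_n = o(h_n^2)$, so all three are $o_p(h_n^2)$. The third piece of the decomposition inherits an additional factor of $\|\hat P - P\|_\infty = o_p(h_n^3)$, which combined with the same Proposition \ref{aux-2} bounds on $|X_{i,\ell}|$-weighted kernel averages yields $o_p(h_n^2)$ as well. The desired $O_p(h_n^2)$ bounds for the first three displays follow.

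For the three $P_i$ sums I substitute the explicit representation
\begin{equation*}
P_i = m'(p)\bigl(P(Z_i)-\hat P(Z_i)\bigr) + 2 m''(p)(\hat P(Z_i)-p)\bigl(P(Z_i)-\hat P(Z_i)\bigr) + m''(p)\bigl(P(Z_i)-\hat P(Z_i)\bigr)^2,
\end{equation*}
where $m(p):=E[\tilde Y_i\mid P(Z_i)=p]$ has bounded first and second derivatives by Assumption \ref{basic-assumption}.\ref{4}. Every term of $P_i$ carries at least one factor of $P(Z_i)-\hat P(Z_i) = o_p(h_n^3)$ uniformly. Pulling this factor out together with the bounded derivatives of $m$, the remaining kernel averages are exactly of the form covered by Proposition \ref{aux-2} but with $g_i\equiv 1$ in place of $X_{i,\ell}$, which is a trivial special case of the same Lipschitz/VC argument underlying Appendix \ref{main-proposition}. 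The dominant (first-term) contributions are therefore $o_p(h_n^3)\cdot O_p(1) = o_p(h_n^3)$ for the $\tfrac{1}{nh_n}$-normalized sum and $o_p(h_n^3)\cdot O_p(1/h_n) = o_p(h_n^2)$ for the two $\tfrac{1}{nh_n^3}$-normalized sums; the remaining two terms in $P_i$ are strictly smaller due to the additional factors of $(\hat P(Z_i)-p)=O(h_n)$ on the kernel support and $(P(Z_i)-\hat P(Z_i))=o_p(h_n^3)$. All three $P_i$-sums are therefore $O_p(h_n^2)$.

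The main technical subtlety is that the kernel is evaluated at the generated regressor $\hat P(Z_i)$ rather than at $P(Z_i)$, so standard uniform rates for $P$-indexed kernel averages do not apply verbatim; the required swap-and-track bookkeeping is precisely what Propositions \ref{aux-1} and \ref{aux-2} package. Given them, the remaining work is careful rate arithmetic under the bandwidth condition $1/7<\eta<1/6$.
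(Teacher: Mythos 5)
Your proposal is correct and follows essentially the same route as the paper's proof: the same linear decomposition of $\hat{\tilde Y}_i-\tilde Y_i$ into a $(\hat\beta_0-\beta_0)$ piece, a $(\widehat{\beta_1-\beta_0}-(\beta_1-\beta_0))$ piece, and a $(\hat P-P)$ piece (up to an algebraically equivalent regrouping of the cross term), the same explicit substitution of the definition of $P_i$ with its bounded derivatives of $m$ factored out, and the same reduction to the uniform kernel-average bounds of Propositions \ref{aux-1} and \ref{aux-2} followed by rate arithmetic under $1/7<\eta<1/6$. The only cosmetic difference is that for the $P_i$ sums you invoke Proposition \ref{aux-2} with a constant in place of $X_{i,\ell}$ where the paper cites Proposition \ref{aux-1}, which is the same estimate.
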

When Lemma \ref{inverse-convergence} and \ref{unifrom_dif}  hold, we have
\begin{align*}
    &e_{2}^{\prime}\left[\frac{1}{n}\Gamma^{\prime}(\hat{V})\Omega(\hat{V})\Gamma(\hat{V})\right]^{-1}\frac{1}{n}\Gamma^{\prime}(\hat{V})\Omega(\hat{V})(\widehat{\tilde{\mathbf{Y}}}-\tilde{\mathbf{Y}}) \\
    =&o_{p}(1) \frac{1}{nh_{n}}\sum_{i=1}^{n}K\left(\frac{\hat{P}(Z_{i})-p}{h_{n}}\right)(\hat{\tilde{Y}}_{i}-\tilde{Y}_{i}) \\
    +&\left(\mu_{2}(K)f_{P}(p)\right)^{-1}(1+o_{p}(n^{-c}))\frac{1}{nh^{3}_{n}}\sum_{i=1}^{n}K\left(\frac{\hat{P}(Z_{i})-p}{h_{n}}\right)(\hat{P}(Z_{i})-p)(\hat{\tilde{Y}}_{i}-\tilde{Y}_{i}) \\
    +&O_{p}(1)\frac{1}{nh^{3}_{n}}\sum_{i=1}^{n}K\left(\frac{\hat{P}(Z_{i})-p}{h_{n}}\right)(\hat{P}(Z_{i})-p)^2(\hat{\tilde{Y}}_{i}-\tilde{Y}_{i})\\
    =&O_{p}(h_{n}^{2}). \tag{\ref{CL_S-1}}
\end{align*}
Similarly, we also obtain
\begin{equation*}
    e_{2}^{\prime}\left[\frac{1}{n}\Gamma^{\prime}(\hat{V})\Omega(\hat{V})\Gamma(\hat{V})\right]^{-1}\frac{1}{n}\Gamma^{\prime}(\hat{V})\Omega(\hat{V})P=O_{p}(h_{n}^{2}). \tag{\ref{CL_S-2}}
\end{equation*}

\begin{proof}[Proof of Lemma \ref{unifrom_dif}]

By definition, for each $i$, we have
\begin{align*}
    \hat{\tilde{Y}}_{i}-\tilde{Y}_{i}=&(Y_{i}-\hat{\beta}_{0}^\prime X_{i}+(\widehat{\beta_{1}-\beta_{0}})^\prime X_{i}\hat{P}(Z_{i}))-(Y_{i}-\beta_{0}^\prime X_{i}+(\beta_{1}-\beta_{0})^\prime X_{i}P(Z_{i})).
\end{align*}
Then, we have
\begin{align*}
    &\frac{1}{nh_{n}}\sum_{i=1}^{n}K\left(\frac{\hat{P}(Z_{i})-p}{h_{n}}\right)(\hat{\tilde{Y}}_{i}-\tilde{Y}_{i}) \\
    =&\sum_{\ell=1}^{d}(\beta_{0}-\hat{\beta}_{0})_{\ell}\frac{1}{nh_{n}}\sum_{i=1}^{n}K\left(\frac{\hat{P}(Z_{i})-p}{h_{n}}\right)X_{i,\ell} \\
    +&\sum_{\ell=1}^{d}\left(\widehat{(\beta_{1}-\beta_{0})}-(\beta_{1}-\beta_{0})\right)_{\ell}\frac{1}{nh_{n}}\sum_{i=1}^{n}K\left(\frac{\hat{P}(Z_{i})-p}{h_{n}}\right)X_{i,\ell}P(Z_{i}) \\
    -&\sum_{\ell=1}^{d}\left(\widehat{\beta_{1}-\beta_{0}}\right)_{\ell}\frac{1}{nh_{n}}\sum_{i=1}^{n}K\left(\frac{\hat{P}(Z_{i})-p}{h_{n}}\right)X_{i,\ell}(P(Z_{i})-\hat{P}(Z_{i}))
\end{align*}
and, for $1\leq s\leq 2$,
\begin{align*}
    &\frac{1}{nh^{3}_{n}}\sum_{i=1}^{n}K\left(\frac{\hat{P}(Z_{i})-p}{h_{n}}\right)(\hat{P}(Z_{i})-p)^s(\hat{\tilde{Y}}_{i}-\tilde{Y}_{i}) \\
    =&\sum_{\ell=1}^{d}(\beta_{0}-\hat{\beta}_{0})_{\ell}\frac{1}{nh^3_{n}}\sum_{i=1}^{n}K\left(\frac{\hat{P}(Z_{i})-p}{h_{n}}\right)(\hat{P}(Z_{i})-p)^sX_{i,\ell} \\
    +&\sum_{\ell=1}^{d}\left(\widehat{(\beta_{1}-\beta_{0})}-(\beta_{1}-\beta_{0})\right)_{\ell}\frac{1}{nh^3_{n}}\sum_{i=1}^{n}K\left(\frac{\hat{P}(Z_{i})-p}{h_{n}}\right)(\hat{P}(Z_{i})-p)^sX_{i,\ell}P(Z_{i}) \\
    -&\sum_{\ell=1}^{d}\left(\widehat{\beta_{1}-\beta_{0}}\right)_{\ell}\frac{1}{nh^3_{n}}\sum_{i=1}^{n}K\left(\frac{\hat{P}(Z_{i})-p}{h_{n}}\right)(\hat{P}(Z_{i})-p)^sX_{i,\ell}(P(Z_{i})-\hat{P}(Z_{i})).
\end{align*}
Hence, the required result holds from Assumptions \ref{2}-\ref{covariate} and Proposition \ref{aux-2}.

For the latter part of the lemma, by definition of $P_{i}$, we have
\begin{align*}
    &\frac{1}{nh_{n}}\sum_{i=1}^{n}K\left(\frac{\hat{P}(Z_{i})-p}{h_{n}}\right)P_{i} \\
    =&\frac{\partial E[\tilde{Y}_{i}|P(Z)=p]}{\partial P(Z)}\frac{1}{nh_{n}}\sum_{i=1}^{n}K\left(\frac{\hat{P}(Z_{i})-p}{h_{n}}\right)
    (P(Z_{i})-\hat{P}(Z_{i})) \\
    +&\frac{\partial^2 E[\tilde{Y}_{i}|P(Z)=p]}{\partial (P(Z))^2}\frac{1}{nh_{n}}\sum_{i=1}^{n}K\left(\frac{\hat{P}(Z_{i})-p}{h_{n}}\right)\left(2(\hat{P}(Z_{i})-p)(P(Z_{i})-\hat{P}(Z_{i}))+(P(Z_{i})-\hat{P}(Z_{i}))^2\right)
\end{align*}
and, for $1\leq s\leq 2$,
\begin{align*}
        &\frac{1}{nh^3_{n}}\sum_{i=1}^{n}K\left(\frac{\hat{P}(Z_{i})-p}{h_{n}}\right)(\hat{P}(Z_{i})-p)^sP_{i} \\
    =&\frac{\partial E[\tilde{Y}_{i}|P(Z)=p]}{\partial P(Z)}\frac{1}{nh^3_{n}}\sum_{i=1}^{n}K\left(\frac{\hat{P}(Z_{i})-p}{h_{n}}\right)(\hat{P}(Z_{i})-p)^s
    (P(Z_{i})-\hat{P}(Z_{i})) \\
    +&\frac{\partial^2 E[\tilde{Y}_{i}|P(Z)=p]}{\partial (P(Z))^2}\frac{1}{nh^{3}_{n}}\sum_{i=1}^{n}K\left(\frac{\hat{P}(Z_{i})-p}{h_{n}}\right)2(\hat{P}(Z_{i})-p)^{s+1}(P(Z_{i})-\hat{P}(Z_{i})) \\
    +&\frac{\partial^2 E[\tilde{Y}_{i}|P(Z)=p]}{\partial (P(Z))^2}\frac{1}{nh^3_{n}}\sum_{i=1}^{n}K\left(\frac{\hat{P}(Z_{i})-p}{h_{n}}\right)(P(Z_{i})-\hat{P}(Z_{i}))^2(\hat{P}(Z_{i})-p)^{s}.
\end{align*}
Because  first and second derivatives of $m(p)$ are continuous on $[a_{0},b_{0}]$, those functions are bounded. Therefore, required results hold from Proposition \ref{aux-1} and the binomial theorem as in the proof of Lemma \ref{inverse-convergence}.
\end{proof}

\subsection{\eqref{CL_2}}
In this section, we consider the convergence rate of the following terms:
\begin{equation*}
    \frac{1}{n}\Gamma^{\prime}(\hat{V})\Omega(\hat{V})r=\begin{bmatrix}
 \frac{1}{n}\sum_{i=1}^{n}h_n^{-1}K\left(\frac{\hat{P}(Z_{i})-p}{h_n}\right)\frac{\partial^3 E[Y_{i}|P(Z)=p+\lambda_{i}(P(Z_{i})-p)]}{\partial (P(Z))^3}(P(Z_{i})-p)^{3} \\
\frac{1}{n}\sum_{i=1}^{n}h_n^{-1}K\big(\frac{\hat{P}(Z_{i})-p}{h_n}\big)\frac{\partial^3 E[Y_{i}|P(Z)=p+\lambda_{i}(P(Z_{i})-p)]}{\partial (P(Z))^{3}}(P(Z_{i})-p)^{3}(\hat{P}(Z_{i})-p)  \\
\frac{1}{n}\sum_{i=1}^{n}h_n^{-1}K\big(\frac{\hat{P}(Z_{i})-p}{h_n}\big)\frac{\partial^3 E[Y_{i}|P(Z)=p+\lambda_{i}(P(Z_{i})-p)]}{\partial (P(Z))^{3}}(P(Z_{i})-p)^{3}(\hat{P}(Z_{i})-p)^2  
\end{bmatrix}
\end{equation*}
where $\lambda_{i}\in(0,1)$.
\begin{lemma}\label{r-convergence}
Suppose Assumptions \ref{2}-\ref{covariate} hold. Then, uniformly in $p\in[a_{0},b_{0}]$, we have the following result.
\begin{align*}
    &\frac{1}{nh_{n}}\sum_{i=1}^{n}K\left(\frac{\hat{P}(Z_{i})-p}{h_n}\right)m_{3}(P(Z_{i}))(P(Z_{i})-p)^{3}=o_{p}(h_{n}^{2}), \\
    &\frac{1}{nh_{n}^{3}}\sum_{i=1}^{n}K\left(\frac{\hat{P}(Z_{i})-p}{h_n}\right)m_{3}(P(Z_{i}))(P(Z_{i})-p)^{3}(\hat{P}(Z_{i})-p)=O_{p}(h_{n}^{2}), \\
    &\frac{1}{nh_{n}^{3}}\sum_{i=1}^{n}K\left(\frac{\hat{P}(Z_{i})-p}{h_n}\right)m_{3}(P(Z_{i}))(P(Z_{i})-p)^{3}(\hat{P}(Z_{i})-p)^2=o_{p}(h_{n}^{3}),
\end{align*}
where $m_{3}(P(Z_{i}))=\partial^{3} E[Y_{i}|P(Z)=p+\lambda_{i}(P(Z_{i})-p)]/\partial (P(Z))^3$ and $\lambda_{i}\in(0,1)$.
\end{lemma}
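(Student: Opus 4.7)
The plan is to combine three ingredients: the compact support of $K$ (Assumption \ref{basic-assumption}.\ref{5}), which localises $|\hat P(Z_i) - p| \leq c_K h_n$ on the support of $K_i := K((\hat P(Z_i)-p)/h_n)$; the sharp approximation $\|\hat P - P\|_\infty = o_p(h_n^3)$ (Assumption \ref{basic-assumption}.\ref{7-1}), giving $\delta_i := P(Z_i) - \hat P(Z_i) = o_p(h_n^3)$ uniformly in $i$; and the uniform boundedness and continuity of $m''' = m_3$ from Assumption \ref{basic-assumption}.\ref{4}. Together these yield $|P(Z_i) - p| = O_p(h_n)$ and $\sup_i |m_3(P(Z_i))| \leq M$ on the support of $K_i$. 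The argument also uses Proposition \ref{kernel-uniform} throughout, which gives $\sup_{p\in\mathcal{P}} \frac{1}{nh_n}\sum_i |K_i| = O_p(1)$.

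For the first statement, the plan is to bound the sum in absolute value: since $|m_3(P_i)(P_i - p)^3| = O_p(h_n^3)$ on the support of $K_i$, the sum is at most $O_p(h_n^3) \cdot \frac{1}{nh_n}\sum_i|K_i| = O_p(h_n^3) = o_p(h_n^2)$. The second statement follows identically, with an additional factor $|\hat P_i - p| \leq c_K h_n$ and the normalisation $\frac{1}{nh_n^3}\sum_i|K_i| = O_p(h_n^{-2})$, yielding $O_p(h_n^3) \cdot h_n \cdot h_n^{-2} = O_p(h_n^2)$.

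The third statement is the main technical obstacle, because the same sizing argument only gives $O_p(h_n^3)$, whereas we need $o_p(h_n^3)$. The plan is to exploit the symmetry of $K$. First, expand
\[
(P_i - p)^3 (\hat P_i - p)^2 = (\hat P_i - p)^5 + 3(\hat P_i - p)^4 \delta_i + 3(\hat P_i - p)^3 \delta_i^2 + (\hat P_i - p)^2 \delta_i^3.
\]
Each summand with a positive power of $\delta_i$ is $o_p(h_n^3)$ by direct bounding, because every $\delta_i$ factor contributes $o_p(h_n^3)$; for instance the $\delta_i^1$ term is at most $3M c_K^4 h_n^4 \cdot o_p(h_n^3) \cdot O_p(h_n^{-2}) = o_p(h_n^5)$. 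For the leading term $\frac{1}{nh_n^3}\sum_i K_i m_3(P_i)(\hat P_i - p)^5$, decompose $m_3(P_i) = m_3(p) + [m_3(P_i) - m_3(p)]$. By uniform continuity of $m_3$ on a compact neighbourhood of $\mathcal{P}$ (Assumption \ref{basic-assumption}.\ref{4}) together with $|P_i - p| = O_p(h_n)$, the difference piece contributes at most $o(1) \cdot c_K^5 h_n^5 \cdot O_p(h_n^{-2}) = o_p(h_n^3)$. The $m_3(p)$ piece is where kernel symmetry enters: $\int u^5 K(u)\,du = 0$, so Taylor-expanding $f_P(p + h_n u)$ gives $E[K_1 (\hat P - p)^5] = h_n^7 f_P'(p)\mu_6 + O(h_n^8)$; combined with the uniform fluctuation bound of Corollary \ref{donny} applied to $g(u) = u^5 K(u) \in \mathcal{G}$, this yields $\sup_{p\in\mathcal{P}} |\frac{1}{nh_n^3}\sum_i K_i(\hat P_i - p)^5| = O(h_n^4) + O_p(h_n^{5/2}\sqrt{\log n/n})$, and both summands are $o_p(h_n^3)$ under the bandwidth condition Assumption \ref{basic-assumption}.\ref{8}.

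The hardest step is this last one: the naive sizing bound saturates at $O_p(h_n^3)$, so kernel symmetry must be invoked to gain an additional factor of $h_n$. The argument must carefully split off the $m_3(p)$ component (for which $\int u^5 K(u)\,du = 0$ applies) from the continuity-based remainder, and the sharp rate $\|\hat P - P\|_\infty = o_p(h_n^3)$ in Assumption \ref{basic-assumption}.\ref{7-1} is what ensures that the generated-regressor errors $\delta_i$ contribute only higher-order terms uniformly in $p \in \mathcal{P}$.
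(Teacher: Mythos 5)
Your proposal is essentially sound and reaches the right conclusions, but it organizes the argument in the opposite direction from the paper. The paper first establishes the rates for the true-propensity-score sums $\frac{1}{nh_n^{3}}\sum_i K((P(Z_i)-p)/h_n)\,m_3(P(Z_i))(P(Z_i)-p)^{3+j}$ via Corollary \ref{donny} and standard kernel-moment expansions (symmetry killing the odd moments), and then transfers to the $\hat P$ versions by expanding $(\hat P(Z_i)-p)^{s}$ in powers of $(P(Z_i)-p)$ and $(\hat P(Z_i)-P(Z_i))$ and invoking Proposition \ref{aux-1}. You instead expand $(P(Z_i)-p)^3(\hat P(Z_i)-p)^2$ in powers of $(\hat P(Z_i)-p)$ and $\delta_i$, dispose of every $\delta_i$ term by brute force using $\|\hat P-P\|_\infty=o_p(h_n^3)$, and attack the leading $(\hat P(Z_i)-p)^5$ term directly. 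Your observation that the first two displays need only crude absolute-value bounds (the paper obtains them as corollaries of sharper true-$P$ rates) is a genuine simplification, and your identification of the third display as the only place where $\int u^5K(u)\,du=0$ must be invoked matches the paper's logic exactly.

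The one step you gloss over is the last one. Having reduced to $m_3(p)\cdot\frac{1}{nh_n^{3}}\sum_i K((\hat P(Z_i)-p)/h_n)(\hat P(Z_i)-p)^5$, you compute $E[K_1(\hat P-p)^5]$ by integrating against $f_P$ and apply Corollary \ref{donny} to $g(u)=u^5K(u)$ --- but both devices are only available for the \emph{true} propensity score; $\hat P$ is a data-dependent random function and has no density $f_P$. You must first replace $K((\hat P(Z_i)-p)/h_n)((\hat P(Z_i)-p)/h_n)^5$ by $K((P(Z_i)-p)/h_n)((P(Z_i)-p)/h_n)^5$ uniformly in $p$, with error $o_p(h_n)$ after the $1/(nh_n)$ normalisation; this is exactly display \eqref{inverse_core} with $s=5$ in the proof of Lemma \ref{inverse-convergence} (itself a consequence of Proposition \ref{aux-1} and a binomial expansion), so the gap is fillable with tools already in the paper. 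As written, however, your leading-term analysis silently conflates $\hat P$ with $P$ --- which is precisely the generated-regressor issue this lemma is meant to handle --- so you should make that replacement step explicit.
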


When Lemma \ref{inverse-convergence} and \ref{r-convergence}  hold, we have
\begin{align*}
    &e_{2}^{\prime}\left[\frac{1}{n}\Gamma^{\prime}(\hat{V})\Omega(\hat{V})\Gamma(\hat{V})\right]^{-1}\frac{1}{n}\Gamma^{\prime}(\hat{V})\Omega(\hat{V})r \\
    =&o_{p}(1)\frac{1}{nh_{n}}\sum_{i=1}^{n}K\left(\frac{\hat{P}(Z_{i})-p}{h_n}\right)m_{3}(P(Z_{i}))(P(Z_{i})-p)^{3} \\
    +&\left(\mu_{2}(K)f_{P}(p)\right)^{-1}(1+o_{p}(n^{-c}))\frac{1}{nh_{n}^{3}}\sum_{i=1}^{n}K\left(\frac{\hat{P}(Z_{i})-p}{h_n}\right)m_{3}(P(Z_{i}))(P(Z_{i})-p)^{3}(\hat{P}(Z_{i})-p) \\
    +&O_{p}(1)\frac{1}{nh_{n}^{3}}\sum_{i=1}^{n}K\left(\frac{\hat{P}(Z_{i})-p}{h_n}\right)m_{3}(P(Z_{i}))(P(Z_{i})-p)^{3}(\hat{P}(Z_{i})-p)^2\\
    =&O_{p}(h_{n}^{2}). \tag{\ref{CL_2}}
\end{align*}

\begin{proof}[Proof of Lemma \ref{r-convergence}]

Because $m_{3}(P(Z_{i}))$ is uniformly bounded in $(0,1)$ by assumption, it holds from Corollary \ref{donny} and the expected value of kernel functions that, uniformly in $p\in[a_{0},b_{0}]$, 
\begin{align*}
    \frac{1}{nh_{n}}\sum_{i=1}^{n}K\left(\frac{P(Z_{i})-p}{h_{n}}\right)m_{3}(P(Z_{i}))(P(Z_{i})-p)^{3}&=o_{p}(h_{n}^{3}), \\
    \frac{1}{nh_{n}^{3}}\sum_{i=1}^{n}K\left(\frac{P(Z_{i})-p}{h_n}\right)m_{3}(P(Z_{i}))(P(Z_{i})-p)^{4}&=O_{p}(h_{n}^{2}), \\
    \frac{1}{nh_{n}^{3}}\sum_{i=1}^{n}K\left(\frac{P(Z_{i})-p}{h_n}\right)m_{3}(P(Z_{i}))(P(Z_{i})-p)^{5}&=o_{p}(h_{n}^{3}).
\end{align*}
Moreover,  for $3\leq s\leq 5$, and $1\leq\ell\leq 2$, it holds from  Proposition \ref{aux-1} and uniform boundedness of $m_{3}(P(Z_{i}))$ that
\begin{align*}
    \sup_{p\in[a_{0},b_{0}]}&\left|\frac{1}{n}\sum_{i=1}^{n}\frac{1}{h_{n}}m_{3}(P(Z_{i}))\left(K\left(\frac{\hat{P}(Z_{i})-p}{h_{n}}\right)-K\left(\frac{P(Z_{i})-p}{h_{n}}\right)\right)\left(\frac{P(Z_{i})-p}{h_{n}}\right)^{s}\right| \notag \\
    =&o_{p}(h_{n})
\end{align*}
and 
\begin{align*}
    \sup_{p\in[a_{0},b_{0}]}\left(\frac{1}{n}\sum_{i=1}^{n}\frac{1}{h_{n}}m_{3}(P(Z_{i}))\left|K\left(\frac{\hat{P}(Z_{i})-p}{h_{n}}\right)\left(\frac{P(Z_{i})-p}{h_{n}}\right)^{s}\left(\frac{\hat{P}(Z_{i})-P(Z_{i})}{h_{n}^{3}}\right)^{\ell}\right|\right)&=o_{p}(1).
\end{align*}
Therefore, the required results hold.
\end{proof}

\subsection{\eqref{CL_1}}
Before considering the uniform convergence rate of $(1/n)\Gamma^{\prime}(\hat{V})\Omega(\hat{V})\varepsilon$, we investigate the convergence rate with the true propensity score. 
\begin{lemma}\label{unifrom_varepsilon_kernel}
Under Assumptions \ref{2}-\ref{covariate}, we have the following result:
\begin{align*}
        \sup_{p\in[a_{0},b_{0}]}\left|\frac{1}{nh_{n}}\sum_{i=1}^{n}\varepsilon_{i}K\left(\frac{P(Z_{i})-p}{h_{n}}\right)\right|&=O_{p}\left(\sqrt{\frac{\log n}{nh_{n}}}\right), \\
        \sup_{p\in[a_{0},b_{0}]}\left|\frac{1}{nh_{n}^{3}}\sum_{i=1}^{n}\varepsilon_{i}K\left(\frac{P(Z_{i})-p}{h_{n}}\right)(P(Z_{i})-p)\right|&=O_{p}\left(\sqrt{\frac{\log n}{nh_{n}^{3}}}\right), \\
         \sup_{p\in[a_{0},b_{0}]}\left|\frac{1} {nh_{n}^{5}}\sum_{i=1}^{n}\varepsilon_{i}K\left(\frac{P(Z_{i})-p}{h_{n}}\right)(P(Z_{i})-p)^2\right|&=O_{p}\left(\sqrt{\frac{\log n}{nh_{n}^{5}}}\right).
\end{align*}
\end{lemma}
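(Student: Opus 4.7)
The plan is to treat each of the three suprema, for $s\in\{0,1,2\}$, as the supremum of a centered empirical process indexed by $p\in[a_{0},b_{0}]$ and apply a Talagrand-type maximal inequality to a VC-type class of functions. Centering is free because $E[\varepsilon_{i}\mid P(Z_{i})]=0$ forces the expectation of each summand to vanish. The three statements differ only through the power $h_n^{s}$ in the summand and through the normalization $h_n^{2s+1}$, so a single argument handles all of them.

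For each $s$ I would work with the class
\[
\mathcal{H}^{(s)}_{n}:=\Big\{(y,z)\mapsto y\,K\!\big((P(z)-p)/h_{n}\big)(P(z)-p)^{s}:p\in[a_{0},b_{0}]\Big\}.
\]
The subclass in $z$ alone is VC-type by the argument already used for Lemma \ref{VC-type}, and multiplying by the coordinate $y$ preserves the VC property. Compact support of $K$ gives the envelope $|y|\,\kappa\,h_{n}^{s}$ for a constant $\kappa$ depending on $K$. A change of variables together with the boundedness of $\sigma^{2}(\cdot)f_{P}(\cdot)$ from Assumption \ref{basic-assumption}.\ref{2}--\ref{basic-assumption}.\ref{3} yields the variance bound
\[
\sup_{p}E\!\left[\varepsilon^{2}K^{2}\!\big((P(Z)-p)/h_{n}\big)(P(Z)-p)^{2s}\right]=O(h_{n}^{2s+1}).
\]

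Because $\varepsilon_{i}$ has only a bounded conditional fourth moment, I would truncate at $M_{n}:=n^{1/4}(\log n)^{1/2}$ and decompose $\varepsilon_{i}=\varepsilon_{i}^{tr}+\varepsilon_{i}^{rem}$, where $\varepsilon_{i}^{tr}$ is the centered truncated part. A Markov bound with the fourth-moment condition shows that $\Pr(\exists i:|\varepsilon_{i}|>M_{n})\le n/M_{n}^{4}=o(1)$, and the conditional mean correction $E[\varepsilon_{i}\mathbbm{1}\{|\varepsilon_{i}|>M_{n}\}\mid P(Z_{i})]$ is of order $M_{n}^{-3}$, contributing a negligible deterministic bias once multiplied by the kernel mass. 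On the bounded truncated class, a Talagrand maximal inequality (in the form used by \textcite{Dony_Einmahl_Mason_2016}, equivalently Theorem 5.2 of \textcite{chernozhukov2014gaussian}) gives
\[
\sup_{p}\left|\sum_{i=1}^{n}\varepsilon_{i}^{tr}K\!\left(\frac{P(Z_{i})-p}{h_{n}}\right)(P(Z_{i})-p)^{s}\right|=O_{P}\!\left(\sqrt{nh_{n}^{2s+1}\log n}+M_{n}h_{n}^{s}\log n\right).
\]
Under Assumption \ref{basic-assumption}.\ref{8} the envelope term is of strictly smaller order than the Gaussian term, and dividing by $nh_{n}^{2s+1}$ yields the claimed $O_{P}(\sqrt{\log n/(nh_{n}^{2s+1})})$ rate.

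The main obstacle is the truncation bookkeeping: I must verify that both the random residual $\varepsilon_{i}^{rem}$ and the deterministic centering correction contribute errors of strictly smaller order than $\sqrt{\log n/(nh_{n}^{2s+1})}$ after the $1/(nh_{n}^{2s+1})$ normalization. With $M_{n}=n^{1/4}(\log n)^{1/2}$ the correction contributes at most $O(M_{n}^{-3}h_{n}^{-s})$, which is easily dominated under the bandwidth range in Assumption \ref{basic-assumption}.\ref{8}. Once this bookkeeping is handled, the rest of the argument is a direct application of a standard maximal inequality to a VC-type class and is uniform across the three cases $s\in\{0,1,2\}$.
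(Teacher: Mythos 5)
Your proposal is correct and follows essentially the same route as the paper: both reduce each of the three statements (for $s\in\{0,1,2\}$) to the supremum of a centered empirical process over a VC-type class of kernel functions indexed by $p\in[a_0,b_0]$, use $E[\varepsilon_i\mid P(Z_i)]=0$ for the centering, obtain the variance bound of order $h_n^{2s+1}$ from the boundedness of $\sigma^2(\cdot)f_P(\cdot)$, and apply a maximal inequality for VC-type classes with the envelope term dominated under Assumption \ref{basic-assumption}.\ref{8}. The only difference is bookkeeping for the unbounded multiplier $\varepsilon_i$: you truncate at $M_n=n^{1/4}(\log n)^{1/2}$ to get a bounded class, whereas the paper applies Corollary 5.1 of \textcite{chernozhukov2012gaussian} directly, which absorbs the unbounded envelope through the $\lVert \max_i F_s(\varepsilon_i,P(Z_i))\rVert_{Q,2}/\sqrt{n}$ term under the fourth-moment condition of Assumption \ref{basic-assumption}.\ref{3}; both devices yield the same negligible contribution.
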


\begin{proof}[Proof of Lemma \ref{unifrom_varepsilon_kernel}]
First, we formalize the uniform convergence problem for an empirical process. For the class of functions $\mathcal{F}$, define $\mathbb{G}_{n}$ and $\|\mathbb{G}_{n}\|_{\mathcal{F}}$ as follows:
\begin{equation*}
    \mathbb{G}_{n}f:=\frac{1}{\sqrt{n}}\sum_{i=1}^{n}(f(X_{i})-E[f(X_{i})]), \quad \|\mathbb{G}_{n}\|_{\mathcal{F}}:=\sup_{f\in\mathcal{F}}|\mathbb{G}_{n}f|.
\end{equation*}
Set $\mathcal{K}^{s}$ and $\mathcal{K}^{s}_{0}$ as 
\begin{align*} 
    \mathcal{K}^{s}&:=\left\{(u,t)\mapsto uK\left(\frac{t-x}{h_{n}}\right)\left(\frac{t-x}{h_{n}}\right)^{s}: x\in [a_{0},b_{0}]\right\}\quad \text{for $s=0,1$ and $2$,} \\
    \mathcal{K}^{s}_{0}&:=\left\{t \mapsto K\left(\frac{t-x}{h_{n}}\right)\left(\frac{t-x}{h_{n}}\right)^{s}: x\in [a_{0},b_{0}]\right\}\quad \text{for $s=0,1$and $2$}.
\end{align*}
Therefore, we have
\begin{align*}
        &\sup_{p\in[a_{0},b_{0}]}\left|\frac{1}{nh_{n}}\sum_{i=1}^{n}\varepsilon_{i}K\left(\frac{P(Z_{i})-p}{h_{n}}\right)\left(\frac{(P(Z_{i})-p)}{h_{n}}\right)^{s}\right| \\
        =&\frac{\sqrt{n}}{nh_{n}}\sup_{k\in\mathcal{K}^{s}}\left|\frac{1}{\sqrt{n}}\sum_{i=1}^{n}k\left(\varepsilon_{i},\frac{P(Z_{i})-p}{h_{n}}\right)\right| \\
        =&\frac{\sqrt{n}}{nh_{n}}\|\mathbb{G}_{n}\|_{\mathcal{K}^{s}}
\end{align*}
where the last equality holds $E[\varepsilon_{i}|P(Z_{i})=p]=0$ for any $p\in(0,1)$. By Corollary A.1 of \textcite{chernozhukov2014gaussian}, if $\mathcal{K}^{s}_{0}$ is a VC-type class,  $\mathcal{K}^{s}$ is a VC-type class. Let $ \mathcal{F}_{\mathcal{G}}$ denote the class of functions  $\{g(z-s\lambda):\lambda\geq1, z\in\mathbb{R}\ \text{and}\ g\in\mathcal{G}\}$ where $\mathcal{G}$ is defined in \eqref{concrete_g}. Because $\mathcal{K}^{s}_{0}\subset \mathcal{F}_{\mathcal{G}}$ holds, $\mathcal{K}^{s}_{0}$ is a VC-type class, i.e. there exist $A\geq e$ and $\nu\geq 1$ such that 
\begin{equation*}
    \sup_{Q}N(\mathcal{K}^{s},e_{Q},\varepsilon\|F_{s}\|_{Q,2})\leq (A/\varepsilon)^{\nu}\quad 0<\forall\varepsilon\leq1
\end{equation*}
where $F_{s}$ is an envelope function of $\mathcal{K}^{s}$. Therefore, from Corollary 5.1 of \textcite{chernozhukov2014gaussian}, for any $\sigma>0$ such that $\sup_{k\in\mathcal{K}^{s}}Pk^{2}\leq \sigma^{2}\leq \|F_{s}\|^{2}_{Q,2}$, we have
\begin{equation*}
    E[\|\mathbb{G}_{n}\|_{\mathcal{K}^{s}}]\lesssim \sqrt{\nu\sigma^{2}\log\left(\frac{A\|F_{s}\|_{Q,2}}{\sigma}\right)}+\frac{\nu\|M\|_{Q,2}}{\sqrt{n}}\log\left(\frac{A\|F_{s}\|_{Q,2}}{\sigma}\right)
\end{equation*}
where we define $M:=\max_{1\leq i\leq n}F_{s}(\varepsilon_{i},P(Z_{i}))$. Under Assumptions \ref{2}-\ref{covariate}, we have
\begin{align*}
    Pk^{2}=&E\left[\varepsilon^{2}_{i}K^{2}\left(\frac{P(Z_{i})-p}{h_{n}}\right)\left(\frac{P(Z_{i})-p}{h_{n}}\right)^{2s}\right] \\
    \leq&\sup_{p\in(0,1)}E\left[\varepsilon^{2}_{i}|P(Z_{i})=p\right]E\left[K^2\left(\frac{P(Z_{i})-p}{h_{n}}\right)\left(\frac{P(Z_{i})-p}{h_{n}}\right)^{2s}\right] \\
    \lesssim&h_{n}.
\end{align*}
Hence, for sufficiently large $n$, we can set $\sigma^{2}=Ch_{n}$ for some positive constant $C>0$. Because $\|F_{s}\|_{Q,2}$ is constant, we have
\begin{align*}
    E[\|\mathbb{G}_{n}\|_{\mathcal{K}^{s}}]&\lesssim \sqrt{\nu\sigma^{2}\log\left(\frac{A\|F_{s}\|_{Q,2}}{\sigma}\right)}+\frac{\nu\|M\|_{Q,2}}{\sqrt{n}}\log\left(\frac{A\|F_{s}\|_{Q,2}}{\sigma}\right) \\
    &=O\left(\sqrt{h_{n}\log(n)}\right)+O\left(\sqrt{\log{n}n^{-1/2}}\right) \\
    &=O\left(\sqrt{h_{n}\log(n)}\right).
\end{align*}
Hence, by Markov inequality, we have
\begin{equation*}
    \frac{1}{\sqrt{n}h_{n}}\sup_{k\in\mathcal{K}^{s}}\left|\frac{1}{\sqrt{n}}\sum_{i=1}^{n}k\left(\varepsilon_{i},\frac{P(Z_{i})-p}{h_{n}}\right)\right|=O_{p}\left(\sqrt{\frac{\log n}{nh_{n}}}\right).\quad \text{for $s=0,1$ and $2$}.
\end{equation*}
Therefore, the requirement result holds.
\end{proof}

Then, we consider the uniform convergence rate of the following term:

\begin{align*}
\frac{1}{n}\Gamma^{\prime}(\hat{V})\Omega(\hat{V})\varepsilon=
\begin{bmatrix}
\frac{1}{nh_{n}}\sum_{i=1}^{n}\varepsilon_{i}K\left(\frac{\hat{P}(Z_{i})-p}{h_{n}}\right) \\
\frac{1}{nh_{n}}\sum_{i=1}^{n}\varepsilon_{i}K\left(\frac{\hat{P}(Z_{i})-p}{h_{n}}\right)(\hat{P}(Z_{i})-p)
\\
\frac{1}{nh_{n}}\sum_{i=1}^{n}\varepsilon_{i}K\left(\frac{\hat{P}(Z_{i})-p}{h_{n}}\right)(\hat{P}(Z_{i})-p)^2
\end{bmatrix}.
\end{align*}
\begin{lemma}\label{epsilon-convergence}
Suppose Assumptions \ref{2}-\ref{covariate} hold. Then, uniformly in $p\in[a_{0},b_{0}]$, we obtain
\begin{align*}
    \frac{1}{nh_{n}}\sum_{i=1}^{n}\varepsilon_{i}K\left(\frac{\hat{P}(Z_{i})-p}{h_{n}}\right)&=o_{p}\left(n^{-c}\sqrt{\frac{\log n}{nh_{n}^{3}}}\right), \\
    \frac{1}{nh_{n}^{3}}\sum_{i=1}^{n}\varepsilon_{i}K\left(\frac{\hat{P}(Z_{i})-p}{h_{n}}\right)(\hat{P}(Z_{i})-p)&=\frac{1}{nh_{n}^{3}}\sum_{i=1}^{n}\varepsilon_{i}K\left(\frac{P(Z_{i})-p}{h_{n}}\right)(P(Z_{i})-p) \\
    &+o_{p}\left(n^{-c}\sqrt{\frac{\log n}{nh^{3}_{n}}}\right),  \\
    \frac{1}{nh_{n}^{5}}\sum_{i=1}^{n}\varepsilon_{i}K\left(\frac{\hat{P}(Z_{i})-p}{h_{n}}\right)(\hat{P}(Z_{i})-p)^2&=O_{p}\left(\sqrt{\frac{\log n}{nh^{5}_{n}}}\right),
\end{align*}
for some positive constant $c>0$.
\end{lemma}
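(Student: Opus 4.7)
The plan is to prove all three parts by applying an empirical-process maximal inequality over a VC-type class indexed jointly by $p \in [a_0, b_0]$ and $Q \in \mathcal{M}$, leveraging Assumption \ref{basic-assumption}.\ref{7-2}; for part 2 an additional refinement exploiting the $o_p(h_n^3)$ uniform smallness of $\hat P - P$ is needed to isolate the main term from its analogue with the true $P$.

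For each $s \in \{0, 1, 2\}$, I would work on the event $\{\hat P \in \mathcal{M}\}$ (which has probability tending to one) and consider the class
\[
\mathcal{F}_{n,s} := \bigl\{(\varepsilon, z) \mapsto \varepsilon\, K((Q(z)-p)/h_n)((Q(z)-p)/h_n)^s : Q \in \mathcal{M},\ p \in [a_0, b_0]\bigr\}.
\]
The VC-type property of $\mathcal{M}$, combined with smoothness and compact support of $K$ (Assumption \ref{basic-assumption}.\ref{5}) via standard preservation rules such as Corollary A.1 of \textcite{chernozhukov2012gaussian}, makes $\mathcal{F}_{n,s}$ VC-type with an envelope of the form $C|\varepsilon|$. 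Since $E[\varepsilon \mid Z] = 0$, every element of $\mathcal{F}_{n,s}$ has zero mean regardless of $Q$, and a change-of-variables argument gives $\sup_{f \in \mathcal{F}_{n,s}} E[f^2] \lesssim h_n$. Applying Corollary 5.1 of \textcite{chernozhukov2012gaussian} exactly as in the proof of Lemma \ref{unifrom_varepsilon_kernel} yields
\[
\sup_{Q \in \mathcal{M},\, p \in [a_0, b_0]}\Bigl|\frac{1}{nh_n^{s+1}}\sum_i \varepsilon_i K((Q(Z_i)-p)/h_n)((Q(Z_i)-p)/h_n)^s\Bigr| = O_p\bigl(\sqrt{\log n/(nh_n^{2s+1})}\bigr).
\]
Evaluating at $Q = \hat P$ gives part 3 by setting $s = 2$, and part 1 by setting $s = 0$, since $\sqrt{\log n/(nh_n)} = h_n \sqrt{\log n/(nh_n^3)} = O(n^{-\eta})\sqrt{\log n/(nh_n^3)}$ is $o_p(n^{-c}\sqrt{\log n/(nh_n^3)})$ for any $c < \eta$.

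The delicate case is part 2, where both the LHS and the leading term on the RHS are of the same order $\sqrt{\log n/(nh_n^3)}$, so the remainder
\[
R_n(p) := \frac{1}{nh_n^2}\sum_i \varepsilon_i \bigl[g((\hat P_i - p)/h_n) - g((P_i - p)/h_n)\bigr],\qquad g(u) := uK(u),
\]
must be shown to be of strictly smaller polynomial order. A mean-value expansion, combined with boundedness and compact support of $g'$, yields on the event $\{\|\hat P - P\|_\infty \leq \delta_n\}$ with $\delta_n = o(h_n^3)$ (Assumption \ref{basic-assumption}.\ref{7-1}) the bound $|g((\hat P_i - p)/h_n) - g((P_i - p)/h_n)| \lesssim (\delta_n/h_n)\mathbb{1}\{|P_i - p| \lesssim h_n\}$. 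The corresponding difference class is again VC-type, with pointwise envelope $O(|\varepsilon|\delta_n/h_n)$ and $L^2$ envelope $\sigma_n \lesssim \delta_n/\sqrt{h_n}$, the extra $\sqrt{h_n}$ coming from integrating over the $O(h_n)$-measure kernel support. The same maximal inequality then gives $\sup_p |R_n(p)| = O_p(\sigma_n \sqrt{\log n/n}/h_n^2) = O_p(\delta_n \sqrt{\log n/(nh_n^5)})$, which with $\delta_n = o(h_n^3)$ becomes $o_p(h_n^{1/2}\sqrt{\log n/n}) = o_p(h_n^2 \cdot \sqrt{\log n/(nh_n^3)})$; for $h_n = Cn^{-\eta}$ this is $o_p(n^{-c}\sqrt{\log n/(nh_n^3)})$ for any $c < 2\eta$, completing part 2.

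The main obstacle is the variance computation for part 2: one must combine the uniform $o_p(h_n^3)$ rate for $\hat P - P$ with the $\sqrt{h_n}$ gain from the compact kernel support to obtain the tighter $L^2$ envelope $O(\delta_n/\sqrt{h_n})$, and verify that the difference class remains VC-type — both depending essentially on working over $Q \in \mathcal{M}$ from Assumption \ref{basic-assumption}.\ref{7-2} rather than treating $\hat P$ as an unstructured random function.
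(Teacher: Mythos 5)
Your proposal follows essentially the same route as the paper's proof: work on the event that $\hat P$ lies in the VC-type class $\mathcal{M}$ and is uniformly close to $P$, verify that the resulting kernel classes remain VC-type via the preservation results of \textcite{chernozhukov2012gaussian}, and apply their maximal inequality (Corollary 5.1) with a variance bound driven by the Lipschitz continuity of $u\mapsto u^{s}K(u)$ and the rate $\|\hat P - P\|_{\infty}=o_p(h_n^{3})$. Your part 2 matches the paper's differencing argument almost exactly (the paper uses the threshold $\|r-P\|_{\infty}\leq h_n^{2}$ and the global Lipschitz bound $Pk^{2}\lesssim \|r-P\|_{\infty}^{2}/h_n^{2}$; your variance bound is slightly sharper because you also exploit the compact kernel support, but both suffice). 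The one step that needs repair is the variance computation for parts 1 and 3: over the class indexed by all $Q\in\mathcal{M}$ the claim $\sup_{f}E[f^{2}]\lesssim h_n$ is not justified --- for $s=0$, a $Q\in\mathcal{M}$ taking values near $p$ on a set of positive probability gives $E[f^{2}]\asymp E[\varepsilon^{2}]K^{2}(0)=O(1)$, since nothing in Assumption \ref{basic-assumption}.\ref{7-2} controls the distribution of $Q(Z)$ for arbitrary $Q$. You must index the class only by $Q$ with $\|Q-P\|_{\infty}\leq\delta_n$, $\delta_n=o(h_n^{3})$ (the paper's $\mathcal{M}_n^{*}$), so that $K((Q(z)-p)/h_n)\neq 0$ forces $|P(z)-p|\lesssim h_n$ and the bounded density of $P(Z)$ delivers the $O(h_n)$ variance; this is exactly the restriction you already invoke in part 2, so the fix is immediate. (Alternatively, parts 1 and 3 follow from the paper's own two-step scheme: difference against the true-$P$ version, bound the difference as in part 2, and add back Lemma \ref{unifrom_varepsilon_kernel}.)
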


When Lemma \ref{inverse-convergence}, \ref{unifrom_varepsilon_kernel} and \ref{epsilon-convergence} hold, we have
\begin{align*}
    &e_{2}^{\prime}\left[\frac{1}{n}\Gamma^{\prime}(\hat{V})\Omega(\hat{V})\Gamma(\hat{V})\right]^{-1}\frac{1}{n}\Gamma^{\prime}(\hat{V})\Omega(\hat{V})\varepsilon \\
    =&o_{p}(1)\frac{1}{nh_{n}}\sum_{i=1}^{n}\varepsilon_{i}K\left(\frac{\hat{P}(Z_{i})-p}{h_n}\right) \\
    +&\left(\mu_{2}(K)f_{P}(p)\right)^{-1}(1+o_{p}(n^{-c}))\frac{1}{nh_{n}^{3}}\sum_{i=1}^{n}\varepsilon_{i}K\left(\frac{\hat{P}(Z_{i})-p}{h_{n}}\right)(\hat{P}(Z_{i})-p) \\
     +&O_{p}(1)\frac{1}{nh_{n}^{3}}\sum_{i=1}^{n}\varepsilon_{i}K\left(\frac{\hat{P}(Z_{i})-p}{h_{n}}\right)(\hat{P}(Z_{i})-p)^2 \\
    =&\frac{1}{nh_{n}^{3}f_{P}(p)\mu_{2}(K)}\sum_{i=1}^{n}\varepsilon_{i}K\left(\frac{P(Z_{i})-p}{h_{n}}\right)(P(Z_{i})-p)+o_{p}\left(n^{-c}\sqrt{\frac{\log n}{nh_{n}^{3}}}\right) \tag{\ref{CL_1}}
\end{align*}
where the last inequality holds from the result of Lemma \ref{unifrom_varepsilon_kernel}.

\begin{proof}[Proof of Lemma \ref{epsilon-convergence}]
Set events $A_{\varepsilon},B_{n}$ and $C^{s}_{\varepsilon,n}$ as follows:
\begin{align*}
  A^{s}_{\varepsilon,n}:=&\left\{\sup_{p\in[a_{0},b_{0}]}\left|\frac{1}{nh_{n}}\sum_{i=1}^{n}\varepsilon_{i} k^{s}_{P,h_{n}}(\hat{P}(Z_{i}),p)\right|>\varepsilon\right\}  \quad \text{for $s=0,1$ and $2$}, \\
            B_{n}:=&\left\{\hat{P}\in\mathcal{M}_{n}^{*}\right\}, \\
            C^{s}_{\varepsilon,n}:=&\left\{\sup_{p\in[a_{0},b_{0}], r\in\mathcal{M}_{n}^{*}}\left|\frac{1}{nh_{n}}\sum_{i=1}^{n}\varepsilon_{i} k^{s}_{P,h_{n}}(r(Z_{i}),p)\right|>\varepsilon\right\}  \quad \text{for $s=0,1$ and $2$}
\end{align*}
where 
\begin{align*}
    \mathcal{M}_{n}^{*}&:=\left\{r\in\mathcal{M}|\|r-P\|_{\infty}\leq h_{n}^{2}\right\}, \\ k^{s}_{P,h_{n}}(r(z),x)&:=K\left(\frac{r(z)-x}{h_{n}}\right)\left(\frac{r(z)-x}{h_{n}}\right)^{s}-K\left(\frac{P(z)-x}{h_{n}}\right)\left(\frac{P(z)-x}{h_{n}}\right)^{s} \quad \text{for $s=0,1$ and $2$}.
\end{align*}
A straightforward calculation gives
\begin{align*}
\Pr\left(A^{s}_{\varepsilon,n}\right)&=\Pr\left(A^{s}_{\varepsilon,n}\cap B_{n}\right)+\Pr\left(A^{s}_{\varepsilon,n}\cap B_{n}^{c}\right) \\
&\leq \Pr\left(C^{s}_{\varepsilon,n}\right)+\Pr\left(B_{n}^{c}\right).
\end{align*}
Therefore, to obtain the convergence rate result, we consider the event $C^{s}_{\varepsilon,n}$ under the condition that $\Pr\left(B_{n}^{c}\right)\rightarrow 0$ as $n$ goes to infinity.

First of all, we have to show the convergence of $\Pr\left(B_{n}^{c}\right)$. By the construction of this event, we have
\begin{align*}
    \Pr\left(B_{n}^{c}\right)&=1-\Pr\left(B_{n}\right) \\
    &=1-\Pr\left(\{\hat{P}\in\mathcal{M}\}\cap \{\|\hat{P}-P\|_{\infty}\leq h_{n}^{2}\}\right) \\
    &=\Pr\left(\{\hat{P}\notin\mathcal{M}\}\cup \{\|\hat{P}-P\|_{\infty}> h_{n}^{2}\}\right)  \\
    &\leq\Pr\left(\hat{P}\notin\mathcal{M}\right)+\Pr\left( \|\hat{P}-P\|_{\infty}> h_{n}^{2}\right) \\
    &=2-\left(\Pr\left(\hat{P}\in\mathcal{M}\right)+\Pr\left( \|\hat{P}-P\|_{\infty}\leq h_{n}^{2}\right)\right).
\end{align*}
By Assumptions \ref{7-1} and \ref{7-2}, we have
\begin{equation*}
    \Pr\left(\hat{P}\in\mathcal{M}\right)+\Pr\left( \|\hat{P}-P\|_{\infty}\leq h_{n}^{2}\right)\rightarrow 2.
\end{equation*}
Therefore, $\Pr\left(B_{n}^{c}\right)\rightarrow 0$ as $n$ goes to infinity.

Second, we show that classes of functions satisfy preferable properties for the uniform convergence result. Set $\overline{\mathcal{K}^{s}_{n}}, \mathcal{M}_{\varepsilon,n}^{*}$ and $\mathcal{M}_{\varepsilon}$ as 
\begin{align*} 
        \overline{\mathcal{K}^{s}_{n}}:=&\left\{(u,z)\mapsto uk^{s}_{P,h_{n}}(r(z),x): x\in [a_{0},b_{0}], r\in\mathcal{M}_{n}^{*}\right\}\quad \text{for $s=0,1$ and $2$,} \\
    \mathcal{M}_{\varepsilon,n}^{*}:=&\left\{(u,z)\mapsto ur(z): r\in\mathcal{M}_{n}^{*}\right\}, \\
     \mathcal{M}_{\varepsilon}:=&\left\{(u,z)\mapsto ur(z): r\in\mathcal{M}\right\}.
\end{align*}
We assume $\mathcal{M}$ is a VC-type class with an envelope function 2. Because $\mathcal{M}_{n}^{*}$  is a subset of $\mathcal{M}$, it is of VC-type. Hence, by  Corollary A.1 of \textcite{chernozhukov2014gaussian},  we have
\begin{align*}
     &\sup_{Q}N(\mathcal{M}_{\varepsilon,n}^{*},e_{Q},\eta\|M^{*}\|_{Q,2}) \\
     \leq &\sup_{Q}N(\mathcal{M}_{\varepsilon},e_{Q},(\eta/2)\|M^{*}\|_{Q,2}) \\
    \leq&\left(\frac{2A}{\eta}\right)^{\nu}\quad 0<\forall\eta\leq 1
\end{align*}
where $A\geq e$ and $\nu\geq 1$ and $M^{*}(u,z)=2|u|$. Note that $A,\nu$ does not depend on $n$ due to the lack of a subscript on $\mathcal{M}$.

For any function $r$ which maps from $\mathcal{Z}$ to $(0,1)$, we define $\mathcal{K}^{s}_{n}(r)$,   $\mathcal{K}^{s}_{n,0}(r)$ and $\mathcal{K}^{s}_{n,0}$  as the following:
\begin{align*}
    \mathcal{K}^{s}_{n}(r):=&\left\{(u,z)\mapsto uK\left(\frac{r(z)-x}{h_{n}}\right)\left(\frac{r(z)-x}{h_{n}}\right)^{s}: x\in [a_{0},b_{0}]\right\}\quad \text{for $s=0,1$ and $2$}, \\
      \mathcal{K}^{s}_{n,0}(r):=&\left\{z\mapsto K\left(\frac{r(z)-x}{h_{n}}\right)\left(\frac{r(z)-x}{h_{n}}\right)^{s}: x\in [a_{0},b_{0}]\right\}\quad \text{for $s=0,1$ and $2$}, \\
      \mathcal{K}^{s}_{n,0}&:=\left\{t \mapsto K\left(\frac{t-x}{h_{n}}\right)\left(\frac{t-x}{h_{n}}\right)^{s}: x\in [a_{0},b_{0}]\right\}\quad \text{for $s=0,1$ and $2$}.
\end{align*}
By definition, $\mathcal{K}^{s}_{n,0}\circ r=\mathcal{K}^{s}_{n,0}(r)$ apparently holds. Take any finitely discrete probability measure $Q^{\prime}$ on the probability space on $\mathcal{Z}$. Set $R$ as $Q^{\prime}\circ r^{-1}$ where $r^{-1}$ the inverse image of $r$. Then, for any $k_{x},k_{x^\prime
}\in \mathcal{K}^{s}_{n,0}(r)$, there exist $k_{x}$ and $k_{x^\prime
}$ in $\mathcal{K}^{s}_{n,0}$ such that $e_{Q^{\prime}}(k_{x},k_{x^{\prime}})=e_{R}(k_{x},k_{x^{\prime}})$
holds. Therefore, for any finitely discrete probability measure $Q^{\prime}$, there exists finitely discrete probability measure $R$ such that
\begin{equation*}
    N(\mathcal{K}_{n,0}^{s}(r),e_{Q^{\prime}},\eta\|F_{s}\|_{Q^{\prime},2})=N(\mathcal{K}_{n,0}^{s},e_{R},\eta\|F_{s}\|_{R,2})\quad 0<\forall\eta\leq1
\end{equation*}
Because $\mathcal{K}^{s}_{n,0}\subset \mathcal{F}_{\mathcal{G}}$ and $\mathcal{K}^{s}_{n,0}$ is of VC-type, $\mathcal{K}_{n,0}^{s}(r)$ is also of VC-type class. In addition, by Corollary A.1 of \textcite{chernozhukov2014gaussian}, $\mathcal{K}_{n}^{s}(r)$ is of VC-type  for any $r$, i.e. there exist $B\geq e$ and $\mu\geq 1$ such that 
\begin{equation*}
    \sup_{Q}N(\mathcal{K}_{n}^{s}(r),e_{R},\eta\|F_{s}\|_{R,2})\leq (B/\eta)^{\mu}\quad 0<\forall\eta\leq1
\end{equation*}
where $F_{s}(u,z):=|u|\max_{x\in\mathbb{R}}|K(x)x^{s}|$. Therefore, for any $r$, we have
\begin{equation}\label{r-P_complex}
    \sup_{Q}N(\mathcal{K}_{n}^{s}(r)-\mathcal{K}_{n}^{s}(P),e_{Q},2\eta\|F_{s}\|_{Q,2})\leq (B/\eta)^{2\mu}\quad 0<\forall\eta\leq1
\end{equation}
where we define
\begin{align*}
    \mathcal{K}_{n}^{s}(r)-\mathcal{K}_{n}^{s}(P):=&\left\{(u,z)\mapsto uk^{s}_{P,h_{n}}(r(z),x): x\in [a_{0},b_{0}]\right\}\quad \text{for $s=0,1$ and $2$}.
\end{align*}

We need to show $\overline{\mathcal{K}^{s}_{n}}$ is of VC-type with an envelope function $\overline{F}_{s}$. Define $\overline{F}_{s}, G_{s}$ as follows:
\begin{align*}
    \overline{F}_{s}(u,z)&:=2|u|\max_{x\in\mathbb{R}}|K(x)x^{s}|, \\ 
    G_{0}&:=\max_{x\in\mathbb{R}}|K^{\prime}(x)|, \quad G_{1}:=\max_{x\in\mathbb{R}}|K^{\prime}(x)x+K(x)|, \quad G_{2}:=\max_{x\in\mathbb{R}}|K^{\prime}(x)x^2+2K(x)x|.
\end{align*}
Because $h_{n}$ converges to zero, for sufficiently large $n$, we have
\begin{equation*}
    \frac{h_{n}\max_{s\in\{0,1,2\}}\|\overline{F}_{s}\|_{Q,2}}{\|M^{*}\|_{Q,2}\min_{s\in\{0,1,2\}}G_{s}}\leq 2. \tag{*}
\end{equation*}
In the following discussion, we consider the case (*) holds.
For any $0<\eta\leq 1$, define $\eta^{*}_{s}$ as 
\begin{equation*}
    \eta^{*}_{s}= \frac{h_{n}\|\overline{F}_{s}\|_{Q,2}}{\|M^{*}\|_{Q,2}G_{s}}\frac{\eta}{2} .
\end{equation*}
 By assumption, $0<\eta_{s}^{*}\leq 1$. For any $\varepsilon_{i}k^{s}_{P,h_{n}}(r(Z_{i}),x)\in\overline{\mathcal{K}^{s}_{n}}$, there exists $(x^{*},r^{*})$ such that $\varepsilon_{i}r^{*}(Z_{i})\in \mathcal{M}_{\varepsilon,n}^{*}$ and $\varepsilon_{i}k^{s}_{P,h_{n}}(r^{*}(Z_{i}),x^{*})\in \mathcal{K}_{n}^{s}(r^{*})-\mathcal{K}_{n}^{s}(P)$ such that
\begin{align*}
&\left\|\varepsilon_{i}\left(K\left(\frac{r(Z_{i})-x}{h_{n}}\right)\left(\frac{r(Z_{i})-x}{h_{n}}\right)^{s}-K\left(\frac{P(Z_{i})-x}{h_{n}}\right)\left(\frac{P(Z_{i})-x}{h_{n}}\right)^{s}\right)\right. \\ &\left.-\varepsilon_{i}\left(K\left(\frac{r^{*}(Z_{i})-x^{*}}{h_{n}}\right)\left(\frac{r^{*}(Z_{i})-x^{*}}{h_{n}}\right)^{s}-K\left(\frac{P(Z_{i})-x^{*}}{h_{n}}\right)\left(\frac{P(Z_{i})-x^{*}}{h_{n}}\right)^{s}\right)\right\|_{Q,2} \\
\leq    &\left\|\varepsilon_{i}\left(K\left(\frac{r(Z_{i})-x}{h_{n}}\right)\left(\frac{r(Z_{i})-x}{h_{n}}\right)^{s}-K\left(\frac{P(Z_{i})-x}{h_{n}}\right)\left(\frac{P(Z_{i})-x}{h_{n}}\right)^{s}\right)\right. \\
    &\left.-\varepsilon_{i}\left(K\left(\frac{r^{*}(Z_{i})-x}{h_{n}}\right)\left(\frac{r^{*}(Z_{i})-x}{h_{n}}\right)^{s}-K\left(\frac{P(Z_{i})-x}{h_{n}}\right)\left(\frac{P(Z_{i})-x}{h_{n}}\right)^{s}\right)\right\|_{Q,2} \\
    +&\left\|\varepsilon_{i}\left(K\left(\frac{r^{*}(Z_{i})-x}{h_{n}}\right)\left(\frac{r^{*}(Z_{i})-x}{h_{n}}\right)^{s}-K\left(\frac{P(Z_{i})-x}{h_{n}}\right)\left(\frac{P(Z_{i})-x}{h_{n}}\right)^{s}\right)\right. \\
    &\left.-\varepsilon_{i}\left(K\left(\frac{r^{*}(Z_{i})-x^{*}}{h_{n}}\right)\left(\frac{r^{*}(Z_{i})-x^{*}}{h_{n}}\right)^{s}-K\left(\frac{P(Z_{i})-x^{*}}{h_{n}}\right)\left(\frac{P(Z_{i})-x^{*}}{h_{n}}\right)^{s}\right)\right\|_{Q,2} \\
    \leq&\left\|\varepsilon_{i}\left(K\left(\frac{r(Z_{i})-x}{h_{n}}\right)\left(\frac{r(Z_{i})-x}{h_{n}}\right)^{s}-K\left(\frac{r^{*}(Z_{i})-x}{h_{n}}\right)\left(\frac{r^{*}(Z_{i})-x}{h_{n}}\right)^{s}\right)\right\|_{Q,2} \\
    +&\eta\|F_{s}\|_{Q,2} \\
    \leq &\frac{G_{s}}{h_{n}}\left\|\varepsilon_{i}(r(Z_{i})-r^{*}(Z_{i}))\right\|_{Q,2}+\frac{\eta}{2}\|\overline{F}_{s}\|_{Q,2}\\
    \leq &\frac{G_{s}}{h_{n}}\eta^{*}_{s}\|M^{*}\|_{Q,2}+\frac{\eta}{2}\|\overline{F}_{s}\|_{Q,2} \\
    \leq &\eta\|\overline{F}_{s}\|_{Q,2}.
\end{align*}

Note that the third last inequality follows from the definition of $\overline{F_{s}}$. Therefore, we have
\begin{align*}
    &\sup_{Q}N(\overline{\mathcal{K}_{n}^{s}},e_{Q},\eta\|\overline{F_{s}}\|_{Q,2}) \\
    \leq &\sup_{Q}N(\mathcal{M}_{\varepsilon,n}^{*},e_{Q},\eta^{*}_{s}\|M^{*}\|_{Q,2}) \sup_{Q}N(\mathcal{K}_{n}^{s}(r)-\mathcal{K}_{n}^{s}(P),e_{Q},\eta\|F_{s}\|_{Q,2})\\ 
    \leq &\left(\frac{2A}{\eta^{*}_{s}}\right)^{\nu}\left(\frac{2B}{\eta}\right)^{2\mu} \\
    \leq &\left(\frac{1}{h_{n}}\frac{A^{*}}{\eta}\right)^{\nu+2\mu} \quad 0<\forall\eta\leq1
\end{align*}
 where we define $A^{*}$ as $(2A (\|M^{*}\|_{Q,2}\max_{s\in\{0,1,2\}}G_{s})/(\min_{s\in\{0,1,2\}}\|\overline{F}_{s}\|_{Q,2})+2B)$.

Finally, we formalize the uniform convergence problem in terms of an empirical process. For a certain type of function class $\mathcal{F}$, define $\mathbb{G}_{n}$ and $\|\mathbb{G}_{n}\|_{\mathcal{F}}$ as follows:
\begin{equation*}
    \mathbb{G}_{n}f=\frac{1}{\sqrt{n}}\sum_{i=1}^{n}(f(X_{i})-E[f(X_{i})]), \quad \|\mathbb{G}_{n}\|_{\mathcal{F}}=\sup_{f\in\mathcal{F}}|\mathbb{G}_{n}f|.
\end{equation*}
Then, we obtain
\begin{align*}
        &\sup_{p\in[a_{0},b_{0}], r\in\mathcal{M}_{n}}\left|\frac{1}{nh_{n}}\sum_{i=1}^{n}\varepsilon_{i}\left(K\left(\frac{r(Z_{i})-x}{h_{n}}\right)\left(\frac{r(Z_{i})-x}{h_{n}}\right)^{s}-K\left(\frac{P(Z_{i})-x}{h_{n}}\right)\left(\frac{P(Z_{i})-x}{h_{n}}\right)^{s}\right)\right| \\
        =&\frac{\sqrt{n}}{nh_{n}}\sup_{f\in\overline{\mathcal{K}_{n}^{s}}}\left|\frac{1}{\sqrt{n}}\sum_{i=1}^{n}f\left(\varepsilon_{i},Z_{i}\right)\right| \\
        =&\frac{\sqrt{n}}{nh_{n}}\|\mathbb{G}_{n}\|_{\overline{\mathcal{K}_{n}^{s}}}
\end{align*}
where the last equality holds from  $E[\varepsilon_{i}|Z_{i}]=0$. 
Therefore, from Corollary 5.1 of \textcite{chernozhukov2014gaussian}, for any $\sigma>0$ such that $\sup_{k\in\overline{\mathcal{K}^{s}_{n}}}Pk^{2}\leq \sigma^{2}\leq \|\overline{F}_{s}\|^{2}_{Q,2}$, we have
\begin{equation*}
    E[\|\mathbb{G}_{n}\|_{\overline{\mathcal{K}_{n}^{s}}}]\lesssim \sqrt{(\nu+2\mu)\sigma^{2}\log\left(\frac{A^{*}\|F_{s}\|_{Q,2}}{h_{n}\sigma}\right)}+\frac{(\nu+2\mu)\|M\|_{Q,2}}{\sqrt{n}}\log\left(\frac{A^{*}\|\overline{F}_{s}\|_{Q,2}}{h_{n}\sigma}\right)
\end{equation*}
where we define $M=\max_{1\leq i\leq n}\overline{F}_{s}(\varepsilon_{i},Z_{i})$. Under Assumptions \ref{2}-\ref{covariate}, we have
\begin{align*}
    Pk^{2}=&E\left[\varepsilon^{2}_{i}\left(K\left(\frac{r(Z_{i})-p}{h_{n}}\right)\left(\frac{r(Z_{i})-p}{h_{n}}\right)^{s}-K\left(\frac{P(Z_{i})-p}{h_{n}}\right)\left(\frac{P(Z_{i})-p}{h_{n}}\right)^{s}\right)^{2}\right] \\
     \leq&G_{s}^{2}\frac{\|r-P\|_{\infty}^{2}}{h_{n}^{2}}E\left[\varepsilon^{2}_{i}\right] \\
    \lesssim&h_{n}^{2}.
\end{align*}
Hence, for sufficiently large $n$, we can set $\sigma^{2}=Ch_{n}^{2}$ for some positive constant $C>0$. Because $\|\overline{F}_{s}\|_{Q,2}$ is constant, we have
\begin{align*}
E[\|\mathbb{G}_{n}\|_{\overline{\mathcal{K}_{n}^{s}}}]&\lesssim \sqrt{(\nu+\mu)\sigma^{2}\log\left(\frac{A^{*}\|\overline{F}_{s}\|_{Q,2}}{h_{n}\sigma}\right)}+\frac{(\nu+\mu)\|M\|_{Q,2}}{\sqrt{n}}\log\left(\frac{A^{*}\|\overline{F}_{s}\|_{Q,2}}{h_{n}\sigma}\right) \\
    &=O\left(\sqrt{h_{n}^{2}\log(n)}\right)+O\left(\sqrt{\log{n}n^{-1/2}}\right) \\
    &=O\left(\sqrt{h_{n}^{2}\log(n)}\right).
\end{align*}
Hence, by Markov inequality, for any $\varepsilon>0$, we have
\begin{align*}
&\Pr\left(\frac{1}{\sqrt{n}h_{n}}\sup_{k\in\overline{\mathcal{K}_{n}^{s}}}\left|\frac{1}{\sqrt{n}}\sum_{i=1}^{n}k\left(\varepsilon_{i},Z_{i}\right)\right|>\varepsilon\sqrt{\frac{\log n}{n }}\right) \\
=&\Pr\left(\frac{1}{\sqrt{h^{2}_{n}\log (n) }}\sup_{k\in\overline{\mathcal{K}_{n}^{s}}}\left|\frac{1}{\sqrt{n}}\sum_{i=1}^{n}k\left(\varepsilon_{i},Z_{i}\right)\right|>\varepsilon\right) \\
    \leq&\frac{1}{\sqrt{h^{2}_{n}\log (n)}}\frac{E[\|\mathbb{G}_{n}\|_{\overline{\mathcal{K}_{n}^{s}}}]}{\varepsilon}\quad \text{for $s=0,1$ and $2$.}
\end{align*}
Therefore, the required result holds, combined with the result of Lemma \ref{unifrom_varepsilon_kernel}.
\end{proof}

\renewcommand{\theequation}{H.\arabic{equation}} 
\setcounter{equation}{0}

\section{Proofs in Appendix  \ref{proof_gauss_stationary}}\label{proof_proof_gauss_stationary}

\begin{proof}[Proof of Lemma \ref{lemma_3_aux}]
First, we consider the $L_{2}$ diameter of $\delta_n(p)^{2}$. A straightforward calculation provides
\begin{align}
&E[\delta_n(p)^{2}] \notag \\
    =&E[(\phi_{n,p}(\varepsilon,P(Z))-\psi_{n,p}(\varepsilon,P(Z)))^{2}] \notag \\
    =&\int \left(\frac{1}{\sqrt{\int \sigma^{2}(p+h_{n}u) K^{2}\left(u\right)u^{2}f_{P}(p+h_{n}u)du}}-\frac{1}{\sqrt{\sigma^{2}(P(Z_{i}))f_{P}(P(Z_{i}))\int K^{2}(u)u^{2}du}}\right)^{2} \notag \\
    \times&\sigma^{2}(P(Z_{i}))f_{P}(P(Z_{i}))K^{2}\left(\frac{P(Z_{i})-p}{h_{n}}\right)\frac{(P(Z)-p)^{2}}{h^{3}_{n}}dP(Z_{i}) \notag \\
    =&\int \left(\frac{1}{\sqrt{\int \sigma^{2}(p+h_{n}u) K^{2}\left(u\right)u^{2}f_{P}(p+h_{n}u)du}}-\frac{1}{\sqrt{\sigma^{2}(p+h_{n}t)f_{P}(p+h_{n}t)\int K^{2}(u)u^{2}du}}\right)^{2} \notag \\
    \times&\sigma^{2}(p+h_{n}t)f_{P}(p+h_{n}t)K^{2}\left(t\right)t^{2}dt.\label{g-1}
\end{align}
Define $A(u):=u^{2}K(u)$. Then, we obtain
\begin{align*}
    &\left(\frac{1}{\sqrt{\int \sigma^{2}(p+h_{n}u) A(u)f_{P}(p+h_{n}u)du}}-\frac{1}{\sqrt{\sigma^{2}(p+h_{n}t)f_{P}(p+h_{n}t)\int A(u)du}}\right)^{2} \\
    =&\left(\frac{1}{\sqrt{\int \sigma^{2}(p+h_{n}u) A(u)f_{P}(p+h_{n}u)du\sigma^{2}(p+h_{n}t)f_{P}(p+h_{n}t)\int A(u)du}}\right)^{2} \\
    \times &\left(\sqrt{\sigma^{2}(p+h_{n}t)f_{P}(p+h_{n}t)\int A(u)du}-\sqrt{\int \sigma^{2}(p+h_{n}u) A(u)f_{P}(p+h_{n}u)du}\right)^{2}.
\end{align*}

For the numerator, we have
\begin{align*}
  &\left|\sqrt{\sigma^{2}(p+h_{n}t)f_{P}(p+h_{n}t)\int A(u)du}-\sqrt{\int \sigma^{2}(p+h_{n}u) A(u)f_{P}(p+h_{n}u)du}\right|^{2} \\
\leq &\left(\sqrt{\left|\int (\sigma^{2}(p+h_{n}t)f_{P}(p+h_{n}t)- \sigma^{2}(p+h_{n}u) f_{P}(p+h_{n}u))A(u)du\right|}\right)^{2}.
\end{align*}
Because $p\mapsto \sigma^{2}(p)f_{P}(p)$ is Lipschitz continuous, we obtain
\begin{align*}
    |\sigma^{2}(p+h_{n}t)f_{P}(p+h_{n}t)- \sigma^{2}(p+h_{n}u) f_{P}(p+h_{n}u)| 
    \leq h_{n}(|t|+|u|).
\end{align*}
Therefore, we have
\begin{align*}
    &\left(\frac{1}{\sqrt{\int \sigma^{2}(p+h_{n}u) f_{P}(p+h_{n}u)A(u)du}}-\frac{1}{\sqrt{\sigma^{2}(p+h_{n}t)f_{P}(p+h_{n}t)\int K^{2}(u)du}}\right)^{2} \\
    \leq&\frac{h_{n}\int(|t|+|u|)A(u)du}{\int \sigma^{2}(p+h_{n}u)A(u)f_{P}(p+h_{n}u)du\sigma^{2}(p+h_{n}t)f_{P}(p+h_{n}t)\int A(u)du} .
\end{align*}
It holds from \eqref{g-1} that 
\begin{equation*}
E[\delta_n(p)^{2}]
    \leq \frac{h_{n}\int\int(|t|+|u|)u^{2}K^{2}(u)t^{2}K^{2}(t)dtdu}{\int \sigma^{2}(p+h_{n}u)u^{2}K^{2}\left(u\right)f_{P}(p+hu)du\int u^{2}K^{2}(u)du} .
\end{equation*}
Therefore, $\sup_{p\in\mathcal{P}}E[(\delta_{n}(p)^{2})]=O(h_{n})$ holds.

Next, we derive the upper bound of $e_{Q}(\delta_{n}(p),\delta_{n}(p^{\prime }))$ for any  $p,p^{\prime}\in \mathcal{P}$. A straightforward calculation gives
\begin{align}
    &E\big[(\delta_n(p)-\delta_n(p^{\prime})^2\big] \notag \\
    \lesssim &E[(\phi_{n,p}(\varepsilon,P(Z))-\phi_{n,p^{\prime}}(\varepsilon,P(Z)))^{2}]+E[(\psi_{n,p}(\varepsilon,P(Z))-\psi_{n,p^{\prime}}(\varepsilon,P(Z)))^{2}].\label{ell2_distance_main}
\end{align}
For the first term of \eqref{ell2_distance_main}, by definition , we have
\begin{align*}
    &E[(\psi_{n,p}(\varepsilon,P(Z))-\psi_{n,p^{\prime}}(\varepsilon,P(Z)))^{2}]\\
    =&\int^{1}_{0}\frac{\left((P(Z_{i})-p)K\left(\frac{P(Z_{i})-p}{h_{n}}\right)-(P(Z_{i})-p^{\prime})K\left(\frac{P(Z_{i})-p^{\prime}}{h_{n}}\right)\right)^{2}}{h^{3}_{n}\sigma^{2}(P(Z_{i}))f_{P}(P(Z_{i}))\int K^{2}(u)u^{2}du}\sigma^{2}(P(Z_{i}))f_{P}(P(Z_{i}))dP(Z_{i}) \\
    =&\frac{1}{\int K^{2}(u)u^{2}du}\int^{(1-p)/h_{n}}_{-p/h_{n}}\frac{1}{h_{n}^{3}}h^{2}_{n}\left(uK\left(u\right)-\left(u+\frac{p-p^{\prime}}{h_{n}}\right)K\left(u+\frac{p-p^{\prime}}{h_{n}}\right)\right)^{2}h_{n}du.
\end{align*}
By Assumptions \ref{2}-\ref{covariate}, $uK(u)$ is Lipschitz continuous. Hence, we obtain
\begin{align}\label{tildeB2}
    &E[(\psi_{n,p}(\varepsilon,P(Z))-\psi_{n,p^{\prime}}(\varepsilon,P(Z)))^{2}] \notag \\
    \lesssim &\int^{(1-p)/h_{n}}_{-p/h_{n}}\left|\frac{p-p^{\prime}}{h_{n}}\right|^{2}du \notag  \\
    \lesssim &\frac{|p-p^{\prime}|^{2}}{h_{n}^{3}}.
\end{align}

For the second term of \eqref{ell2_distance_main}, we have
\begin{align}
    &E[(\phi_{n,p}(\varepsilon,P(Z))-\phi_{n,p^{\prime}}(\varepsilon,P(Z)))^{2}] \notag \\
    =&\int\left(\frac{(P(Z_{i})-p)K\left(\frac{P(Z_{i})-p}{h_{n}}\right)}{\sqrt{E\left[\varepsilon^{2}_{i}(P(Z_{i})-p)^{2}K^{2}\left(\frac{P(Z_{i})-p}{h_{n}}\right)\right]}}-\frac{(P(Z_{i})-p^{\prime})K\left(\frac{P(Z_{i})-p^{\prime}}{h_{n}}\right)}{\sqrt{E\left[\varepsilon_{i}^{2}(P(Z_{i})-p^{\prime})^{2}K^{2}\left(\frac{P(Z_{i})-p^{\prime}}{h_{n}}\right)\right]}}\right)^{2} \notag \\
    \times &\sigma^{2}(P(Z_{i}))f_{P}(P(Z_{i}))dP(Z_{i}) \notag \\
    \lesssim&\int\left(\frac{1}{\sqrt{E\left[\varepsilon^{2}_{i}(P(Z_{i})-p)^{2}K^{2}\left(\frac{P(Z_{i})-p}{h_{n}}\right)\right]}}-\frac{1}{\sqrt{E\left[\varepsilon_{i}^{2}(P(Z_{i})-p^{\prime})^{2}K^{2}\left(\frac{P(Z_{i})-p^{\prime}}{h_{n}}\right)\right]}}\right)^{2} \notag\\
    \times &K^{2}\left(\frac{P(Z_{i})-p}{h_{n}}\right)(P(Z_{i})-p)^{2}\sigma^{2}(P(Z_{i}))f_{P}(P(Z_{i}))dP(Z_{i})  \label{second-1-G} \\
    +&\int\left(K\left(\frac{P(Z_{i})-p}{h_{n}}\right)(P(Z_{i})-p)-K\left(\frac{P(Z_{i})-p^{\prime}}{h_{n}}\right)(P(Z_{i})-p^{\prime})\right)^{2} \notag \\
    \times&\frac{\sigma^{2}(P(Z_{i}))f_{P}(P(Z_{i}))}{E\left[\varepsilon_{i}^{2}(P(Z_{i})-p^{\prime})^{2}K^{2}\left(\frac{P(Z_{i})-p^{\prime}}{h_{n}}\right)\right]}dP(Z_{i}). \label{second-2-G}
\end{align}
First, consider the upper bound of \eqref{second-1-G}. A straightforward calculation yields
\begin{equation}\label{ell2_distance_first}
  \begin{split}
     &\left(\frac{1}{\sqrt{E\left[\varepsilon^{2}_{i}(P(Z_{i})-p)^{2}K^{2}\left(\frac{P(Z_{i})-p}{h_{n}}\right)\right]}}-\frac{1}{\sqrt{E\left[\varepsilon_{i}^{2}(P(Z_{i})-p^{\prime})^{2}K^{2}\left(\frac{P(Z_{i})-p^{\prime}}{h_{n}}\right)\right]}}\right)^{2} \\
    \leq&\left(\frac{\sqrt{E\left[\varepsilon_{i}^{2}(P(Z_{i})-p^{\prime})^{2}K^{2}\left(\frac{P(Z_{i})-p^{\prime}}{h_{n}}\right)\right]}-\sqrt{E\left[\varepsilon_{i}^{2}(P(Z_{i})-p)^{2}K^{2}\left(\frac{P(Z_{i})-p}{h_{n}}\right)\right]}}{\sqrt{E\left[\varepsilon_{i}^{2}(P(Z_{i})-p)^{2}K^{2}\left(\frac{P(Z_{i})-p}{h_{n}}\right)\right]E\left[\varepsilon_{i}^{2}(P(Z_{i})-p^{\prime})^{2}K^{2}\left(\frac{P(Z_{i})-p^{\prime}}{h_{n}}\right)\right]}}\right)^{2} \\
    =&\frac{\left(\sqrt{E\left[\varepsilon_{i}^{2}(P(Z_{i})-p^{\prime})^{2}K^{2}\left(\frac{P(Z_{i})-p^{\prime}}{h_{n}}\right)\right]}-\sqrt{E\left[\varepsilon_{i}^{2}(P(Z_{i})-p)^{2}K^{2}\left(\frac{P(Z_{i})-p}{h_{n}}\right)\right]}\right)^{2}}{h_{n}^{3}s_{n}(p)s_{n}(p^{\prime})\kappa_{2}^{2}(K)f_{P}(p)f_{P}(p^{\prime})}.
\end{split}
\end{equation}
Because $|\sqrt{a}-\sqrt{b}|\leq \sqrt{|a-b|}$ holds for any $a,b\geq 0$, we have
\begin{align*}
    &\left|\sqrt{E\left[\varepsilon_{i}^{2}(P(Z_{i})-p^{\prime})^{2}K^{2}\left(\frac{P(Z_{i})-p^{\prime}}{h_{n}}\right)\right]}-\sqrt{E\left[\varepsilon_{i}^{2}(P(Z_{i})-p)^{2}K^{2}\left(\frac{P(Z_{i})-p}{h_{n}}\right)\right]}\right|^{2} \\
    \leq &\left|E\left[\varepsilon_{i}^{2}(P(Z_{i})-p^{\prime})^{2}K^{2}\left(\frac{P(Z_{i})-p^{\prime}}{h_{n}}\right)\right]-E\left[\varepsilon_{i}^{2}(P(Z_{i})-p)^{2}K^{2}\left(\frac{P(Z_{i})-p}{h_{n}}\right)\right]\right| \\
    \leq &E\left[\varepsilon_{i}^{2}\left|(P(Z_{i})-p^{\prime})^{2}K^{2}\left(\frac{P(Z_{i})-p^{\prime}}{h_{n}}\right)-(P(Z_{i})-p)^{2}K^{2}\left(\frac{P(Z_{i})-p}{h_{n}}\right)\right|\right] .
\end{align*}
Due to Lipschitz continuity of $u^{2}K(u)$, we have
\begin{align*}
    &E\left[\varepsilon_{i}^{2}\left|(P(Z_{i})-p^{\prime})^{2}K^{2}\left(\frac{P(Z_{i})-p^{\prime}}{h_{n}}\right)-(P(Z_{i})-p)^{2}K^{2}\left(\frac{P(Z_{i})-p}{h_{n}}\right)\right|\right] \\
    \leq &h_{n}E[\varepsilon^{2}_{i}]|p-p^{\prime}|.
\end{align*}
Hence, it holds from \eqref{ell2_distance_first} and Assumptions \ref{2}-\ref{covariate} that
\begin{align}
    &\int\left(\frac{1}{\sqrt{E\left[\varepsilon^{2}_{i}(P(Z_{i})-p)^{2}K^{2}\left(\frac{P(Z_{i})-p}{h_{n}}\right)\right]}}-\frac{1}{\sqrt{E\left[\varepsilon_{i}^{2}(P(Z_{i})-p^{\prime})^{2}K^{2}\left(\frac{P(Z_{i})-p^{\prime}}{h_{n}}\right)\right]}}\right)^{2} \notag \\
    \times&K^{2}\left(\frac{P(Z_{i})-p}{h_{n}}\right)(P(Z_{i})-p)^{2}\sigma^{2}(P(Z_{i}))f_{P}(P(Z_{i}))dP(Z_{i})\notag \\
    \lesssim &\left|p-p^{\prime}\right|. \label{result_g_final_1}
\end{align}

For the second term \eqref{second-2-G}, by previous discussion,  we obtain
\begin{align}
   &\int\left(K^{2}\left(\frac{P(Z_{i})-p}{h_{n}}\right)(P(Z_{i})-p)-K^{2}\left(\frac{P(Z_{i})-p^{\prime}}{h_{n}}\right)(P(Z_{i})-p^{\prime})\right)^{2} \notag \\
    \times&\frac{\sigma^{2}(P(Z_{i})f_{P}(P(Z_{i}))}{E\left[\varepsilon_{i}^{2}(P(Z_{i})-p^{\prime})^{2}K^{2}\left(\frac{P(Z_{i})-p^{\prime}}{h_{n}}\right)\right]}dP(Z_{i}) \notag \\
    &\leq \frac{1}{h_{n}^{3}f_{P}^{2}(p^{\prime})\kappa_{2}^{2}(K)s_{n}^{2}(p^{\prime})}h_{n}^{2}\int\left|\frac{p-p^{\prime}}{h_{n}}\right|^{2} \sigma^{2}(P(Z_{i})f_{P}(P(Z_{i})))dP(Z_{i}) \notag
    \\
    &\lesssim \frac{|p-p^{\prime}|^{2}}{h_{n}^{3}}. \label{result_g_final_2}
\end{align}
Therefore, it holds from \eqref{ell2_distance_main}, \eqref{tildeB2}, \eqref{result_g_final_1} and \eqref{result_g_final_2} that
\begin{align*}
 E\big[(\delta_n(p)-\delta_n(p^{\prime}))^2\big]&\lesssim \frac{|p-p^{\prime}|^{2}}{h_{n}^{3}}+|p-p^{\prime}| \\
 &\lesssim \frac{|p-p^{\prime}|}{h_{n}^{3}} .
\end{align*}

\end{proof}

\renewcommand{\theequation}{I.\arabic{equation}} 
\setcounter{equation}{0}

\section{Proofs in Appendix  \ref{proof_main}}\label{proof_proof_main}

\begin{proof}[Proof of Lemma \ref{theorem4_aux}]

    For the first result,
  from Lemma \ref{lemma1} to Lemma \ref{lemma3} with symmetry, there exist some positive constant $d$ such that
\begin{equation*}
    \lim_{n\rightarrow\infty}\Pr\left(\ell_{n} \left|\sup_{(p,x)\in \mathcal{P}\times\mathcal{X}}\sqrt{nh_n^{3}}   \left|\frac{\widehat{MTE}(v)-MTE(v)}{\hat{s}(p)}\right|-\sup_{p\in\mathcal{P}}|\tilde{B}_{n,2}(h^{-1}_{n}p)|\right|\geq n^{-d}\right)=0. \label{prob-mean-exchange}
\end{equation*}
holds.

For the second part, by trivial calculation, we have
\begin{align*}
    &\max\left\{\Pr\left(\ell_{n}\left[\sup_{p\in\mathcal{P}}|\tilde{B}_{n,2}(h^{-1}_{n}p)|-\ell_{n}\right]<t+\varepsilon_{n}\right)-\Pr\left(\ell_{n} \left[\sup_{p\in\mathcal{P}}|\tilde{B}_{n,2}(h^{-1}_{n}p)|-\ell_{n}\right]<t\right),\right. \\
    &\left. \Pr\left(\ell_{n} \left[\sup_{p\in\mathcal{P}}|\tilde{B}_{n,2}(h^{-1}_{n}p)|-\ell_{n}\right]<t\right)-\Pr\left(\ell_{n}\left[\sup_{p\in\mathcal{P}}|\tilde{B}_{n,2}(h^{-1}_{n}p)|-\ell_{n}\right]<t-\varepsilon_{n}\right)\right\} \\
    \leq & \Pr\left(\left|\ell_{n} \left[\sup_{p\in\mathcal{P}}|\tilde{B}_{n,2}(h^{-1}_{n}p)|-\ell_{n}\right]-t\right|\leq \varepsilon_{n}\right) \\
    =&\Pr\left(\left|\sup_{p\in\mathcal{P}}|\tilde{B}_{n,2}(h^{-1}_{n}p)|-\left(\ell_{n}+\frac{t}{\ell_{n}}\right)\right|\leq \frac{\varepsilon_{n}}{\ell_{n}}\right).
\end{align*}
It holds from Corollary 2.1 of \textcite{chernozhukov2014anti} and Corollary 2.2.8 of \textcite{vanderVaartWellner96} that
\begin{align*}
&\Pr\left(\left|\sup_{p\in\mathcal{P}}|\tilde{B}_{n,2}(h^{-1}_{n}p)|-\left(\ell_{n}+\frac{t}{\ell_{n}}\right)\right|\leq \frac{\varepsilon_{n}}{\ell_{n}}\right) \\
\leq& \sup_{x\in\mathbb{R}}\Pr\left(\left|\sup_{p\in\mathcal{P}}|\tilde{B}_{n,2}(h^{-1}_{n}p)|-x\right|\leq \frac{\varepsilon_{n}}{\ell_{n}}\right)  \\
    \lesssim &\frac{\varepsilon_{n}}{\ell_{n}}\left\{E\left[\sup_{p\in\mathcal{P}}|\tilde{B}_{n,2}(h_{n}^{-1}p)|\right]+1\right\} \\
        \lesssim &\frac{\varepsilon_{n}}{\ell_{n}}\left\{E\left[\left|\tilde{B}_{n,2}(h_{n}^{-1}a_{0})\right|\right]+\int^{\infty}_{0}\sqrt{\log  N(\{\tilde{B}_{n,2}(h^{-1}_{n}p):p\in\mathcal{P}\}, e_{Q},\varepsilon)} d\varepsilon+1\right\} .
\end{align*}
By trivial calculation, $E\left[\left|\tilde{B}_{n,2}(h_{n}^{-1}a_{0})\right|\right]$ is bounded. From \eqref{tildeB2}, we have
\begin{align*}
    & \int^{\infty}_{0}\sqrt{\log  N(\{\tilde{B}_{n,2}(h^{-1}_{n}p):p\in\mathcal{P}\}, e_{Q},\varepsilon)} d\varepsilon \\
    \lesssim &\int^{2}_{0}\sqrt{\left|\log  \frac{1}{h_{n}^{3/2}\varepsilon}\right|} d\varepsilon \\
    \lesssim&O(\sqrt{\log h_{n}^{-3/2}}).
\end{align*}
Therefore, there exists $\kappa^{\prime}>0$ such that
\begin{align*}
    &\max\left\{\Pr\left(\ell_{n}\left[\sup_{p\in\mathcal{P}}|\tilde{B}_{n,2}(h^{-1}_{n}p)|-\ell_{n}\right]<t+\varepsilon_{n}\right)-\Pr\left(\ell_{n} \left[\sup_{p\in\mathcal{P}}|\tilde{B}_{n,2}(h^{-1}_{n}p)|-\ell_{n}\right]<t\right),\right. \notag \\
    &\left. \Pr\left(\ell_{n} \left[\sup_{p\in\mathcal{P}}|\tilde{B}_{n,2}(h^{-1}_{n}p)|-\ell_{n}\right]<t\right)-\Pr\left(\ell_{n}\left[\sup_{p\in\mathcal{P}}|\tilde{B}_{n,2}(h^{-1}_{n}p)|-\ell_{n}\right]<t-\varepsilon_{n}\right)\right\} \notag \\
    =&o(1). \label{anti-concentration}
\end{align*}

\end{proof}

\renewcommand{\theequation}{J.\arabic{equation}} 
\setcounter{equation}{0}

\section{Proofs in Section  \ref{main-proposition}}\label{aux_aux}

\subsection{Auxiliary Result}

\begin{proposition}\label{donny-aux}
    Let $\mathcal{G}$ be a pointwise measurable class of bounded functions such that for some constants, $\beta, \nu, C>1$, $\sigma\leq1/(8C)$ and the envelope function $G$, the following four conditions hold:
\begin{align*}
        &E[G^{2}(X)]\leq \beta^{2}, \\
        &\text{$\mathcal{G}$ is of VC-type with the envelope $G$,} \\
        &\sigma^{2}_{0}:=\sup_{g\in\mathcal{G}}E[g^{2}(X)]\leq \sigma^{2}, \\
        &\sup_{g\in\mathcal{G}}\|g\|_{\infty} \leq \frac{1}{2\sqrt{\nu+1}}\sqrt{\frac{n\sigma^{2}}{\log(\beta\vee 1/\sigma)}}.
\end{align*}
Then, for a universal constant $A_{3}$, it holds
\begin{equation*}
    E\left\|\sum_{i=1}^{n}\varepsilon_{i}g(X_{i})\right\|_{\mathcal{G}}\leq A_{3}\sqrt{\nu n\sigma^{2}\log(\beta\vee 1/\sigma)}.
\end{equation*}
\begin{proof}[Proof of Proposition \ref{donny-aux}]
The proof is similar to that of Proposition A.1 in \textcite{einmahl2000empirical}. They also use the definition of VC-type in Remark \ref{def-VC}. However, in the proof of Proposition A.1 in \textcite{einmahl2000empirical}, they only use finitely discrete probability measures to take the supremum in terms of a probability measure. Therefore, even though we replace the definition of VC-type with Definition \ref{VC-type-def}, we can ensure that the result of Proposition A.1 in \textcite{einmahl2000empirical} still holds.
\end{proof}
\end{proposition}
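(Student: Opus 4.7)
The plan is to essentially replay the proof of Proposition A.1 in \textcite{einmahl2000empirical} and carefully verify that every invocation of the VC-type covering bound is evaluated at a finitely discrete probability measure, so that the weaker Definition \ref{VC-type-def} suffices. First I would condition on $(X_1, \dots, X_n)$ and study the conditional process $g \mapsto \sum_{i=1}^n \varepsilon_i g(X_i)$. Under the conditioning, this is a subgaussian process with respect to the empirical $L^2$-semimetric $e_{P_n}(g,g') = (n^{-1}\sum_i (g(X_i)-g'(X_i))^2)^{1/2}$, because $(\varepsilon_i)$ are i.i.d.\ mean-zero and uniformly bounded (Rademacher-type). I would then apply the Dudley chaining bound together with a Bernstein/Talagrand increment step to obtain, for a universal constant $c_1 > 0$,
\begin{equation*}
E_\varepsilon \sup_{g \in \mathcal{G}} \bigg|\sum_{i=1}^n \varepsilon_i g(X_i)\bigg| \;\le\; c_1 \sqrt{n}\, \int_0^{\delta_n} \sqrt{\log N(\mathcal{G}, e_{P_n}, \tau)}\, d\tau \;+\; (\text{truncation remainder}),
\end{equation*}
where $\delta_n$ is a suitable diameter proxy and the truncation remainder is controlled by the boundedness hypothesis on $\sup_g \|g\|_\infty$.

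The crucial observation is that $P_n$ is, almost surely, a finitely discrete probability measure. Hence the VC-type hypothesis in the weaker sense of Definition \ref{VC-type-def} directly yields
\begin{equation*}
N(\mathcal{G}, e_{P_n}, \tau \|G\|_{P_n,2}) \;\le\; (A/\tau)^\nu, \qquad 0 < \tau \le 1,
\end{equation*}
which is all that the chaining step requires. I would substitute this in, rescale by $\|G\|_{P_n,2}$, compute the entropy integral (yielding a bound of order $\|G\|_{P_n,2} \sqrt{\nu \log(A/\tau_*)}$ at the appropriate cut-off), and then take expectation in the $X_i$'s. The factor $E\|G\|_{P_n,2}^2 \le E[G^2(X)] \le \beta^2$ and $\sigma_0^2 \le \sigma^2$ are then combined to reduce the geometric factor inside the logarithm to $\beta \vee 1/\sigma$, in the standard Talagrand/Einmahl–Mason fashion.

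The two steps that need genuine care are: (i) turning the empirical $L^2$-diameter into the variance proxy $\sigma^2$ uniformly in the chain, which I would handle by a Talagrand inequality for the empirical process $g \mapsto n^{-1}\sum g^2(X_i)$ applied to $\mathcal{G}^2$ (an argument that is also purely VC and only needs discrete measures), and (ii) verifying that the bound on $\sup_g \|g\|_\infty$ in the hypothesis is exactly what is needed to make the Bernstein-type increment bound degenerate into a subgaussian one in the top scale of the chain --- this is the reason the specific numerical factor $1/(2\sqrt{\nu+1})$ appears. The main obstacle is therefore the bookkeeping required to keep all constants universal and to confirm that no step in the Einmahl–Mason argument slips into using $N(\mathcal{G}, e_Q, \cdot)$ for a non-discrete $Q$; once this is confirmed, replacing the stronger covering hypothesis by Definition \ref{VC-type-def} changes nothing, and the stated bound $A_3 \sqrt{\nu n \sigma^2 \log(\beta \vee 1/\sigma)}$ follows.
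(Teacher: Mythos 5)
Your proposal is correct and takes essentially the same route as the paper: the paper's proof likewise consists of observing that Einmahl and Mason's argument for their Proposition A.1 only ever evaluates covering numbers at finitely discrete probability measures (in particular the empirical measure), so the weaker VC-type condition of Definition \ref{VC-type-def} suffices and the conclusion carries over unchanged. Your additional chaining/Talagrand details simply unpack the internals of the Einmahl--Mason proof rather than constituting a different argument.
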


Before providing helpful lemmas, we introduce a definition of bounded variation.
\begin{dfn}[bounded variation]
A function $f:\mathbb{R}\mapsto\mathbb{R}$ is of bounded variation if the quantity,
\begin{equation*}
    v(f):=\sup\left\{\sum_{i=1}^{n}|f(x_{i})-f(x_{i-1})|: -\infty<x_{0}<\cdots<x_{n}<\infty, n\in\mathbb{N}\right\}
\end{equation*}
is finite.
    
\end{dfn}

\begin{lemma}\label{Lemma3.6.11}
    Let f be a $K$-Lipschitz continuous function with compact support where $K$>0. Then, there exists a non-decreasing and continuous $h$ such that $0 \leq h(x)\leq v(f)$ and $g$ that is $1$-Lipschitz continuous on the interval $[0, v(f)]$. The composite function $g$ of $h$ is equal to $f$, namely $f=g\circ h$.
\end{lemma}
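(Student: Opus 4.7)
The plan is to construct $h$ as the cumulative total variation of $f$ and to define $g$ as the unique map that factors $f$ through $h$. Since $f$ is $K$-Lipschitz with compact support, there exists $M>0$ with $\mathrm{supp}(f)\subseteq[-M,M]$, and the total variation is finite with $v(f)\leq 2KM$. I would then set $h(x):=v_{(-\infty,x]}(f)$, the total variation of $f$ on $(-\infty,x]$.

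The first step is to verify the basic properties of $h$. Monotonicity and the bound $0\leq h(x)\leq v(f)$ follow from additivity of total variation over adjacent intervals. Continuity of $h$ follows from continuity of $f$ (a standard fact: the cumulative variation function of a continuous function of bounded variation is itself continuous). Because $h\equiv 0$ on $(-\infty,-M]$ and $h\equiv v(f)$ on $[M,\infty)$, the intermediate value theorem yields that the range of $h$ is exactly $[0,v(f)]$.

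Next I would define $g\colon[0,v(f)]\to\mathbb{R}$ by $g(t):=f(x)$ for any $x$ with $h(x)=t$, and check well-definedness. If $h(x_1)=h(x_2)$ with $x_1<x_2$, then $v_{[x_1,x_2]}(f)=0$, so $f$ is constant on $[x_1,x_2]$, giving $f(x_1)=f(x_2)$. Because the range of $h$ is all of $[0,v(f)]$, this prescription defines $g$ on the whole interval, and by construction $f=g\circ h$.

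The core step is the Lipschitz estimate. For $t_1<t_2$ in $[0,v(f)]$, choose preimages $x_1\leq x_2$ with $h(x_j)=t_j$ (possible since $h$ is monotone and surjective onto $[0,v(f)]$). The standard total variation inequality then gives
\[
|g(t_2)-g(t_1)|=|f(x_2)-f(x_1)|\leq v_{[x_1,x_2]}(f)=h(x_2)-h(x_1)=t_2-t_1,
\]
so $g$ is $1$-Lipschitz on $[0,v(f)]$. The main obstacle is verifying that $h$ is continuous (so that its range is genuinely the full interval $[0,v(f)]$ rather than a proper subset with gaps), and that $h(x_1)=h(x_2)$ forces $f$ to be constant on $[x_1,x_2]$; once these two points are carefully established from the continuity and boundedness of variation of $f$, the factorization $f=g\circ h$ and the Lipschitz bound on $g$ follow immediately.
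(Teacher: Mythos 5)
Your proposal is correct and follows essentially the same route as the paper's proof: both define $h$ as the cumulative total variation $v_x(f)$, use $|f(y)-f(x)|\leq h(y)-h(x)$ to show $g$ is well defined and $1$-Lipschitz, and invoke the compact support together with continuity and monotonicity of $h$ to get that the range of $h$ is all of $[0,v(f)]$. The only cosmetic difference is that the paper derives continuity of $h$ by showing it inherits the $K$-Lipschitz bound from $f$ directly, whereas you cite the standard continuity fact for cumulative variation of a continuous BV function; either suffices.
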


\begin{proof}[Proof of Lemma \ref{Lemma3.6.11}]
    Define $h(x)$ as
    \begin{equation*}
        h(x):=v_{x}(f)
    \end{equation*}
    where we define
    \begin{equation*}
        v_{x}(f):=\sup\left\{\sum_{i=1}^{n}|f(x_{i})-f(x_{i-1})|: -\infty<x_{0}<\cdots<x_{n}\leq x, n\in\mathbb{N}\right\}.
    \end{equation*}
     The function $h$ reflects the total variation of the function $f$ till point $x$. By construction, $h$ is non-decreasing, and the value of $h$ is in the closed interval $[0,v(f)]$. Moreover, $h$ is continuous. Indeed, by definition of $h$, we obtain the following inequality for $x,y\in\mathbb{R}$:
     \begin{align*}
         |h(y)-h(x)|&\leq \sup\left\{\sum_{i=1}^{n}|f(x_{i})-f(x_{i-1})|: \min\{x,y\}\leq x_{0}<\cdots<x_{n}\leq \max\{x,y\}, n\in\mathbb{N}\right\} \\
         &\leq L|x-y|,
     \end{align*}
     where the last inequality follows the assumption that $f$ is $K$-Lipschitz continuous. We immediately obtain the continuity of $h$ from the above discussion.
     
     By the definition of bounded variation, for any $x<y$, we have
    \begin{align*}\label{h-f_relation}
        &h(x)+|f(y)-f(x)|\leq h(y) \notag \\
        \Leftrightarrow& |f(y)-f(x)|\leq h(y)-h(x). 
    \end{align*}
    If $h(y)$ is equal to $h(x)$ even when $x$ is not equal to $y$, $f(y)=f(x)$. Hence, we can define $g(u)$ as the value of $f$ on any of points $h^{-1}\{u\}$ where $u$ is in the range of $h$. Furthermore, because $h$ is non-decreasing and continuous, it holds from the compact support of $f$ that the range of $h$ is equal to $[0,v(f)]$. Therefore, $g$ is defined on $[0,v(f)]$.

    By construction of $g$, for any $u,v \in[0,v(f)]$, we have the following inequality:
    \begin{align*}
        |g(u)-g(v)|&=|g(h(x))-g(h(y))|\\
        &=|f(x)-f(y)| \\
        &\leq |h(x)-h(y)|  \\
        &= |u-v| 
    \end{align*}
    where we define $x,y$ as $h(x):=u$ and  $h(y):=v$, respectively. Therefore, we obtain the required result.
\end{proof}

\begin{proposition}\label{proposition3.6.12}
    Let f be a $K$-Lipschitz continuous function with a compact support where $K>0$. Consider the collection
    \begin{equation*}
        \mathcal{F}=\{x\mapsto f(tx-s):t>0, s\in\mathbb{R}\}.
    \end{equation*}
    Then, $\mathcal{F}$ is of VC-type with envelope $\|f\|_{\infty}$ and pointwise measurable.
\end{proposition}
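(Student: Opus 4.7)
The plan is to leverage Lemma \ref{Lemma3.6.11} to reduce the proof to standard empirical-process machinery. Writing $f=g\circ h$ where $h:\mathbb{R}\to[0,v(f)]$ is non-decreasing and continuous and $g:[0,v(f)]\to\mathbb{R}$ is $1$-Lipschitz, every element of $\mathcal{F}$ takes the form $x\mapsto g(h(tx-s))$, so $\mathcal{F}$ is the image under the non-expansive map $g$ of the auxiliary class $\mathcal{H}:=\{x\mapsto h(tx-s):t>0,\,s\in\mathbb{R}\}$. The proof then consists of (i) showing $\mathcal{H}$ is of VC-type and (ii) transporting the covering bound through the Lipschitz composition.

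For step (i), I would argue that the affine class $\{x\mapsto tx-s:t>0,\,s\in\mathbb{R}\}$ lies in the two-dimensional vector space spanned by $x\mapsto x$ and $x\mapsto 1$, hence is VC-subgraph by Lemma 2.6.15 of \textcite{vanderVaartWellner96}; composition with the monotone function $h$ preserves the VC-subgraph property (Lemma 2.6.18 of the same reference), so $\mathcal{H}$ is VC-subgraph with constant envelope $v(f)$. Theorem 2.6.7 of \textcite{vanderVaartWellner96} then supplies constants $A,\nu>0$, depending only on $f$, with $\sup_Q N(\mathcal{H},e_Q,\epsilon\,v(f))\leq(A/\epsilon)^\nu$ for $0<\epsilon\leq 1$.

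For step (ii), I would use that $g$ being $1$-Lipschitz on $[0,v(f)]$ gives $\|g\circ\phi_1-g\circ\phi_2\|_{Q,2}\leq\|\phi_1-\phi_2\|_{Q,2}$ for any $\phi_1,\phi_2\in\mathcal{H}$, so every $\epsilon v(f)$-net of $\mathcal{H}$ in $L^2(Q)$ induces an $\epsilon v(f)$-net of $\mathcal{F}$ of the same cardinality. Since $|f(tx-s)|\leq\|f\|_\infty$ gives $\|f\|_\infty$ as a valid constant envelope for $\mathcal{F}$ (and we may assume $\|f\|_\infty>0$), rescaling $\epsilon\mapsto \epsilon\,\|f\|_\infty/v(f)$ converts the bound into $\sup_Q N(\mathcal{F},e_Q,\epsilon\,\|f\|_\infty)\leq (A'/\epsilon)^\nu$ with $A':=A\,v(f)/\|f\|_\infty$, which is the VC-type condition of Definition \ref{VC-type-def} with envelope $\|f\|_\infty$. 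Pointwise measurability is immediate: the map $(t,s)\mapsto f(tx-s)$ is continuous for each fixed $x$ (as $f$ is Lipschitz), so the countable subfamily indexed by $(t,s)\in\mathbb{Q}_{>0}\times\mathbb{Q}$ is pointwise dense in $\mathcal{F}$.

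The main obstacle is conceptual rather than computational: the Lipschitz factor $g$ does not itself belong to any obvious VC-type class, and a direct VC-subgraph argument for $\mathcal{F}$ would fail because composition with Lipschitz (rather than monotone) functions generally does not preserve the VC-subgraph property. Lemma \ref{Lemma3.6.11} circumvents this by isolating the non-monotone behaviour in a non-expansive factor, to which the standard Lipschitz-contraction covering argument applies at no cost. Once this decomposition is in hand, the remaining work is bookkeeping about envelopes, which is harmless because $v(f)<\infty$ is guaranteed by $f$ being Lipschitz with compact support.
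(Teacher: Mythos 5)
Your proposal is correct and follows essentially the same route as the paper: decompose $f=g\circ h$ via Lemma \ref{Lemma3.6.11}, show the monotone class $\mathcal{H}$ is VC-subgraph (the paper does this by writing the subgraph as $\{(x,u):h^{-1}(u)-tx+s\leq 0\}$ and embedding it in a finite-dimensional vector space of functions, which is exactly the proof of the monotone-composition stability result you cite), then transfer covering numbers through the $1$-Lipschitz factor $g$ and rescale the envelope. The only cosmetic difference is that you invoke the stability lemma of \textcite{vanderVaartWellner96} as a black box where the paper reproduces its generalized-inverse argument explicitly.
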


\begin{proof}[Proof of Proposition \ref{proposition3.6.12}]
Consider the following class $\mathcal{F}_{c}$ defined as 
\begin{equation*}
    \mathcal{F}_{c}=\{x\mapsto f(tx-s):t>0, t,s\in\mathbb{Q}\}.
\end{equation*}
Apparently, $\mathcal{F}_{c}$ is countable and, for every $f\in\mathcal{F}$, there exists a sequence $f_{m}\in\mathcal{F}_{c}$ with $f_{m}(x)\rightarrow f(x)$ for every $x$. Therefore, $\mathcal{F}$ is pointwise measurable.

We prove that $\mathcal{F}$ defined in the statement is of VC-type. First, we consider the composition function of $f$ obtained in Lemma \ref{Lemma3.6.11}. We can write $f=g\circ h$, where $h$ is non-decreasing and $g$ is $1$-Lipschitz continuous. For any function in $\mathcal{F}$ we have $f(tx-s)=g(h(tx-s))$. Hence, we need to consider the following class of functions:
\begin{equation*}
    \mathcal{H}=\{x\mapsto h(tx-s):t>0, s\in\mathbb{R}\}.
\end{equation*}
Because of the non-decreasing property of $h$, we can define its generalized inverse $h^{-1}(u)$ for any value $u\in [0,v(f)]$ as below:
\begin{equation*}
    h^{-1}(u):=\inf\{x\in\mathbb{R} | h(x)\geq u\}.
\end{equation*}
The subgraph of a particular element indexed by $t,s$ in $\mathcal{H}$ will be
\begin{align*}
    G_{t,s} &=\{(x,u)\in\mathbb{R}\times[0,v(f)]:u\leq h(tx-s)\} \\
    &=\{(x,u)\in\mathbb{R}\times[0,v(f)]:h^{-1}(u)\leq tx-s\} \\
&= \{(x,u)\in\mathbb{R}\times[0,v(f)]:h^{-1}(u)-tx+s\leq 0\}.
\end{align*}
Thus, all possible subgraphs in $\mathcal{H}$ is the set $\mathcal{G}=\{G_{t,s}:t>0, s\in\mathbb{R}\}$. So the VC dimension of $\mathcal{H}$ is the VC
dimension of the set $\mathcal{G}$.

Because each element $G_{t,s}$ is determined by function $h^{-1}(u)-tx+s\leq 0$, one can easily see that
\begin{align*}
    \mathcal{G}&\subset\mathcal{V} \\
\mathcal{V}&=\{\mathbb{V}_{a,b,c}:a,b,c\in\mathbb{R}\} \\
\mathbb{V}_{a,b,c}&=\{(x,u) :ah^{-1}(u)+bx+c\leq 0\}.
\end{align*}
Because $\mathcal{V}$ is formed by the vector space of 3 functions ,$(x,u)\mapsto h^{-1}(u), (x,u)\mapsto x, (x,u)\mapsto1$, it holds from Lemma 2.6.15, Lemma 2.6.17 (i) and Lemma 2.6.18 (iii) of \textcite{vanderVaartWellner96} that its VC dimension is at most 5.
So the VC dimension of $\mathcal{G}\subset\mathcal{V}$ will be at most 5, which implies that $\mathcal{H}$ is a VC-type class with VC dimension at most five, and we can pick the envelope function of $\mathcal{H}$ to be constant $v(f)$. By Theorem 2.6.7 of \textcite{vanderVaartWellner96}, for $0<\varepsilon<1$, there exist positive numbers $A_{0}$ such that
\begin{equation*}
    N(\mathcal{H},e_{Q},\varepsilon v(f))\leq \left(\frac{A_{0}}{\varepsilon}\right)^{8}
\end{equation*}
for any probability measure $Q$. 

Let $f_{t,s}, h_{t,s}$ denote $f(tx-s)$ and $ h(tx-s)$, respectively. Using the fact that $g$ is $1$-Lipschitz continuous, for any $h_{t,s},h_{t^{\prime},s^{\prime}}$ with $e_{Q}(h_{t,s},h_{t^{\prime},s^{\prime}})\leq\varepsilon v(f)$, we have
\begin{align*}
    e_{Q}(f_{t,s},f_{t^{\prime},s^{\prime}})&=e_{Q}(g(h_{t,s}),g(h_{t^{\prime},s^{\prime}})) \\
    &\leq e_{Q}(h_{t,s},h_{t^{\prime},s^{\prime}})\\
    &\leq \varepsilon v(f).
\end{align*}
Hence, any $\varepsilon v(f)$-cover of $\mathcal{H}$ induces an $\varepsilon v(f)$-cover of $\mathcal{F}=g\circ\mathcal{H}$. Let $C$ denote $\|f\|_{\infty}/v(f)$. By definition, we have $ C \ leq 1$. Hence, it holds that
\begin{equation*}
    N(\mathcal{F} ,e_{Q},\varepsilon \|f\|_{\infty})=N(\mathcal{F} ,e_{Q},\varepsilon C v(f)).
\end{equation*}
Then, for $0<\varepsilon<1$, we have
\begin{align*}
N(\mathcal{F} ,e_{Q},\varepsilon \|f\|_{\infty})=&N(\mathcal{F} ,e_{Q},\varepsilon C v(f)) \\
\leq &N(\mathcal{H},e_{Q},\varepsilon Cv(f)) \\
\leq &\left(\frac{B_{0}}{\varepsilon}\right)^{8}
\end{align*}
where $B_{0}=A_{0}/C$. Therefore, $\mathcal{F}$ is of VC-type with the envelope $\|f\|_{\infty}$.

\end{proof}

\subsection{Proof of Corollary \ref{donny}}
\begin{proof}[Proof of Corollary \ref{donny}]
    Because \textcite{Dony_Einmahl_Mason_2016} show Corollary \ref{donny} holds under the same assumptions except for the definition of VC-type, we need to guarantee  Theorem 2 of \textcite{Dony_Einmahl_Mason_2016} still holds with Definition \ref{VC-type-def}. 

In the proof of Theorem 2 of \textcite{Dony_Einmahl_Mason_2016}, the definition of VC-type is only used to verify Proposition A.1 in \textcite{einmahl2000empirical}. As we proved in Proposition \ref{donny-aux}, the result of  Proposition A.1 in \textcite{einmahl2000empirical} still holds with the weaker definition of VC-type. Therefore, Theorem 2 of \textcite{Dony_Einmahl_Mason_2016} still holds with Definition \ref{VC-type-def}. 
\end{proof}

\subsection{Proof of Lemma \ref{VC-type}}

\begin{proof}[Proof of Lemma \ref{VC-type}]
From Lemma \ref{Lemma3.6.11} and Proposition \ref{proposition3.6.12}, this result immediately holds.
\end{proof}

\subsection{Proof of Lemma \ref{union-VC}}
\begin{proof}[Proof of Lemma \ref{union-VC}]
Let $H(x)$ denote $\max\{F(x),G(x)\}$. By definition of the covering number, for any finite discrete probability measure $Q$, we have
\begin{equation*}
    N( \mathcal{F}\cup\mathcal{G},e_{Q},\varepsilon\|H\|_{Q,2})\leq N( \mathcal{F},e_{Q},\varepsilon\|F\|_{Q,2})+N( \mathcal{G},e_{Q},\varepsilon\|G\|_{Q,2}).
\end{equation*}
Because $\mathcal{F}$ and $\mathcal{G}$ are of VC-type with envelopes $F$ and $G$ respectively, there exists $A,B,\nu$ and $\eta$ satisfying
\begin{equation*}
   \sup_{Q}N( \mathcal{F},e_{Q},\varepsilon\|F\|_{Q,2})\leq \left(\frac{A}{\varepsilon}\right)^{\nu}\quad \text{and}\quad \sup_{Q}N(\mathcal{G},e_{Q},\varepsilon\|G\|_{Q,2})\leq \left(\frac{B}{\varepsilon}\right)^{\eta}.
\end{equation*}
where the supremum is taken over all finitely discrete probability measures. Hence, from the above inequality, we obtain
\begin{align*}
     \sup_{Q}N(\varepsilon\|H\|_{Q,2}, \mathcal{F}\cup\mathcal{G},e_{Q})&\leq \sup_{Q}N(\varepsilon\|F\|_{Q,2}, \mathcal{F},e_{Q})+\sup_{Q}N(\varepsilon\|G\|_{Q,2}, \mathcal{G},e_{Q}) \\
     &\leq \left(\frac{A}{\varepsilon}\right)^{\nu}+\left(\frac{B}{\varepsilon}\right)^{\eta} \\
     &\leq \left(\frac{C_{0}}{\varepsilon}\right)^{\mu}
\end{align*}
where we define $\mu:=\max\{\nu,\eta\}$ and $C_{0}=(A^{\mu}+B^{\mu})^{1/\mu}$. Therefore, the required statement does hold.
\end{proof}

\subsection{Proof of Lemma \ref{union-pointwise-measurable}}

\begin{proof}[Proof of Lemma \ref{union-pointwise-measurable}]
There exist countable subsets $\mathcal{F}_{c}$ and $\mathcal{G}_{c}$ for $\mathcal{F}$ and $\mathcal{G}$, respectively.  For any $h\in\mathcal{F}\cup\mathcal{G}$, there exists $h_{m}\in \mathcal{F}_{c}\cup\mathcal{G}_{c}$ such that $h_{m}\rightarrow h$ for any $x$ because $\mathcal{F}$ and $\mathcal{G}$ are pointwise measurable. Because $\mathcal{F}_{c}\cup\mathcal{G}_{c}$ are countable, $\mathcal{F}\cup\mathcal{G}$ are pointwise measurable.
\end{proof}

\subsection{Proof of Lemma \ref{expectation}}

 \begin{proof}[Proof of Lemma \ref{expectation}]
Consider the existence of $E[g_{n,h}]$ for each element in $\mathcal{G}$. For $0\leq s\leq 5$ and $0\leq \ell\leq 5$, we have
\begin{align*}
   E\left[\frac{1}{nh_{n}}\sum_{i=1}^{n}\left|\left(\frac{P(Z_{i})-p}{h_{n}}\right)^{s}K^{(\ell)}\left(\frac{P(Z_{i})-p)}{h_{n}}\right)\right|\right]&=E\left[\frac{1}{h_{n}}\left|\left(\frac{P(Z_{i})-p}{h_{n}}\right)^{s}K^{(\ell)}\left(\frac{P(Z_{i})-p}{h_{n}}\right)\right|\right] \\
   &\leq \int |u|^{s}|K^{(\ell)}\left(u\right)|f_{P}(p+uh_{n})du .
\end{align*}
By Assumptions \ref{2}-\ref{covariate}, there exists $M$ such that $|f_{P}(u)|\leq M$ for any $u\in(0,1)$. Therefore, we have
\begin{equation*}
    E\left[\frac{1}{nh_{n}}\sum_{i=1}^{n}\left|\left(\frac{P(Z_{i})-p}{h_{n}}\right)^{s}K^{(\ell)}\left(\frac{P(Z_{i})-p}{h_{n}}\right)\right|\right]  \leq M\int |u|^{s}|K^{(\ell)}\left(u\right)|du .
\end{equation*}
By definition of the kernel function, $ |u|^{s}|K^{(\ell)}\left(u\right)|$ is bounded. Therefore, we obtain
\begin{equation*}
    E\left[\frac{1}{nh_{n}}\sum_{i=1}^{n}\left|\left(\frac{P(Z_{i})-p}{h_{n}}\right)^{s}K^{(\ell)}\left(\frac{P(Z_{i})-p}{h_{n}}\right)\right|\right] < \infty.
\end{equation*}
\end{proof}

\subsection{Proof of Proposition \ref{kernel-uniform}}
\begin{proof}[Proof of Proposition \ref{kernel-uniform}]
Because we define $\mathcal{G}$ as follows
\begin{equation*}
    \mathcal{G}:=\{|(-x)^{s}K^{(\ell)}(-x)|\}\quad \text{for $0\leq s\leq 3$ and $0\leq \ell \leq 5$},
\end{equation*}
from the result of Corollary \ref{donny} and Lemma \ref{expectation}, all we need to show is that $\mathcal{G}$ satisfies (G.i), (G.ii), (F.i), and (F.ii). Under Assumptions \ref{2}-\ref{covariate}, (G.i) and (G.ii) immediately hold.

Now we verify (F.i) and (F.ii). $ \mathcal{F}_{\mathcal{G}}$ is defined as follows:
\begin{equation*}
    \mathcal{F}_{\mathcal{G}}=\{s\mapsto g(z-s\lambda):\lambda\geq1, z\in\mathbb{R}\ \text{and}\ g\in\mathcal{G}\}.
\end{equation*}
By definition of VC class, if the class of functions  $\{s\mapsto g(z+s\lambda):\lambda\geq1, z\in\mathbb{R}\}$ for each $g\in\mathcal{G}$ is of VC class with the envelope $\|g\|_{\infty}$, we can show $\{s\mapsto g(z-s\lambda):\lambda\geq1, z\in\mathbb{R}\}$ is also of VC class with the envelope $\|g\|_{\infty}$ through the argument of change of variables. The class of functions \(\{s \mapsto g(z + s\lambda) : \lambda \geq 1, z \in \mathbb{R}\}\) is a subset of the larger class \(\{s \mapsto g(s\lambda - z) : \lambda > 0, z \in \mathbb{R}\}\). By Assumptions \ref{2}-\ref{covariate}, each \(g \in \mathcal{G}\) is \(K\)-Lipschitz continuous and has compact support. Hence, applying Lemma \ref{VC-type}, the class \(\{s \mapsto g(z + s\lambda) : \lambda \geq 1, z \in \mathbb{R}\}\) is of VC-type for each \(g\), with an envelope function given by \(\|g\|_{\infty}\). Therefore, by Lemma \ref{union-VC}, the class of functions  $ \mathcal{F}_{\mathcal{G}}$ is of VC-type with the envelope $\sup_{g\in\mathcal{G}}\|g\|_{\infty}$. Furthermore, apparently, the class of functions  $\{s\mapsto g(z+s\lambda):\lambda\geq1, z\in\mathbb{R}\}$ is pointwise measurable. By Lemma \ref{union-pointwise-measurable}, the class of functions  $ \mathcal{F}_{\mathcal{G}}$ is pointwise measurable.

\end{proof}

\subsection{Proof of Proposition \ref{aux-1}}
\begin{proof}[Proof of Proposition \ref{aux-1}]
    Under Assumptions \ref{2}-\ref{covariate}, because $K(\cdot)$ is six times differentiable, we have
    \begin{align*}
        K\left(\frac{\hat{P}(Z_{i})-p}{h_{n}}\right)-K\left(\frac{P(Z_{i})-p}{h_{n}}\right)
   =&\sum_{j=1}^{5}\frac{1}{j!}K^{(j)}\left(\frac{P(Z_{i})-p}{h_{n}}\right)\left(\frac{\hat{P}(Z_{i})-P(Z_{i})}{h_{n}}\right)^{j} \\
   +&K^{(6)}\left(u_{i}\right)\frac{1}{6!}\left(\frac{\hat{P}(Z_{i})-P(Z_{i})}{h_{n}}\right)^{6}
    \end{align*}
where $u_{i}=\lambda_{i}((\hat{P}(Z_{i})-p)/h_{n})+(1-\lambda_{i})((P(Z_{i})-p)/h_{n})$. Hence, we have,
\begin{align*}
   &  \sup_{p\in[a_{0},b_{0}]}\left(\frac{1}{n}\sum_{i=1}^{n}\frac{1}{h_{n}}\left|\left(K\left(\frac{\hat{P}(Z_{i})-p}{h_{n}}\right)-K\left(\frac{P(Z_{i})-p}{h_{n}}\right)\right)\left(\frac{P(Z_{i})-p}{h_{n}}\right)^{s}\right|\right) \\
   \leq &\sum_{j=1}^{5}\frac{1}{j!}\left(\frac{\max_{1\leq i\leq n}|\hat{P}(Z_{i})-P(Z_{i})|}{h_{n}}\right)^{j}\sup_{p\in[a_{0},b_{0}]}\left(\frac{1}{n}\sum_{i=1}^{n}\frac{1}{h_{n}}\left|K^{(j)}\left(\frac{P(Z_{i})-p}{h_{n}}\right)\left(\frac{P(Z_{i})-p}{h_{n}}\right)^{s}\right|\right) \\
   +&\frac{\sup |K^{(6)}(u)|}{6!}\left(\frac{\max_{1\leq i\leq n}|\hat{P}(Z_{i})-P(Z_{i})|}{h_{n}}\right)^{6}\frac{1}{h_{n}^{s}}.
\end{align*}
By Proposition \ref{kernel-uniform}, for any integer $0\leq s\leq 5$ and $0\leq j\leq 5$, it holds that
\begin{equation*}
    \sup_{p\in[a_{0},b_{0}]}\left(\frac{1}{n}\sum_{i=1}^{n}\frac{1}{h_{n}}\left|K^{(j)}\left(\frac{P(Z_{i})-p}{h_{n}}\right)\left(\frac{P(Z_{i})-p}{h_{n}}\right)^{s}\right|\right)=O_{p}(1).
\end{equation*}
Therefore, for any integer $0\leq s\leq 5$, we have 
\begin{equation*}
    \sup_{p\in[a_{0},b_{0}]}\left(\frac{1}{n}\sum_{i=1}^{n}\frac{1}{h_{n}}\left|\left(K\left(\frac{\hat{P}(Z_{i})-p}{h_{n}}\right)-K\left(\frac{P(Z_{i})-p}{h_{n}}\right)\right)\left(\frac{P(Z_{i})-p}{h_{n}}\right)^{s}\right|\right)=o_{p}(h_{n}).
\end{equation*}

For the second statement, by the triangular inequality, for any integer $0\leq s\leq 5$ and $0\leq \ell$, we have
\begin{align*}
    &\sup_{p\in[a_{0},b_{0}]}\left(\frac{1}{n}\sum_{i=1}^{n}\frac{1}{h_{n}}\left|K\left(\frac{\hat{P}(Z_{i})-p}{h_{n}}\right)\left(\frac{P(Z_{i})-p}{h_{n}}\right)^{s}\left(\frac{\hat{P}(Z_{i})-P(Z_{i})}{h_{n}^{3}}\right)^{\ell}\right|\right) \\
    \leq & \sup_{p\in[a_{0},b_{0}]}\left(\frac{1}{n}\sum_{i=1}^{n}\frac{1}{h_{n}}\left|\left(K\left(\frac{\hat{P}(Z_{i})-p}{h_{n}}\right)-K\left(\frac{P(Z_{i})-p}{h_{n}}\right)\right)\left(\frac{P(Z_{i})-p}{h_{n}}\right)^{s}\right|\right) \\
    &\left(\frac{\max_{1\leq i\leq n}|\hat{P}(Z_{i})-P(Z_{i})|}{h_{n}^{3}}\right)^{\ell} \\
    +&\sup_{p\in[a_{0},b_{0}]}\left(\frac{1}{n}\sum_{i=1}^{n}\frac{1}{h_{n}}\left|K\left(\frac{P(Z_{i})-p}{h_{n}}\right)\left(\frac{P(Z_{i})-p}{h_{n}}\right)^{s}\right|\right)\left(\frac{\max_{1\leq i\leq n}|\hat{P}(Z_{i})-P(Z_{i})|}{h_{n}^{3}}\right)^{\ell} .
\end{align*}
From the above discussion, we have
\begin{equation*}
    \sup_{p\in[a_{0},b_{0}]}\left(\frac{1}{n}\sum_{i=1}^{n}\frac{1}{h_{n}}\left|\left(K\left(\frac{\hat{P}(Z_{i})-p}{h_{n}}\right)-K\left(\frac{P(Z_{i})-p}{h_{n}}\right)\right)\left(\frac{P(Z_{i})-p}{h_{n}}\right)^{s}\right|\right)=o_{p}(h_{n}).
\end{equation*}
Therefore, it holds from Proposition \ref{kernel-uniform} and Assumption \ref{2}-\ref{covariate} that
\begin{equation*}
    \sup_{p\in[a_{0},b_{0}]}\left(\frac{1}{n}\sum_{i=1}^{n}\frac{1}{h_{n}}\left|K\left(\frac{\hat{P}(Z_{i})-p}{h_{n}}\right)\left(\frac{P(Z_{i})-p}{h_{n}}\right)^{s}\left(\frac{\hat{P}(Z_{i})-P(Z_{i})}{h_{n}^{3}}\right)^{\ell}\right|\right)=o_{p}(1).
\end{equation*}

\end{proof}

\subsection{Proof of Proposition \ref{aux-2}}
\begin{proof}[Proof of Proposition \ref{aux-2}]

For the first two arguments, because we have
\begin{align*}
    \max&\left\{\sup_{p\in[a_{0},b_{0}]}\left| \frac{1}{nh_{n}}\sum_{i=1}^{n}K\left(\frac{\hat{P}(Z_{i})-p}{h_{n}}\right)\left(\frac{\hat{P}(Z_{i})-p}{h_{n}}\right)^{s}X_{i,\ell}\right|,\right. \\
&\left.\sup_{p\in[a_{0},b_{o}]}\left|\frac{1}{nh_{n}}\sum_{i=1}^{n}K\left(\frac{\hat{P}(Z_{i})-p}{h_{n}}\right)\left(\frac{\hat{P}(Z_{i})-p}{h_{n}}\right)^{s}X_{i,\ell}P(Z_{i})\right|\right\} \\
    \leq &\sup_{p\in[a_{0},b_{o}]}\left(\frac{1}{nh_{n}}\sum_{i=1}^{n}K\left(\frac{\hat{P}(Z_{i})-p}{h_{n}}\right)\left|\frac{\hat{P}(Z_{i})-p}{h_{n}}\right|^{s}|X|_{i,\ell}\right),
\end{align*}
 we only need to show 
\begin{equation}
    \sup_{p\in[a_{0},b_{o}]}\left(\frac{1}{nh_{n}}\sum_{i=1}^{n}K\left(\frac{\hat{P}(Z_{i})-p}{h_{n}}\right)\left|\frac{\hat{P}(Z_{i})-p}{h_{n}}\right|^{s}|X|_{i,\ell}\right)=O_{p}(1) \label{aux-2-first-two}
\end{equation}
for any $0\leq s\leq 2$.

A straightforward calculation gives
\begin{align}
&\sup_{p\in[a_{0},b_{0}]}\left|\left(\frac{1}{nh_{n}}\sum_{i=1}^{n}K\left(\frac{\hat{P}(Z_{i})-p}{h_{n}}\right)\left|\frac{\hat{P}(Z_{i})-p}{h_{n}}\right|^s|X_{i,\ell}|\right)-f_{P}(p)E[|X_{i\ell}||P(Z_{i})=p]E[K(u)|u|^s]\right| \notag \\
    \leq & \sup_{p\in[a_{0},b_{0}]}\left|\frac{1}{nh_{n}}\sum_{i=1}^{n}\left(K\left(\frac{\hat{P}(Z_{i})-p}{h_{n}}\right)\left|\frac{\hat{P}(Z_{i})-p}{h_{n}}\right|^s-K\left(\frac{P(Z_{i})-p}{h_{n}}\right)\left|\frac{P(Z_{i})-p}{h_{n}}\right|^s\right)|X_{i,\ell}|\right| \label{aux-2-1}\\
    +& \sup_{p\in[a_{0},b_{0}]}\left|\frac{1}{nh_{n}}\sum_{i=1}^{n}\left(K\left(\frac{P(Z_{i})-p}{h_{n}}\right)\left|\frac{P(Z_{i})-p}{h_{n}}\right|^s|X_{i,\ell}|\right.\right.  \notag \\
        -&\left.\left.E\left[K\left(\frac{P(Z_{i})-p}{h_{n}}\right)\left|\frac{P(Z_{i})-p}{h_{n}}\right|^s|X_{i,\ell}|\right]\right)\right| \label{aux-2-2}\\
    +&\sup_{p\in[a_{0},b_{0}]}\left|E\left[\frac{1}{h_{n}}K\left(\frac{P(Z_{i})-p}{h_{n}}\right)\left|\frac{P(Z_{i})-p}{h_{n}}\right|^s|X_{i,\ell}|\right]-f_{P}(p)E[|X_{i\ell}||P(Z_{i})=p]E[K(u)|u|^s]\right|. \label{aux-2-3}
\end{align}

For \eqref{aux-2-1}, it holds from Assumptions \ref{2}-\ref{covariate} that $K(u)|u|^s$ is a Lipschitz function. Hence, we have
\begin{align*}
    &\sup_{p\in[a_{0},b_{0}]}\left|\frac{1}{nh_{n}}\sum_{i=1}^{n}\left(K\left(\frac{\hat{P}(Z_{i})-p}{h_{n}}\right)\left|\frac{\hat{P}(Z_{i})-p}{h_{n}}\right|^s-K\left(\frac{P(Z_{i})-p}{h_{n}}\right)\left|\frac{P(Z_{i})-p}{h_{n}}\right|^s\right)|X_{i,\ell}|\right| \\
    \lesssim  &\sup_{p\in[a_{0},b_{0}]}\left|\left(\frac{1}{nh_{n}}\sum_{i=1}^{n}|X_{i,\ell}|\left(\frac{\hat{P}(Z_{i})-P(Z_{i})}{h_{n}}\right)\right)\right| \\
    \leq &\frac{\|\hat{P}(Z_{i})-P(Z_{i})\|_{\infty}}{h^{2}_{n}}\frac{1}{n}\sum_{i=1}^{n}|X_{i,\ell}|.
\end{align*}
Therefore, a rate of convergence for \eqref{aux-2-1} is $o_{p}(h_{n})$. We can establish the uniform CLT result for \eqref{aux-2-2}, as in the proof of Lemma \ref{unifrom_varepsilon_kernel}. Therefore, we have
\begin{align*}
    &\sup_{p\in[a_{0},b_{0}]}\left|\frac{1}{nh_{n}}\sum_{i=1}^{n}\left(K\left(\frac{P(Z_{i})-p}{h_{n}}\right)\left|\frac{P(Z_{i})-p}{h_{n}}\right||X_{i,\ell}|-E\left[K\left(\frac{P(Z_{i})-p}{h_{n}}\right)\left|\frac{P(Z_{i})-p}{h_{n}}\right||X_{i,\ell}|\right]\right)\right| \\
        =&O_{p}\left(\sqrt{\frac{\log n}{nh_{n}}}\right).
\end{align*}
For \eqref{aux-2-3}, by traditional arguments for the expected value of the kernel function, we achieve 
\begin{align*}
&\sup_{p\in[a_{0},b_{0}]}\left|E\left[\frac{1}{h_{n}}K\left(\frac{P(Z_{i})-p}{h_{n}}\right)\left|\frac{P(Z_{i})-p}{h_{n}}\right|^s|X_{i,\ell}|\right]-f_{P}(p)E[|X_{i\ell}||P(Z_{i})=p]E[K(u)|u|^s]\right|\\
      =&O_{p}(1).
\end{align*}
Combining all the above results, the required result \eqref{aux-2-first-two} holds.

For the last statement, we have
\begin{align*}
    &\sup_{p\in[a_{0},b_{0}]}\left|\frac{1}{nh_{n}}\sum_{i=1}^{n}K\left(\frac{\hat{P}(Z_{i})-p}{h_{n}}\right)\left(\frac{\hat{P}(Z_{i})-p}{h_{n}}\right)^sX_{i,\ell}\left(\frac{P(Z_{i})-\hat{P}(Z_{i})}{h_{n}^{3}}\right)^{k}\right| \\
    \leq &\sup_{p\in[a_{0},b_{0}]}\left|\frac{1}{nh_{n}}\sum_{i=1}^{n}\left(K\left(\frac{\hat{P}(Z_{i})-p}{h_{n}}\right)\left(\frac{\hat{P}(Z_{i})-p}{h_{n}}\right)^s-K\left(\frac{P(Z_{i})-p}{h_{n}}\right)\left(\frac{P(Z_{i})-p}{h_{n}}\right)^s\right)\right. \\
    \times &\left.X_{i,\ell}\left(\frac{P(Z_{i})-\hat{P}(Z_{i})}{h_{n}^{3}}\right)^{k}\right| \\
    + &\sup_{p\in[a_{0},b_{0}]}\left|\frac{1}{nh_{n}}\sum_{i=1}^{n}K\left(\frac{P(Z_{i})-p}{h_{n}}\right)\left(\frac{P(Z_{i})-p}{h_{n}}\right)^{s}X_{i,\ell}\left(\frac{P(Z_{i})-\hat{P}(Z_{i})}{h^{3}_{n}}\right)^{k}  \right|.
\end{align*}
For the first term, due to the fact that $K(u)|u|^s$ is a Lipschitz function, we have
\begin{align*}
    &\sup_{p\in[a_{0},b_{0}]}\left|\frac{1}{nh_{n}}\sum_{i=1}^{n}\left(K\left(\frac{\hat{P}(Z_{i})-p}{h_{n}}\right)\left(\frac{\hat{P}(Z_{i})-p}{h_{n}}\right)^s-K\left(\frac{P(Z_{i})-p}{h_{n}}\right)\left(\frac{P(Z_{i})-p}{h_{n}}\right)^s\right)\right. \\
    \times &\left.X_{i,\ell}\left(\frac{P(Z_{i})-\hat{P}(Z_{i})}{h_{n}^{3}}\right)^{k}\right| \\
     \lesssim &\frac{\|P(Z_{i})-\hat{P}(Z_{i})\|^{k+1}_{\infty}}{h^{3k+2}_{n}}\frac{1}{n}\sum_{i=1}^{n}|X_{i,\ell}|.
\end{align*}
Because $E[|X_{i,\ell}|]$ is finite, by the law of large numbers, the first term converges to zero in probability. For the second term, we obtain
\begin{align*}
    &\sup_{p\in[a_{0},b_{0}]}\left|\frac{1}{nh_{n}}\sum_{i=1}^{n}K\left(\frac{P(Z_{i})-p}{h_{n}}\right)\left(\frac{P(Z_{i})-p}{h_{n}}\right)^{s}X_{i,\ell}\left(\frac{P(Z_{i})-\hat{P}(Z_{i})}{h^{3}_{n}}\right)^{k}  \right| \\
    \leq &\left(\frac{\|P(Z_{i})-\hat{P}(Z_{i})\|_{\infty}}{h^{3}_{n}}\right)^{k}\sup_{p\in[a_{0},b_{0}]}\left(\frac{1}{nh_{n}}\sum_{i=1}^{n}|X_{i,\ell}|K\left(\frac{P(Z_{i})-p}{h_{n}}\right)\left|\frac{P(Z_{i})-p}{h_{n}}\right|^s\right).
\end{align*}
It follows from the previous discussion and Assumptions \\ref{2}-\ref{covariate} that the second term also converges to zero in probability. Therefore, the required result holds.

\end{proof}

\section{Additional Monte Carlo Simulations}

\subsection{Computation Time}
\label{sec_monte_time}

Table \ref{monte_result_time} compares the computation time required for constructing confidence bands across the three designs. While the statistical performance of our method and the bootstrap procedure is comparable, our method has a substantial computational advantage. The bootstrap method requires approximately 89 seconds per Monte Carlo iteration in all designs, whereas our procedure requires only 8-10 seconds on average. Thus, our method is roughly ten times faster than the bootstrap approach. Moreover, the computation time of our method is more stable across replications, as reflected in the smaller standard deviations of computation time.

\begin{table}[thb]
\centering
\caption{Computation time comparison}
\label{monte_result_time}
\begin{tabular}{cccc}
\hline
 & Design $\Sigma_1$ & Design $\Sigma_2$ & Design $\Sigma_3$ \\
\hline
Mean time (Bootstrap) & 89.99 & 89.87 & 89.68 \\
SD time (Bootstrap) & 2.26 & 2.23 & 2.31 \\
Mean time (Ours) & 8.38 & 8.57 & 9.75 \\
SD time (Ours) & 0.81 & 0.87 & 1.16 \\
Relative speedup & 10.74 & 10.49 & 9.20 \\
\hline
\end{tabular}

\begin{minipage}{\linewidth}
\smallskip
\footnotesize
Note: The reported computation time for our method includes the elapsed time spent on estimating the MTE function and calculating the analytical critical value. The reported computation time for the bootstrap procedure corresponds to the elapsed time spent on the multiplier-bootstrap resampling step, which repeatedly recomputes the local-polynomial estimator, without accounting for the time of calculating the MTE function itself. Thus, the comparison highlights the additional computational burden induced by bootstrap resampling.

\end{minipage}
\end{table}

This computational advantage is particularly important in applications involving large samples, repeated simulations, or empirical procedures requiring repeated construction of confidence bands. While bootstrap confidence bands can also achieve valid uniform inference, they require repeated nonparametric estimation across bootstrap samples, resulting in substantially higher computational costs. In contrast, our approach avoids iterative resampling procedures and therefore provides a computationally efficient alternative for uniform inference on the MTE function.

\subsection{Bandwidth Selection}
\label{sec:sensitivity_bw}

In this subsection, we investigate the sensitivity of our procedure to the choice of bandwidth. Throughout the above simulation, the bandwidth is selected according to the rate $h_n = Cn^{-2/13}$. Since finite-sample performance may depend on the choice of bandwidth, we examine the robustness of our procedure to alternative bandwidth rates.

Specifically, we consider the bandwidth sequence
$h_n = C n^{-\eta},
$ where $C$ is the constant term used in the previous simulation design. The exponent $\eta$ varies across $
\eta \in
\{
\frac{1}{6},
\frac{1}{6.25},
\frac{1}{6.5},
\frac{1}{6.75},
\frac{1}{7}
\}
$.
Note that Assumption \ref{8} requires $1/6 < \eta < 1/7$.
This exercise allows us to assess whether the finite-sample performance of the proposed confidence band is sensitive to moderate deviations.

\begin{table}[thb]
\centering
\caption{Bandwidth selection: Results of Monte Carlo simulations}
\label{monte_result_bandwidth}
\begin{tabular}{ccccc}
\hline
& CP (90\%) & Mean Crit. Val. (90\%) & CP (95\%) & Mean Crit. Val. (95\%) \\
\hline

\multicolumn{5}{c}{Design B-1: $\eta=1/6$. The mean bandwidth is 0.049} \\
Pointwise & 0.201 & 1.65 & 0.434 & 1.96 \\
Gumbel & 0.985 & 3.50 & 0.998 & 4.01 \\
Bootstrap & 0.913 & 2.89 & 0.958 & 3.15 \\
Ours & 0.903 & 2.82 & 0.950 & 3.06 \\
\hline

\multicolumn{5}{c}{Design B-2: $\eta=1/6.25$. The mean bandwidth is 0.052} \\
Pointwise & 0.214 & 1.65 & 0.445 & 1.96 \\
Gumbel & 0.987 & 3.52 & 0.996 & 4.04 \\
Bootstrap & 0.918 & 2.89 & 0.959 & 3.15 \\
Ours & 0.900 & 2.80 & 0.950 & 3.05 \\
\hline

\multicolumn{5}{c}{Design B-3: $\eta=1/6.5$. The mean bandwidth is 0.055} \\
Pointwise & 0.235 & 1.65 & 0.450 & 1.96 \\
Gumbel & 0.981 & 3.54 & 0.997 & 4.08 \\
Bootstrap & 0.916 & 2.89 & 0.962 & 3.16 \\
Ours & 0.895 & 2.78 & 0.943 & 3.03 \\
\hline

\multicolumn{5}{c}{Design B-4: $\eta=1/6.75$. The mean bandwidth is 0.057} \\
Pointwise & 0.224 & 1.65 & 0.462 & 1.96 \\
Gumbel & 0.986 & 3.57 & 0.998 & 4.11 \\
Bootstrap & 0.902 & 2.90 & 0.953 & 3.17 \\
Ours & 0.874 & 2.77 & 0.930 & 3.01 \\
\hline

\multicolumn{5}{c}{Design B-5: $\eta=1/7$. The mean bandwidth is 0.060} \\
Pointwise & 0.234 & 1.65 & 0.449 & 1.96 \\
Gumbel & 0.987 & 3.59 & 0.998 & 4.16 \\
Bootstrap & 0.908 & 2.90 & 0.962 & 3.19 \\
Ours & 0.875 & 2.75 & 0.936 & 3.00 \\
\hline

\end{tabular}

\begin{minipage}{\linewidth}
\smallskip
\footnotesize
Note: This table shows the empirical coverage probabilities and critical values for the pointwise, Gumbel, bootstrap, and proposed confidence bands under alternative bandwidth rates. The bandwidth is given by $h_n=Cn^{-\eta}$. The reported critical values and mean bandwidths are rounded to two and three decimal places, respectively.
\end{minipage}
\end{table}

\begin{table}[thb]
\centering
\caption{Bandwidth selection: Computation time comparison}
\label{monte_result_time_bandwidth}
\begin{tabular}{cccccc}
\hline
& Design B-1 & Design B-2 & Design B-3 & Design B-4 & Design B-5 \\
\hline
Mean time (Bootstrap) & 90.34 & 90.04 & 89.67 & 89.35 & 89.29 \\
SD time (Bootstrap) & 2.19 & 2.27 & 2.38 & 2.37 & 2.41 \\
Mean time (Ours) & 8.23 & 8.22 & 8.29 & 8.31 & 8.32 \\
SD time (Ours) & 0.78 & 0.81 & 0.83 & 0.79 & 0.81 \\
Relative speedup & 10.98 & 10.96 & 10.81 & 10.75 & 10.74 \\
\hline
\end{tabular}

\begin{minipage}{\linewidth}
\smallskip
\footnotesize
Note: This table reports the computation time, in seconds, for a single iteration of the Monte Carlo experiment. All values are rounded to two decimal places. See the note for Table \ref{monte_result_time} for more details.
\end{minipage}
\end{table}

Table \ref{monte_result_bandwidth} reports the results under alternative bandwidth rates. Overall, the findings are remarkably stable across all bandwidth choices. Our method continues to attain empirical coverage probabilities close to the nominal levels, while the pointwise confidence band substantially undercovers and the Gumbel-based confidence band remains conservative throughout.

As the bandwidth increases, the empirical coverage probabilities of our method remain reasonably close to the nominal level across all specifications, while they decline modestly. For example, the 95\% coverage probability decreases from $0.950$ in Design B-1 to $0.936$ in Design B-5. The bootstrap method also exhibits very stable performance, with 95\% coverage probabilities ranging between $0.953$ and $0.962$. These findings suggest that both procedures are fairly robust to moderate changes in the bandwidth rate.

The critical values exhibit a systematic but quantitatively small dependence on the bandwidth choice. As the bandwidth increases, the critical values associated with our method decrease slightly, from $3.06$ in Design B-1 to $3.00$ in Design B-5 for the 95\% confidence band. By contrast, the corresponding Gumbel critical values increase from $4.01$ to $4.16$. Despite these changes, the qualitative conclusions remain unchanged across all bandwidth specifications.

Table \ref{monte_result_time_bandwidth} reports the computation times. The computational burden is nearly identical across all bandwidth choices. The bootstrap procedure requires approximately 90 seconds per Monte Carlo iteration, whereas our method requires approximately 8.3 seconds. Consequently, our procedure remains roughly ten times faster than the bootstrap approach while delivering comparable statistical performance.

Overall, the results indicate that the finite-sample performance of the proposed confidence band is not particularly sensitive to moderate perturbations of the bandwidth rate around the theoretically motivated choice.

\subsection{Heteroskedasticity}
\label{sec:sensitivity_hetero}

In this subsection, we investigate the robustness of our procedure to deviations from homoskedasticity of the unobservables. To assess the impact of heteroskedasticity, we modify the structure of the unobservables while preserving the analytical tractability of the MTE. Specifically, we consider the following decomposition:
\begin{equation*}
U_1 = a_1 U_D + \sigma_1(X)\varepsilon_1, \qquad
U_0 = a_0 U_D + \sigma_0(X)\varepsilon_0,
\end{equation*}
where $U_D, \varepsilon_1, \varepsilon_0 \sim \mathcal{N}(0,1)$ are mutually independent. The functions $\sigma_1(X)$ and $\sigma_0(X)$ introduce heteroskedasticity through dependence on the covariate $X$. In our implementation, we set
\begin{equation*}
\sigma_d(X) = \sigma_d \exp\left(\frac{\xi_d X}{2}\right), \quad d \in \{0,1\},
\end{equation*}
where $\xi_d$ controls the degree of heteroskedasticity.

 \begin{table}[thb]
        \centering
    \caption{Heteroskedasticity: Results of Monte Carlo simulations}
           \label{monte_result_hetero}
        \begin{tabular}{ccccc}
        \hline
             & CP (90\%) & Mean Crit. Val. (90\%) & CP (95\%) &  Mean Crit. Val. (95\%) \\
             \hline 
    \multicolumn{5}{c}{Design H-1: $(\xi_0,\xi_1) = (0, 0)$. The mean bandwidth is 0.054} \\
            Pointwise & 0.228 & 1.65 & 0.457  & 1.96 \\
            Gumbel &  0.991  & 3.54 & 0.998 & 4.07 \\
            Bootstrap & 0.918  & 2.88 & 0.959 & 3.15 \\ 
            Ours  & 0.900 & 2.78 & 0.951 & 3.03 \\ 
            \hline
    \multicolumn{5}{c}{Design H-2: $(\xi_0,\xi_1) = (0, 0.5)$. The mean bandwidth is 0.055} \\
            Pointwise &  0.220 & 1.65  & 0.427 & 1.96 \\
            Gumbel & 0.988  &  3.54  & 0.999 & 4.08 \\
            Bootstrap & 0.918 &   2.89 &  0.956 & 3.16 \\ 
            Ours  & 0.892 &  2.78  & 0.948& 3.03  \\ 
            \hline
    \multicolumn{5}{c}{Design H-3: $(\xi_0,\xi_1) = (0.5, 0)$.  The mean bandwidth is 0.054} \\
            Pointwise & 0.227 & 1.65  & 0.451 & 1.96 \\
            Gumbel &  0.983 &  3.53 & 0.995 & 4.06  \\
            Bootstrap & 0.903  & 2.88  & 0.951  & 3.15 \\ 
            Ours  & 0.886 & 2.79  & 0.942 & 3.03  \\ 
            \hline

      \multicolumn{5}{c}{Design H-4: $(\xi_0,\xi_1) = (0.5, 0.5)$. The mean bandwidth is 0.054} \\
            Pointwise &  0.219 & 1.65  & 0.440 & 1.96 \\
            Gumbel & 0.983  &  3.54 &  0.997 & 4.07 \\
            Bootstrap & 0.902 &  2.89  & 0.948 & 3.15 \\ 
            Ours  & 0.884 & 2.78
            & 0.930 &  3.03 \\ 
            \hline
        \end{tabular}

\begin{minipage}{\linewidth}

    Note: This table shows the empirical coverage and critical values for pointwise, Gumbel, bootstrap, and our method. The reported critical values and mean bandwidths are rounded to two and three decimal places, respectively.
\end{minipage}

\end{table}

\begin{table}[thb]
\centering
\caption{Heteroskedasticity: Computation time comparison}
\label{monte_result_time_hetero}
\begin{tabular}{ccccc}
\hline
 & Design H-1 & Design H-2 & Design H-3 & Design H-4 \\
\hline
Mean time (Bootstrap) & 89.81 & 90.18 & 90.20 & 90.17 \\
SD time (Bootstrap) & 2.13 & 2.23 & 2.13 &2.15 \\
Mean time (Ours) & 8.44 & 8.43 & 8.47 & 8.42 \\
SD time (Ours) & 0.82 & 0.86 & 0.84 & 0.85 \\
Relative speedup & 10.64 & 10.70 & 10.65 & 10.71 \\
\hline
\end{tabular}

\begin{minipage}{\linewidth}
\smallskip
\footnotesize
Note: This table reports the computation time in seconds for a single iteration of the Monte Carlo experiment.  All values are rounded to two decimal places. See the note for Table \ref{monte_result_time} for more details.
\end{minipage}
\end{table}

This specification implies that
\begin{equation*}
E[U_d^2 \mid X] = a_d^2 + \sigma_d^2(X),
\end{equation*}
which depends on $X$. At the same time, the selection component of the MTE remains unchanged:
\begin{equation*}
E[U_1 - U_0 \mid V = p, X] = (a_1 - a_0)\Phi^{-1}(p),
\end{equation*}
since the idiosyncratic shocks $\varepsilon_1$ and $\varepsilon_0$ are independent of $U_D$. Consequently, the MTE retains the closed-form expression
\begin{equation*}
MTE(p,x_1) = 0.2+0.3 x_1+ (a_1 - a_0)\Phi^{-1}(p),
\end{equation*}
which is identical to the baseline specification with $\Sigma_1$ when $a_1-a_0=0.3$. In summary, this design allows us to isolate the effect of heteroskedasticity on estimation and inference, without altering the true MTE.

In the heteroskedastic design, we vary the parameter $\xi_d$ to control the strength of heteroskedasticity while keeping the resulting MTE equal to that in the baseline specification with $\Sigma=\Sigma_1$. Specifically, we consider four configurations of $(\xi_0,\xi_1)$ given by $(0,0)$, $(0.5,0)$, $(0,0.5)$, and $(0.5,0.5)$, corresponding to the homoskedastic benchmark, heteroskedasticity only in the untreated outcome, heteroskedasticity only in the treated outcome, and heteroskedasticity in both potential outcomes, respectively.

Table \ref{monte_result_hetero} reports the results under heteroskedasticity. Overall, the results are very similar to those reported in Table \ref{monte_result_1} under the baseline specification with $\Sigma=\Sigma_1$. Our method continues to attain empirical coverage probabilities close to the nominal levels, while the pointwise confidence band substantially undercovers, and the Gumbel-based confidence band remains conservative across all designs. 

As the degree of heteroskedasticity increases, the empirical coverage probabilities for both our method and the bootstrap method decline slightly. In particular, the coverage probabilities under Designs H-3 and H-4 are marginally lower than those under Designs H-1 and H-2. Nevertheless, the overall differences are relatively small, suggesting that the proposed procedure is reasonably robust to heteroskedasticity.

Table \ref{monte_result_time_hetero} compares the computation times under the heteroskedastic designs. The computational patterns are nearly identical across all four cases. The bootstrap procedure takes approximately 90 seconds per Monte Carlo iteration, whereas our method takes approximately 8.5 seconds. Thus, our procedure remains roughly ten times faster than the bootstrap approach while achieving comparable statistical performance.

\section{Additional empirical results}
\label{sec:add_emp}

We assess the extent to which the propensity score retains meaningful variation after conditioning on observed covariates. Because the covariate vector is high-dimensional, it is not feasible to evaluate overlap at exact covariate values. Instead, we conduct a diagnostic analysis by restricting attention to communities whose covariate values fall within the central region of their respective distributions. Specifically, we consider the distance to the nearest electricity substation, baseline household density, distance to the nearest road, distance to the nearest town, and the baseline poverty rate. These variables are particularly relevant because they capture infrastructure access, remoteness, and local economic conditions, and are therefore likely to be correlated with the land-gradient instrument. Examining the support of the estimated propensity score within these relatively homogeneous subsets provides a practical assessment of whether the instrument continues to generate sufficient variation in the propensity score after conditioning on observed covariates.

Table \ref{tab:overlap_diagnostic} reports the results of this diagnostic exercise. For each covariate, we restrict attention to communities whose covariate values lie between the 37.5th and 62.5th percentiles of the corresponding distribution. The columns \emph{Overlap L} and \emph{Overlap U} report the lower and upper bounds of the central overlap region of the estimated propensity score distributions for the treated and untreated communities. Specifically, the lower bound is defined as the larger of the two 2.5th percentiles, while the upper bound is defined as the smaller of the two 97.5th percentiles. The column \emph{Overlap Length} reports the width of this common support region.

The results indicate substantial overlap across all covariates considered. Even after conditioning on relatively homogeneous subsets of communities, the overlap lengths range from 0.439 to 0.534. These findings suggest that the land-gradient instrument continues to generate meaningful variation in the propensity score after conditioning on observed covariates, supporting the practical relevance of the overlap condition required for MTE estimation.

\begin{table}[thb]
\centering
\caption{Diagnostic analysis of propensity-score overlap}
\label{tab:overlap_diagnostic}
\begin{tabular}{lcccc}
\hline
Covariate & $N$ & Overlap L & Overlap U & Overlap Length \\
\hline
Distance to substation & 454 & 0.061 & 0.530 & 0.469  \\
Household density & 454 & 0.045 & 0.506 & 0.460 \\
Distance to road & 454 & 0.028 & 0.528 & 0.500 \\
Distance to town & 454 & 0.040 & 0.478 & 0.439 \\
Poverty rate & 454 & 0.034 & 0.568 & 0.534 \\
\hline
\end{tabular}

\begin{minipage}{\linewidth}
\smallskip
\footnotesize
Note: For each covariate, we restrict attention to communities whose covariate values lie between the 37.5th and 62.5th percentiles of the corresponding distribution. Overlap L and Overlap U denote the lower and upper bounds of the central overlap region of the estimated propensity score distributions for treated and untreated communities. Overlap Length is defined as the difference between Overlap U and Overlap L.
\end{minipage}
\end{table}

\printbibliography[heading=subbibliography]

\end{refsection}

\end{document}